\documentclass{article}

\usepackage[english]{babel}
\usepackage{enumerate}
\usepackage{amsthm} 

\usepackage[a4paper,top=3cm,bottom=3cm,left=3cm,right=3cm,marginparwidth=1.75cm]{geometry}

\usepackage[numbers]{natbib}
\usepackage{pifont}

\usepackage{amsmath}

\usepackage{amsfonts}
\usepackage{bbm}
\usepackage{graphicx}
\usepackage[colorlinks=true, allcolors=blue]{hyperref}
\usepackage{mathtools}
\usepackage{breqn}
\usepackage{float}
\usepackage{xcolor}
\setlength{\parindent}{0em} 
\setlength{\parskip}{1em} 

\newtheorem{theorem}{Theorem}[section]

\newtheorem{proposition}[theorem]{Proposition}

\newtheorem{example}[theorem]{Example}

\numberwithin{equation}{section}

\def\E{{\mathbb{E}}}
\newcommand{\R}{{\mathbb R}}
\newcommand{\Mid}{{\ \Big|\ }}

\newcommand{\cmark}{\ding{51}}%
\newcommand{\xmark}{\ding{55}}%
\definecolor{blue0}{RGB}{0,77,153} 
\definecolor{red0}{RGB}{179,0,77} 
\definecolor{green0}{RGB}{134,219,76} 
\definecolor{gray0}{RGB}{84,97,110}

\usepackage{mathtools}
\usepackage{authblk}
\mathtoolsset{showonlyrefs}
\typeout{get arXiv to do 4 passes: Label(s) may have changed. Rerun}

\title{Joint SPX \& VIX calibration with Gaussian polynomial volatility models:
deep pricing with quantization hints}

\author[1]{Eduardo Abi Jaber\thanks{eduardo.abi-jaber@polytechnique.edu. The first author is grateful for the financial support from the Chaires FiME-FDD, Financial Risks, Deep Finance \& Statistics at Ecole Polytechnique. We would like to thank Stefano De Marco for fruitful discussions.}}
\author[2]{Camille Illand}
\author[3]{Shaun (Xiaoyuan) Li\thanks{{shaunlinz02@gmail.com.The third author acknowledges the financial support from the Amadeus partnership, Partenariat Hubert Curien (PHC), research project "Polynomial Volterra processes and their applications in finance", {and from AXA Investment Managers.}.}}}
\affil[1]{Ecole Polytechnique, CMAP}
\affil[2,3]{AXA Investment Managers}
\affil[3]{Université Paris 1 Panthéon-Sorbonne, CES}

\begin{document}

\maketitle

\begin{abstract}
We consider the joint SPX \& VIX calibration within a general class of \textit{Gaussian polynomial volatility models} in which the volatility of the SPX is assumed to be a polynomial function of a Gaussian Volterra process defined as a stochastic convolution between a kernel and a Brownian motion. By performing joint calibration to daily SPX \& VIX implied volatility surface data between 2011 and 2022, we compare the empirical performance of different kernels and their associated Markovian and non-Markovian models, such as rough and non-rough path-dependent volatility models. To ensure an efficient calibration and fair comparison between the models, we develop a generic unified method in our class of models for fast and accurate pricing of SPX \& VIX derivatives based on functional quantization and Neural Networks. For the first time, we identify a \textit{conventional one-factor Markovian continuous stochastic volatility model} that can achieve remarkable fits of the implied volatility surfaces of the SPX \& VIX together with the term structure of VIX Futures. {What is even more remarkable is that our \textit{conventional one-factor Markovian continuous stochastic volatility model} outperforms, in all market conditions, its rough and
 non-rough path-dependent counterparts with the same number of parameters.}
\end{abstract}

\begin{description}
\item[JEL Classification:] G13, C63, G10.  
\item[Keywords:] SPX \& VIX modeling, Stochastic volatility, Gaussian Volterra processes, Quantization, Neural Networks.



\end{description}

\bigskip

\newpage 
\section{Introduction}

  Launched in 1993 by the CBOE, the VIX has become one of the most widely followed volatility indexes. It represents an estimation of the S\&P 500 index (SPX) expected volatility over a one-month period.  More precisely, the VIX is calculated by aggregating weighted prices of SPX puts and calls over a wide range of strikes and maturities \cite{cboe}. By construction, the VIX expresses an interpolation between several points of the SPX implied volatility term structure. Thus, the modeling and pricing of VIX  options for a given maturity $T$ naturally requires some consistency with SPX options maturing up to one month ahead of $T$.   Furthermore, computing the implied volatility of VIX options using Black’s formula requires VIX Futures that also need to be priced consistently.

By joint SPX \& VIX calibration problem, we mean the calibration of a model across several maturities to European call and put options on SPX \& VIX together with VIX Futures.  Such joint calibration turns out to be quite challenging for several reasons: a multitude of instruments to be calibrated (SPX \& VIX call/put options, VIX Futures) across several maturities (to stay consistent with the construction of the VIX), characterized by a steep upward-sloping VIX volatility smile, together with the important at-the-money (ATM) SPX skew that becomes more pronounced for smaller maturities.

{The very first, although unsuccessful attempt in the literature at joint calibration was made in 2008 in \cite{gatheral2008consistent} using a double constant elasticity of variance (CEV) model. Since then, numerous models with varying degrees of success have been proposed. Early approaches often involved adding jumps to the dynamics of the underlying asset and/or its volatility process. For instance, \cite{baldeaux2014consistent} incorporated Poisson jumps into the $3/2$ model, while \cite{cont2013consistent} modeled the joint dynamics of forward variance and the underlying asset with additional Lévy jumps. Similarly, \cite{kokholm2015joint} extended the classical Heston model by adding jumps, and \cite{papanicolaou2014regime} proposed a model with jumps and regime-switching.

Following the statistical study \cite{jaisson2016rough}, rough volatility models have also been proposed for the joint calibration problem. Examples include the quadratic rough Heston model \cite{gatheral2020quadratic,rosenbaum2021deep} and the rough Heston model with added Hawkes jumps \cite{bondi2022rough}. In parallel, other types of parametric models have emerged, such as the path-dependent volatility model and its four-factor Markovian version \cite{guyon2022volatility},  and the multi-factor model proposed by \cite{romer2022empirical}, which describes the instantaneous volatility of the underlying using a hyperbolic function of two Ornstein-Uhlenbeck processes.

Non-parametric (or over-parametric) methods have also been proposed in recent years. \cite{guo2022joint}, \cite{guyon2020joint}, and \cite{guyon2022dispersion} use optimal transport techniques to model the joint dynamics of SPX \& VIX between two maturities, $T_1$ and $T_2$. Meanwhile, sophisticated functional approximation techniques have also found application in joint calibration, such as the Neural Network approach used in \cite{guyonneural} to learn the coefficients of the SDE of the instantaneous volatility process, and the (time-extended) signature-based methods of \cite{cuchiero2023joint}: A (truncated) linear functional designed to approximate the dynamics of instantaneous volatility driven by three Ornstein-Uhlenbeck processes.}

Although different in their mathematical nature, these aforementioned models share in common the fact that they allow for 1) large price movements of the SPX on very short time scales with some forms of spikes in the ‘instantaneous’ volatility process due to 
a large ‘vol-of-vol’, and 2) fast mean reversions towards relatively low volatility regimes. We believe these are the two crucial ingredients for the joint calibration problem. 

{Despite the significant interest in joint calibration and the abundance of proposed models, there is a noticeable lack of extensive empirical studies assessing their effectiveness in jointly calibrating the SPX \& VIX volatility surface. One exception is the rough quadratic Heston model studied in \cite{rosenbaum2021deep}, though it did not include VIX Futures in its empirical results. In fact, most models discussed above have only demonstrated partial fits, typically being calibrated over a select few days to justify their effectiveness.
}

{{A common theme in the literature mentioned above is that} \textit{conventional one-factor continuous Markovian stochastic volatility models} are not able to achieve a decent joint calibration.  Our main motivations can be stated as follows:
{\vspace{-0.5cm}}
\begin{center}
{\vspace{0.2cm}}
\textit{Can joint calibration be achieved \textbf{without} appealing to   multiple factors, jumps, roughness or path-dependency?} \\
{\vspace{0.2cm}}
\textit{Is joint calibration \textbf{possible} with conventional one-factor continuous Markovian models?}\\
\end{center}

In a nutshell, we show in this paper that the answer to both questions is {a resounding}: \textit{{Yes.}} {While several empirical studies provide evidence of jumps,  path-dependency, and multiple timescales in the underlying asset and/or its volatility process (see for example, \cite{barndorff2006econometrics, ding1993long, guyon2022volatility}), our approach focuses on assessing the joint calibration capabilities of more streamlined models. This can provide insights into how much complexity is truly necessary for joint calibration versus what can be achieved with reduced assumptions. In practice, simpler models may be preferable due to their computational efficiency. Our work is not intended to dismiss these approaches but to complement the existing literature.}

By performing joint calibration on daily SPX \& VIX implied volatility surface data between 2011 and 2022 using a large class of models, we identify for the first time a \textit{conventional one-factor Markovian continuous stochastic volatility model} {that can achieve remarkable fits for a wide range of maturity slices $[T_s,T_e]$ for VIX implied volatility surface and of maturity slices $[T_s, T_e + 1\text{ month}]$ for SPX implied volatility surface, together with the term structure of VIX Futures as shown on Figure \ref{fig:Intro}}. What is even more remarkable is that our \textit{conventional one-factor Markovian continuous stochastic volatility model} outperforms its rough and non-rough path-dependent counterparts with the same number of parameters: 6 effective parameters that govern the dynamics of the model in addition to the usual input curve that allows to match certain term structures of volatility.

  \begin{figure}[H]
    \centering
    \includegraphics[width=0.5\textwidth]{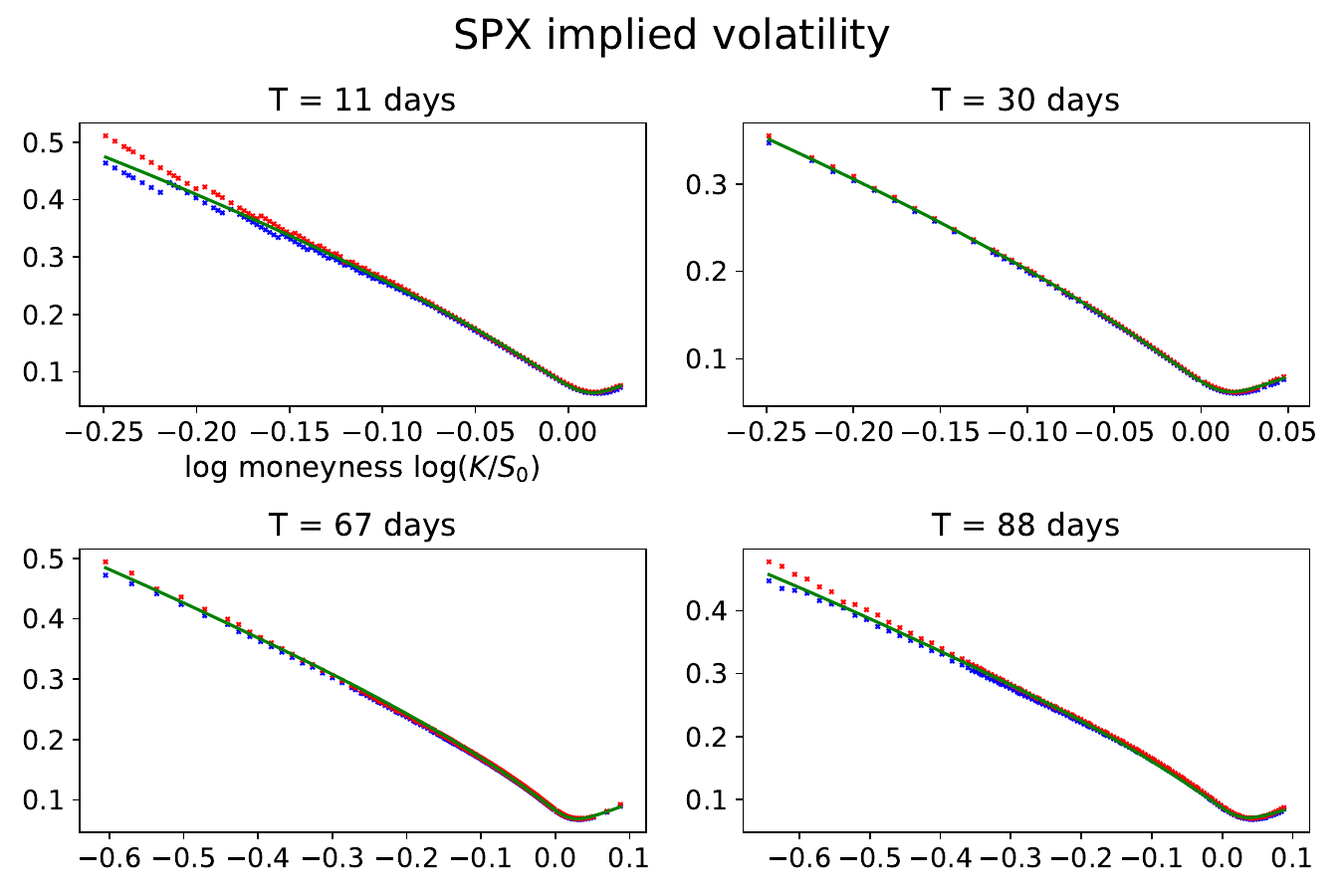}%
    \includegraphics[width=0.5\textwidth]{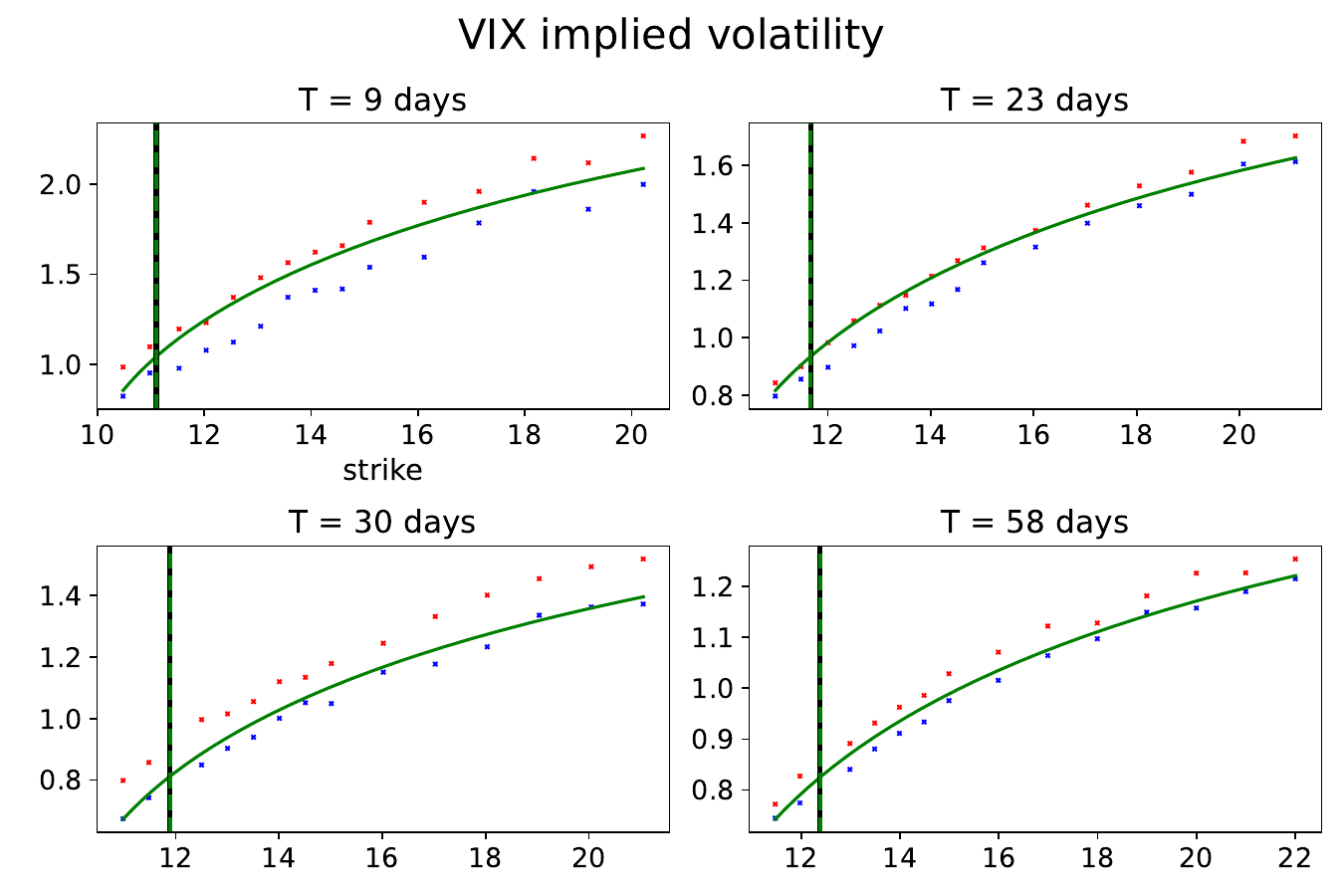}
    \caption{SPX \& VIX smiles (bid/ask in blue/red) and VIX Futures (vertical black lines) jointly calibrated with our conventional continuous stochastic volatility model (green lines) {produced by the exponential kernel $K^{exp}$}, 23 October 2017.}
    \label{fig:Intro}
  \end{figure}
  
  More precisely, our methodology and contributions are summarized as follows:
  
  \paragraph{Gaussian polynomial volatility models.} First, we introduce in Section~\ref{gpvm} a general class of \textit{Gaussian polynomial volatility} models in which the SPX spot price takes the form 
$$    \frac{dS_t}{S_t} = \sigma_t \left(\rho dW_t + \sqrt{1-\rho^2} dW_t^{\perp}\right),$$
where $(W,W^{\perp})$ is two-dimensional  Brownian motion. The SPX spot price $S$ is correlated with the volatility process  $\sigma$ which is, up to a normalizing {deterministic term}, defined as a polynomial function $p(X)$ of a Gaussian Volterra process $X$ in the form
$$ X_t = \int_0^t K(t-s)dW_s,$$
for some locally square-integrable kernel $K$.  The choice of the kernel introduces a good deal of flexibility in the modeling of the volatility process, such as rough volatility \cite{abi2022characteristic, abi2019multifactor, abi2019affine, bayer2016pricing, bennedsen2016decoupling, el2019characteristic,  gatheral2018volatility, 
 gatheral2020quadratic} for singular fractional kernels of the form $K(t)\sim t^{H-1/2}$ with $0<H{\leq}1/2$, or the log-modulated kernel that extends the fractional kernel for the case $H=0$, see \cite{bayer2021log}; path-dependent models with non-singular kernel such as the shifted fractional kernel  $K(t)\sim (t+\varepsilon)^{H-1/2}$; exponential kernels $K(t)\sim e^{-\lambda t}$ for which $X$ is a (Markovian) Ornstein-Uhlenbeck process or weighted sums of exponentials \cite{abi2019lifting,abi2019markovian,cuchiero2020generalized,harms2019affine}, refer to Table~\ref{tablekernels} below.    We will compare the performance of these different kernels on the joint calibration problem. Although it is difficult to decouple the impact of the different input parameters of the model, it turns out, that the choice of $K$ has a major impact on the ATM-skew of the implied volatility of the SPX and the level of the implied volatility of the VIX.  While the choice of the polynomial function $p$ has a prominent impact on the shape of the VIX smile. Taking $p$ a polynomial of order $5$ (and higher) allows us to reproduce the upward slope of the VIX smile.  

\paragraph{Generic, fast, and accurate pricing via quantization and Neural Networks.}
Second, to ensure a fair comparison between the calibrated models with different kernels {across 10 years of daily joint implied volatility surfaces}, we develop a generic unified method that applies to any \textit{Gaussian polynomial volatility model} for pricing SPX \& VIX derivatives efficiently and accurately.  The method is based on functional quantization and Neural Networks. The tractability of the quantization approach highly relies on the Gaussian nature of $X$ combined with the polynomial form of the volatility process $\sigma$. More precisely: 

\begin{itemize}
    \item \textbf{Fast pricing of VIX derivatives via Quantization}: we develop in Section~\ref{S:VIXpricing}  a functional quantization approach for computing VIX derivatives in our class of \textit{Gaussian polynomial volatility models}. 
    
    When computing expectations in the form of $\mathbb{E}[F(Y)]$ where no closed-form solution is available, a fast alternative to Monte Carlo is quantization. 
    The idea is to approximate the random variable $Y$ with a discrete random variable $\widehat Y$  to compute efficiently the (conditional) expectations of suitable functionals of $Y$. Quantization was first developed in the 1950s for signal processing { \cite{gersho2012vector,graf2000foundations}} and more recently has been studied for applications in numerical probability  {\cite{pages2003optimal}} and mathematical finance {\cite{pages2005functional,pham2004optimal}}. We will exploit the Gaussian nature of the process $X$ to develop a functional quantization approach. 
    
    A first attempt to use functional quantization for VIX Futures in the context of the rough Bergomi model appears in \cite{bonesini2021functional}. Unfortunately, the method is not precise enough in practice, especially for the fractional kernel with small values of $H$ even with a lot of quantization trajectories, see \cite[Figure 3]{bonesini2021functional} where the number of quantized points was pushed as far as $N=1,000,000$ but the approximated values for VIX Futures are still well-off the correct values,  see also Figure~\ref{Fig:momentmatching1} below.   It is well known that the convergence of the quantization for fractional processes is very slow of order $1/(\log N)^{H}$, see  \cite{dereich2006high}. 
    
    {Using a crucial \textit{moment-matching trick}, see \eqref{eq:momentmatching}, we can make functional quantization usable in practice by achieving very accurate results for both VIX Futures prices and VIX option smile with only a couple of hundreds of quantization points, even for fractional processes with very low  values of $H$.}

    \item \textbf{Fast pricing of SPX options via Neural Networks with Quantization hints:} 
    
    {In a first step in Section~\ref{S:Quantization}, we extend the previous quantization ideas to quantize SPX. However, the quantization is more delicate whenever $\rho \neq 0$ since it involves the quantization of the stochastic Itô integral $\int_0^t \sigma_s dW_s$. It is well-known since the work of Wong and Zakai {\cite{wong1965convergence}} that the  approximation  $\int_0^t \sigma_s d\widehat W_s$, where $\widehat W$ is some smooth approximation of the Brownian motion, will converge towards the Stratonovich stochastic integral defined by  \begin{align}
\int_0^t \sigma_s \circ dW_s := \int_0^t \sigma_s dW_s + \frac 12 \langle \sigma, W \rangle_t. 
\end{align}
There is an issue whenever the process $\sigma$ is not a semimartingale and has infinite quadratic variation, which is the case for the fractional kernel with $H<1/2$: the quadratic covariation $\langle \sigma, W \rangle$ explodes. 

To solve this issue, we subtract a diverging term, see \eqref{eq:renormalization}, to recover convergence, in the spirit of renormalization theory {\cite{hairer2014theory}}  and the approach in  \cite[Theorem 1.3]{bayer2020regularity}, combined with another  \textit{moment-matching} trick. Once again the \textit{moment-matching} trick is used to improve the accuracy. 

Unfortunately, quantization results for SPX degrade (at a slower rate) as $H$ goes to zero for the SPX derivatives. We therefore develop an approach in Section~\ref{S:NN} with Neural Networks acting as a corrector to the quantization points for the SPX. The Neural Networks approach in our paper has a low input dimension (strikes and the input curve are not part of the Neural Networks' input) and preserves the interpretability by directly modeling the joint density of $\log(S)$ and $\sigma$. It also significantly improves the SPX derivative pricing compared to Monte Carlo simulation, while being extremely fast. 
}
\end{itemize}

 {\paragraph{Extensive empirical study.} Our final contribution is an
extensive empirical joint calibration study detailed in  Section~\ref{S:empiricalcalibration}. A total of 1,422 days of SPX \& VIX joint implied volatility surfaces between August 2011 to September 2022 were calibrated. Interestingly, the \textit{conventional one-factor Markovian continuous stochastic volatility model} outperforms, in all market conditions, its rough and
 non-rough path-dependent counterparts, with the same number of calibrated parameters. A possible explanation for this performance lies in the unconstrained values of $H$ that can be pushed below zero once calibrated, something not possible for the rough fractional kernels. }

\paragraph{Outline of the paper.}  To ease the reading, and to satisfy not-so-patient readers who are (more) interested in our main findings regarding the empirical joint calibration,  we chose to present our empirical performance comparison between different models for the joint calibration problem in Section~\ref{S:empiricalcalibration}, right after the introduction of our class of \textit{Gaussian polynomial volatility} models in Section~\ref{gpvm}.  Our generic fast pricing methods are postponed to later sections:   
Section \ref{S:VIXpricing} develops a generic, fast, and accurate pricing method for VIX derivatives in our class of models based on functional quantization, and Section \ref{S:SPXpricing} extends the previous quantization approach and combines it with  Neural Networks to obtain a generic, fast and accurate pricing method for SPX derivatives.  
{Finally, Appendix \ref{A:formulae} collects some integral formulae, Appendix \ref{B:jc} contains additional 
calibration graphs, and Appendix \ref{C:nn_rss} showcases the Neural Network's capability to reduce Monte Carlo noise}.

\section{Gaussian polynomial volatility models}\label{gpvm}
We define the class of Gaussian polynomial volatility models under a risk-neutral measure as follows. We fix a filtered probability space $(\Omega,\mathcal F, (\mathcal F_t)_{t\geq 0}, \mathbb Q)$ satisfying the usual conditions and supporting a two-dimensional Brownian motion $(W,W^{\perp})$. For $\rho \in [-1,1]$, we set 
\begin{equation}
B=\rho W + \sqrt{1-\rho^2} W^{\perp},
\end{equation}
which is again a Brownian motion. 

\paragraph{The model.} The dynamics of the stock price $S$ are assumed to follow a stochastic volatility model such that the volatility process $\sigma$ is given by a polynomial {(possibly of infinite degree)} of a Gaussian Volterra process $X$
\begin{equation}\label{polynomial_model}
  \begin{aligned}
    \frac{dS_t}{S_t} &= \sigma_tdB_t, \quad  S_0 >0,\\
    \sigma_t &= \sqrt{\xi_0(t)}\frac{p(X_t)}{\sqrt{\mathbb{E}\left[p(X_t)^2\right]}}, \hspace{0.5cm} p(x) = \sum_{k= 0}^{M} \alpha_k x^k,\\
      X_t &=  \int_0^tK(t-s)dW_s,
  \end{aligned}
  \end{equation}
  for some $M \in \mathbb N$ possibly infinite, real coefficients $(\alpha_k)_{k=0,\ldots, M}$, a non-negative square-integrable kernel  $K \in L^2([0,T],\mathbb R_+)$ and input curve $\xi_0 \in L^2([0,T],\mathbb R_{{+}})$ for any $T>0$, with the convention that $0/0 = 1$.  In particular,  $X$ is a Gaussian process such that $\mathbb{E}\left[X_t^2\right] = \int_0^t K(s)^2ds<\infty$, for all $t\geq 0$. Note that $X$ is not necessarily Markovian or a semi-martingale.   We will be chiefly interested in the performance of our class of model for the joint SPX \& VIX calibration problem for four kernels summarized in Table~\ref{tablekernels}. 
  
{\begin{table}[h!]
{
		\centering  
		\resizebox{\textwidth}{!}{\begin{tabular}{c c  c c c} 
				\hline
				Kernel &   $K(t)$ & Domain of H & Semi-martingale & Markovian \\
				\hline 
			Fractional	$K^{frac}$ & $t^{H-1/2}$ & $(0,1/2]$ & \textcolor{red0}{\xmark}   & \textcolor{red0}{\xmark}  \\
		Log-modulated		$K^{log}$ & $t^{H-1/2} (\theta \log(1/t)\vee 1)^{-\beta}$ & $[0,1/2]$ & \textcolor{red0}{\xmark}   & \textcolor{red0}{\xmark}  \\
	Shifted fractional			$K^{shift}$ 	 & $(t+\varepsilon)^{H-1/2}$ & $(-\infty,1/2] $ & \textcolor{blue0}{\cmark}   & \textcolor{red0}{\xmark}  \\
Exponential				$K^{exp}$ & $\varepsilon^{H-1/2} e^{-(1/2-H)\varepsilon^{-1} t}$ & $(-\infty,1/2]$ & \textcolor{blue0}{\cmark}   & \textcolor{blue0}{\cmark}  \\
				\hline 
		\end{tabular}}
		\caption{The different kernels $K$ considered in this paper and the properties of their corresponding process $X_t=\int_0^t K(t-s)dW_s$; $\varepsilon>0$, $\theta>0$ and $\beta>1$.}
		\label{tablekernels} 
}
\end{table}}

The curve $\xi_0$  allows to match certain term structures observed on the market. For instance, the normalization  $\sqrt{\mathbb{E}\left[p(X_t)^2\right]}$ allows $\xi_0$ to match the market forward variance curve since 
\begin{align}\label{eq:fwdvarcalib}
    \mathbb E\left[ \int_0^t \sigma_s^2 ds \right] = \int_0^t \xi_0(s)ds, \quad t\geq 0. 
\end{align}

This family of models captures several well-known models already existing in the literature, such as a particular instance of the Volterra Stein-Stein model of \cite{abi2022characteristic} for $M=1$, $\alpha_0=0$ and $\alpha_1 = 1$; and the Volterra Bergomi model of \cite{bayer2016pricing} in the case $M=\infty$ and $a_k =  \frac{1}{2^k k!}, k\geq 0$ so that $\sigma_t^2 = \sqrt{\xi_0(t)}\exp(X_t - \frac 1 2 \mathbb E[X_t^2] )$. Except for the Volterra Stein-Stein class of models where Fourier inversion techniques can be applied thanks to an explicit expression of the characteristic function, as shown in \cite{abi2022characteristic},  pricing is usually slow in these models and only done using  Monte-Carlo simulation. We will develop in Sections~\ref{S:VIXpricing} and \ref{S:SPXpricing} a generic, efficient, and accurate method for our class of Gaussian polynomial volatility models exploiting the Gaussian nature of the driving process $X$ combined with the polynomial form of the volatility process $\sigma$.

\paragraph{The forward variance process.} The forward variance process $\xi_t(u):=\mathbb{E}\left[\sigma_u^2\mid \mathcal{F}_t\right]$ can be computed explicitly for the Gaussian polynomial volatility model. First, we fix $t\leq u$ and rewrite $X$ as
  \begin{equation}\label{eq6}
X_u = \underbrace{\int_0^t K(u-s)dW_s}_{Z_{t}^u} + \underbrace{\int_t^u K(u-s)dW_s}_{G_{t}^{u}}, 
  \end{equation}
 then, setting 
 $$g(u)=\mathbb E[p(X_u)^2],$$ 
 we have that
  \begin{equation}\label{eq7}
    \xi_t(u) = \mathbb{E}\left[\sigma_u^2\mid \mathcal{F}_t\right] = \frac{\xi_0(u)}{g(u)} \mathbb{E}\left[\left(\sum_{k=0}^M\alpha_k X_u^k\right)^2 \Mid \mathcal{F}_t \right] = \frac{\xi_0(u)}{g(u)} \mathbb{E}\left[\sum_{k=0}^{2M}(\alpha *\alpha)_k X_u^k \Mid \mathcal{F}_t\right],
  \end{equation}
where $(\alpha *\alpha)_k = \sum_{j=0}^{k}\alpha_j\alpha_{k-j}$ is the discrete convolution. Using the Binomial expansion, we can further develop the expression for $\xi_t(u)$ in terms of $Z^u$ and $G^u$ to get
  \begin{equation}\label{fwd_var_process}
    \xi_t(u)=\frac{\xi_0(u)}{g(u)}\sum_{k=0}^{2M}\sum_{i=0}^{k} (\alpha *\alpha)_k \binom{k}{i}(Z_t^u)^{k-i}\mathbb{E}\left[(G_t^u)^i \right],
  \end{equation}
 where we used the fact that $Z_t^u$ is $\mathcal F_t$-measurable and that  $G^u_t$  {is independent of $\mathcal F_t$},  with $k!/((k-i)!i!)$ the binomial coefficient. Furthermore, $G^u_t$ is a Gaussian random variable with mean $0$ and variance $\int_0^{u-t} K(s)^2ds$, the  moments $\mathbb{E}\left[(G_t^u)^i \right]$ can be computed explicitly
\begin{align}\label{eq:momentgaussian}
  \mathbb{E}\left[(G_t^u)^i \right] =
    \begin{cases}
      0 & \text{if $i$ is odd}\\
      \left(\int_0^{u-t}K^2(s)ds\right)^{\frac{i}{2}}(i-1)!! & \text{if $i$ is even}\\
    \end{cases}       
\end{align}
with $i!!$ the double factorial.

We now explicit the dynamics of $(\xi_{t}(u))_{t\in [0,u]}$. By construction for fixed $u$, the process $(Z_t^u)_{t\in [0,u]}$ is a martingale with dynamics 
\begin{align*}
dZ_t^u = K(u-t) dW_t.
\end{align*}
Similarly, $\xi_{\cdot}(u)$ is a martingale on $[0,u]$, so that its part in $dt$ is zero.  
An application of  It\^o's formula leads to the following dynamics for the process $\xi_{\cdot}(u)$
  \begin{equation}\label{eq9}
    d \xi_t(u) = \frac{\xi_0(u)}{g(u)} K(u-t)\sum_{{k=0}}^{2M}{ \sum_{i=0}^{{k}}} (\alpha *\alpha)_k \binom{k}{i}(k-i)(Z_t^u)^{k-i-1} \mathbb{E}\left[(G_t^u)^i  \right]dW_t.
  \end{equation}

 \paragraph{An explicit expression for the VIX.} One major advantage of our class of Gaussian polynomial volatility models is an explicit expression of the VIX. {In continuous time, the VIX can be expressed as
  \begin{equation}
    \mbox{VIX}_T^2 = -\frac {2}{\Delta} \mathbb{E}\left[\log(S_{T+\Delta}/S_T)\mid \mathcal{F}_T \right] \times 100^2,
  \end{equation}
with $\Delta= 30$ days. Combining the above expression with \eqref{polynomial_model} and \eqref{fwd_var_process}, we have an explicit expression of the VIX in the Gaussian polynomial volatility model
\begin{align}
    \mbox{VIX}_T^2 &= \frac{100^2}{\Delta} \mathbb{E}\left[\int_T^{T+\Delta}\sigma_u^2du \Mid \mathcal{F}_T\right] = \frac{100^2}{\Delta} \int_T^{T+\Delta}\xi_T(u) du\\
    &=\frac{100^2}{\Delta} \sum_{k=0}^{2M}\sum_{i=0}^{k} (\alpha *\alpha)_k \binom{k}{i}\int_T^{T+\Delta} \frac{\xi_0(u)}{g(u)}\mathbb{E}\left[(G_T^u)^i \right] (Z_T^u)^{k-i}du, \label{eq:VIXclosed}
\end{align}
with
\begin{align}\label{eq:ZuT}
Z_T^u = \int_0^T K(u-s)dW_s, \quad G_T^u = \int_T^{T+\Delta} K(u-s)dW_s, \quad   T \leq u.
\end{align}
}
 
Recall that the moments $\E[(G^u_T)^i]$ are given explicitly by \eqref{eq:momentgaussian}.

\paragraph{The SPX ATM-Skew and the restriction of the coefficients $(\alpha_k)_{k\geq 0}$.}  Even if there is no theoretical restriction on the domain of $\{\alpha_k\}_{k \leq M}$, it would still be desirable for the SPX at-the-money (ATM) skew to be controlled by the sign of $\rho$, as in all other usual stochastic volatility models.  
Using the Guyon-Bergomi expansion of implied volatility in terms of small volatility of volatility \cite{bergomi2012stochastic} at first order, the  SPX ATM skew $\mathcal{S}_T$ is defined by the quantity
\[
\mathcal{S}_T = \frac{\partial \hat \sigma_{spx}\left(k,T\right)}{\partial k} \mid_{k=0},  \quad k = \log(K/S_0),
\]
where $\hat \sigma_{spx}\left(k,T\right)$ is the implied volatility of SPX. $\mathcal{S}_T$ has the sign of the integrated spot-variance covariance function $C^{X \xi}$ given by 
  \begin{align}\label{eq5}
    C^{X\xi} &:= \int_0^T\int_t^T {\E \left[\frac{d\langle \log S,\xi_{\cdot}(u)\rangle_t} {dt} \right]}  dudt \\
     &=\rho   \int_0^T\int_t^T \frac{\xi_0(u)}{g(u)}\sqrt{\frac{\xi_0(t)}{g(t)}} {K(u-t)} \\
     & \times \sum_{k=0}^{2M}\sum_{i=0}^{k} \sum_{j=0}^{M} (\alpha *\alpha)_k \alpha_j \binom{k}{i}{ (k-i)}\mathbb{E}\left[(G_t^u)^i\right] \mathbb{E}\left[X_t^j (Z_t^u)^{k-i-1}\right]du dt
  \end{align}
  where the second equality follows from  \eqref{eq9}. 

The expression of $C^{X\xi}$ for the Gaussian polynomial volatility model thus requires the computation of $\mathbb{E}\left[X_t^p (Z_t^u)^q\right]$, which can be computed via Isserlis' Theorem \cite{isserlis1918formula}. 

\begin{theorem}[Isserlis' Theorem] If $\left(U_1, \ldots, U_n\right)$ is a zero-mean multivariate normal random vector, then
\[
\mathbb{E}\left[U_1 U_2 \cdots U_n\right]=\sum_{p \in P_n^2} \prod_{\{i, j\} \in p} \mathbb{E}\left[U_i U_j\right],
\]
where the sum is over all the pairings of $\{1, \ldots, n\}$, i.e. all distinct ways of partitioning $\{1, \ldots, n\}$ into pairs $\{i, j\}$, and the product is over the pairs contained in $p$. (When $n$ is odd,  there does not exist any pairing of $\{1,\ldots, n\}$ so that $\mathbb{E}\left[U_1 U_2 \cdots U_n\right] = 0$.)
\end{theorem}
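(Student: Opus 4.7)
The plan is to prove Isserlis' theorem via the moment generating function of the centered multivariate normal. Writing $\sigma_{ij} = \mathbb{E}[U_i U_j]$ for the covariance entries, the MGF is
\begin{equation}
M(t_1,\ldots,t_n) = \mathbb{E}\bigl[\exp(t_1 U_1 + \cdots + t_n U_n)\bigr] = \exp\Bigl(\tfrac{1}{2}\sum_{i,j=1}^n t_i t_j \sigma_{ij}\Bigr),
\end{equation}
and the joint moment one wishes to compute is recovered by
\begin{equation}
\mathbb{E}[U_1 \cdots U_n] = \frac{\partial^n M}{\partial t_1 \cdots \partial t_n}\Big|_{t=0}.
\end{equation}

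First I would expand the exponential as a power series $M(t) = \sum_{m\geq 0} \frac{1}{m!}\bigl(\tfrac{1}{2}\sum_{i,j} t_i t_j \sigma_{ij}\bigr)^m$. Since the $m$-th term is a homogeneous polynomial in $t$ of degree $2m$, the mixed derivative $\partial_{t_1}\cdots\partial_{t_n}$ evaluated at $t=0$ isolates the unique term with $2m=n$. In particular, when $n$ is odd no term contributes and the right-hand side is automatically $0$ (matching the empty set of pairings $P_n^2$). When $n=2m$ is even, one needs to evaluate
\begin{equation}
\frac{\partial^n}{\partial t_1 \cdots \partial t_n}\,\frac{1}{m!}\Bigl(\tfrac{1}{2}\sum_{i,j} t_i t_j \sigma_{ij}\Bigr)^{m}\Big|_{t=0}.
\end{equation}

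The main obstacle is the combinatorial bookkeeping in this final step: after expanding the multinomial, one obtains a sum over tuples $(a_1,b_1,\ldots,a_m,b_m)\in\{1,\ldots,n\}^{2m}$ of monomials $t_{a_1}t_{b_1}\cdots t_{a_m}t_{b_m}$ weighted by $\sigma_{a_1 b_1}\cdots \sigma_{a_m b_m}$, and only those tuples whose indices form a permutation of $\{1,\ldots,n\}$ survive differentiation at $0$. One must then show that each perfect matching of $\{1,\ldots,n\}$ arises with multiplicity exactly $2^m\, m!$: the factor $m!$ comes from reordering the $m$ pairs and the factor $2^m$ from the two orderings $(a_k,b_k)$ versus $(b_k,a_k)$ inside each pair, using symmetry of $\sigma$. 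These two multiplicities cancel the prefactor $\frac{1}{2^m m!}$, leaving $\sum_{p\in P_n^2}\prod_{\{i,j\}\in p}\sigma_{ij}$ as desired.

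As a backup route if the direct combinatorics becomes cumbersome, I would proceed by induction on $n$ using the Gaussian integration-by-parts identity
\begin{equation}
\mathbb{E}[U_1 F(U_2,\ldots,U_n)] = \sum_{k=2}^n \sigma_{1k}\, \mathbb{E}\bigl[\partial_{u_k} F(U_2,\ldots,U_n)\bigr],
\end{equation}
applied to $F(u_2,\ldots,u_n) = u_2 \cdots u_n$. This yields $\mathbb{E}[U_1 U_2 \cdots U_n] = \sum_{k=2}^n \sigma_{1k}\,\mathbb{E}\bigl[\prod_{j\in\{2,\ldots,n\}\setminus\{k\}} U_j\bigr]$, which is exactly the recursion obtained by conditioning on the pair containing index $1$ in each matching $p\in P_n^2$; the induction hypothesis then closes the argument.
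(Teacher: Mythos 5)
The paper does not include a proof of Isserlis' theorem; it is stated as a classical result and attributed to \cite{isserlis1918formula}, so there is no in-paper argument to compare against. That said, both of your routes are correct and standard. The MGF argument is sound: differentiating the series expansion of $\exp\bigl(\tfrac12\sum_{i,j}t_it_j\sigma_{ij}\bigr)$ kills every degree-$2m$ term except the one with $2m=n$, so the odd case vanishes for free, and in the even case each perfect matching is counted exactly $2^m m!$ times (a factor $m!$ from permuting the $m$ pairs, a factor $2$ per pair from swapping $(a_k,b_k)$ with $(b_k,a_k)$ using $\sigma_{ij}=\sigma_{ji}$), which cancels the $\tfrac{1}{2^m m!}$ prefactor precisely. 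Your backup via the Gaussian integration-by-parts identity $\mathbb{E}[U_1 F(U)]=\sum_k \sigma_{1k}\,\mathbb{E}[\partial_{u_k}F(U)]$ applied to $F=u_2\cdots u_n$ also closes cleanly: each pairing of $\{1,\ldots,n\}$ contains a unique partner $k$ for index $1$, and removing $\{1,k\}$ leaves a pairing of $\{2,\ldots,n\}\setminus\{k\}$, matching the recursion term by term, so induction on $n$ finishes it. The second route avoids the multinomial bookkeeping altogether and is arguably the cleaner self-contained proof, but either one would serve.
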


Thus the computation of $\mathbb{E}\left[X_t^p (Z_t^u)^q\right]$ essentially comes down to computing the following quantities
\[
\begin{aligned}
& \mathbb{E}\left[X_t X_t\right] = \int_0^t K(s)^2ds,\\
& \mathbb{E}\left[Z_t^u Z_t^u\right] = \int_{u-t}^u K(s)^2ds,\\
& \mathbb{E}\left[X_t Z_t^u\right] = \int_0^t K(t-s)K(u-s)ds.\\
\end{aligned}
\]

Note that all the quantities above are non-negative so that $\mathbb{E}\left[X_t^p (Z_t^u)^q\right]$ is non-negative. Therefore a sufficient (and simple) condition for the sign of the ATM skew to be the same as $\rho$ is by setting $\alpha_k \geq 0$,  for all $k \leq M$. 

\section{Joint SPX \& VIX calibration: the empirical study}\label{S:empiricalcalibration}

We carried out joint calibration on SPX \& VIX implied volatilities, together with VIX Futures using all four kernels in Table~\ref{tablekernels} for every $\mbox{2}^{nd}$ day between August 2011 to September 2022. That is a total of 1,422 days of SPX \& VIX joint implied volatility surfaces. The VIX is calibrated up to maturity $T=\mbox{2 months}$, and the SPX is calibrated up to maturity $T+\Delta$, i.e.~3 months. Market data was purchased from the CBOE website \url{https://datashop.cboe.com/}.

The objective of joint calibration is to minimize the error between SPX \& VIX implied volatility, together with the VIX Futures prices output from the model and those observed on the market. This amounts to solving the following optimization problem involving the sum of root mean squared error (RMSE)
\begin{equation}\label{calibration_error}
  \begin{aligned}
    &\min_{\Theta} \Bigg\{  c_1 \sqrt{\sum_{i,j}  \Big(\widehat \sigma_{spx}^{\Theta}(T_i,K_j) -\widehat \sigma_{spx}^{mkt}(T_i,K_j) \Big)^2} +    c_2 \sqrt{\sum_{i,j} \Big(\widehat \sigma_{vix}^{\Theta}(T_i,K_j) -\widehat \sigma_{vix}^{mkt}(T_i,K_j) \Big)^2}\\ 
    &\quad \quad \quad \quad \quad  +  c_3 \sqrt{\sum_{i} \Big(F^{\Theta}_{vix}(T_i)-F^{mkt}_{vix}(T_i) \Big)^2}\Bigg\}.
  \end{aligned}
\end{equation}

Here, $\widehat\sigma_{spx}^{mkt}(T_i,K_j)$, $\widehat\sigma_{vix}^{mkt}(T_i,K_j)$ represent market SPX \& VIX implied volatility with maturity $T_i$ and strike $K_j$. $F^{mkt}_{vix}(T_i)$ is the market VIX Futures price maturing at $T_i$. $\widehat\sigma_{spx}^{\Theta}(T_i,K_j)$, $\widehat\sigma_{vix}^{\Theta}(T_i,K_j)$ and $F^{\Theta}_{vix}(T_i)$ represent the same instruments, but coming from the Gaussian polynomial volatility model with parameters denoted collectively as $\Theta$  
 and will be detailed in \eqref{eq:THeta} below. The coefficients $c_1$, $c_2$, and $c_3$ are some positive numbers used to assign different weights to the errors in SPX \& VIX implied volatility and VIX Futures price. We chose $c_1 = 1,c_2 = 0.1, c_3 = 0.5$ for all four kernels. {Of course, these weights can be chosen differently, e.g.~based on liquidity and maturity, etc.}
 
We recall  that the implied volatility of a call option is calculated by inverting  the Black and Scholes  formula, that is, for a given call price $C_0 (K,T)$ with strike $K$ and maturity $T$, we find the unique $\sigma_{K,T}$ such that 
  \begin{equation}
    C_0 (K,T) = F_0^T\mathcal{N}(d_1)-K\mathcal{N}(d_2)
  \end{equation}
with
  \begin{equation}
d_1 = \frac{\log\left(F_0^T/K\right)+ \frac 1 2  \sigma_{K,T}^2 T}{ \sigma_{K,T} \sqrt{T}}, \quad d_2 = d_1 - \sigma_{K,T} \sqrt{T},
  \end{equation}
where  $\mathcal{N}(x)= \int_{-\infty}^x e^{-z^2/2}dz/\sqrt{2\pi}$ is the cumulative density function of the standard Gaussian distribution and $F_0^T$ denotes the Futures price of the index: $F_0^T=\E\left[S_T\right]=S_0$ for the SPX in our setting \eqref{polynomial_model} and  $F_0^T=\E\left[\mbox{VIX}_T\right]$ for the VIX.

{To speed up the joint calibration, we applied functional quantization for fast pricing of VIX derivatives, detailed in Section \ref{S:VIXpricing}, and functional quantization with Neural Networks for fast pricing of SPX derivatives, detailed in Section \ref{S:SPXpricing}. The optimization problem in \eqref{calibration_error} is solved using the SciPy optimization library in Python.}

\subsection{Choice of the polynomial function \texorpdfstring{$p$}{Lg}}
We first comment on the choice of the polynomial function $p$ in \eqref{polynomial_model}. Based on our numerical experiments, we will take  $p$ a polynomial of order $5$ (i.e.~$M=5$ in \eqref{polynomial_model}) in the following form
\begin{align}
    p(x) =\alpha_0 + \alpha_1 x + \alpha_3 x^3 + \alpha_5 x^5, \quad x\in \R,
\end{align}
with $\alpha_0,\alpha_1,\alpha_3,\alpha_5 \geq 0$.
The high degree allows us to reproduce the upward slope of the VIX smile. Setting $\alpha_2=\alpha_4 =0$ allows us to reduce the number of parameters to calibrate.

\subsection{Choice of parameters}

We now comment on the choice of parameters of the four kernels of Table~\ref{tablekernels} used for joint calibration. First, we fix $\varepsilon=1/52$ for the shifted fractional kernel $K^{shift}$ and exponential kernel $K^{exp}$. Next, we also set $\theta=0.1$ for the log-modulated kernel $K^{log}$ as suggested by \cite{bayer2021log} to further reduce the dimension of parameters space. Thus, there are only 6 calibratable parameters plus the input curve $\xi_0(\cdot)$ for the kernels $K^{frac}, K^{shift}, K^{exp}$, namely
\begin{align}\label{eq:THeta}
\Theta := \{\alpha_0, \alpha_1, \alpha_3, \alpha_5, \rho, H \}
\end{align}
and an extra parameter $\beta$ for kernel $K^{log}$. Numerical experiments show no significant adverse impact on the joint calibration quality by narrowing the choice of parameters as we suggested.

For the treatment of the input curve $\xi_0(\cdot)$, we first strip the forward variance curve {of the market} using the celebrated formula by \citet{carr2001towards}
and then pass a cubic spline through the square root of the forward variance curves to enforce positivity across time{, recall \eqref{eq:fwdvarcalib}}. During calibration, the spline nodes of the input curve are adjusted as necessary to fit the level of implied volatilities for SPX \& VIX, as is usually done in forward variance type of models, see \cite{bayer2016pricing}.

{ We now summarize the steps involved when calibrating our models to market data:

\begin{enumerate}
    \item Extract the forward variance swap rate $x_i$ use the replication formula of Carr and Madan \cite{carr2001towards} for $(T_i, T_{i+1})$, with $T_i$ representing each maturity in the dataset,
    \item Approximate $\sqrt{\xi_0(t)}$, for  $t\in [T_i, T_{i+1}[$ by passing through a cubic spline at each node $(\frac{T_i+T_{i+1}}{2}, \sqrt{x_i})$,
    \item Given $\xi_0(t)$ and $\Theta$, compute the SPX implied volatility, VIX Futures and VIX implied volatility using functional Quantization and Neural Networks described in Sections \ref{S:VIXpricing} and \ref{S:SPXpricing} below for each maturity and compute the error function in \eqref{calibration_error},
    \item Using an optimization algorithm of choice, adjust $\Theta$ and $x_i$ to reduce the calibration error. We recommend keeping the calibrated $x_i$ within 50\% of the initial value so that the expected integrated variance at each $T_i$ does not deviate far from the market data,
    \item Repeat step 4 until the algorithm reaches a satisfactory local optimum.
\end{enumerate}
}

\subsection{Impact of kernel \texorpdfstring{$K$}{Lg} on joint calibration: an empirical comparison}

The choice of kernels plays a crucial role in the model's capability of jointly fitting the SPX \& VIX smiles. 
We will consider successively the four kernels of Table~\ref{tablekernels}. 

\paragraph{The fractional kernel} $K^{{frac}}(t)=t^{H-1/2}$, with $H\in (0,1/2]$,
 taken as the starting point, is extensively used in recent literature on rough volatility \cite{abi2022characteristic, bayer2016pricing, el2019characteristic}. 
Separate calibration of SPX/VIX appears to be satisfactory, however there are inconsistencies in the value of $H$ between the two indices. To produce the steep VIX ATM skew and lower level of VIX implied volatility, the calibrated $H$ is very close to zero  (similar to that of quadratic rough Heston model in  \cite{rosenbaum2021deep} where $H=0.01$). This is problematic for the SPX due to the `vanishing skew' phenomena as $H \rightarrow 0$, observed in \cite{forde2020rough} that also plagues models such as the rough Bergomi model. 

Despite pushing $\rho$ to the boundary value $-1$ in most days (which should increase the SPX ATM skew in stochastic volatility models) as shown in Figure~\ref{frac_param_history} of Appendix \ref{B:param_history}, the joint calibrated SPX ATM skew is too flat compared to the market data. The VIX implied volatility produced by the model is generally too high and does not have enough ATM skew, see Figure~\ref{frac_jc1} of Appendix \ref{B:jc_kernels_compare}. One can try improving the VIX fit by pushing $H$ closer to zero, but this will further flatten the SPX ATM skew.

\paragraph{The log-modulated kernel}
$K^{log}(t)={{t^{H-1/2}}}\text{max}(\theta \log(1/t),1)^{-\beta}, H \in [0,1/2], \theta>0, \beta>1$ is proposed by \cite{bayer2021log}, where $X$ is well-defined even when $H=0$. 
 In \cite[Figure 1.1]{bayer2021log}, it is shown that this kernel is capable of resisting the SPX ATM skew flattening suffered by the fractional kernel $K^{frac}$. The calibrated $H$ appears to be just about zero during normal market conditions with a slightly less saturated $\rho$, see Figure~\ref{log_m_param_history} of  Appendix \ref{B:param_history}. The joint fit seems to be slightly better than that of the fractional kernel $K^{frac}$, see Figure~\ref{log_m_jc1} of  Appendix \ref{B:jc_kernels_compare}.
 
However, the joint calibration results are still not satisfactory. It seems $H$ needs to go even further below zero (something impossible for the log modulated kernel $K^{log}$ and the fractional kernel $K^{frac}$) to produce the steep VIX ATM skew along with a lower level of VIX implied volatility. This motivates the use of the following shifted fractional kernel $K^{{shift}}$.

\paragraph{The shifted fractional kernel} $K^{shift}(t) = (t+\varepsilon)^{H-1/2}$, for some $\varepsilon > 0$, is now well-defined and 
 locally square-integrable for any $ H \in (-\infty,1/2]$. The kernel is even smooth enough to make  $X$  a semi-martingale but not Markovian, see Proposition~\ref{P:semi} below.  The empirical joint calibration results are {considerably} better compared to the fractional kernel $K^{frac}$ and the log-modulated kernel $K^{log}$, see Figure~\ref{shift_jc1} of Appendix \ref{B:jc_kernels_compare}. The history of calibrated $H$ spends most of its time below zero (except for a brief moment during the 2020 Covid pandemic), with $\rho$ no longer saturated and averaging around -0.7, see Figure~\ref{shift_param_history} of Appendix \ref{B:param_history}. One issue with this kernel is that $X$ is not Markovian and is path-dependent.

\paragraph{The exponential kernel $K^{exp}$.} The choice of the exponential kernel, along with its particular parametrization is motivated as a proxy of the shifted fractional kernel that would yield Markovian dynamics for $X$. To see this, we recall the following representation of the fractional kernel as a Laplace transform
\begin{align*}
    t^{H-1/2} &=  c_H \int_{\R_+} e^{-xt}  {x^{-H-1/2}} dx,
\end{align*}
with  $c_H =1/ \Gamma(1/2-H)$ and $\Gamma(z):=\int_{\R_+} x^{z-1} e^{-x}dx$ the Gamma function for $z>0$. We note that such representation as a Laplace transform has been used in the literature to disentangle the (infinite-dimensional) Markovian structure of fractional processes \cite{abi2019markovian, abi2021linear, carmona1998fractional, cuchiero2020generalized, harms2019affine} and develop efficient numerical Markovian approximations \cite{abi2019lifting,   alfonsi2021approximation,  bayer2021markovian, harms2019strong, zhu2021markovian}. For fixed $\varepsilon >0$, the shifted fractional kernel then reads
\begin{align*}
    (t+\varepsilon)^{H-1/2} &=   \int_{\R_+} e^{-xt} \mu_{H,\varepsilon}(dx) {\mbox with } \quad    \mu_{H,\varepsilon}(dx):= c_H  e^{-x\varepsilon}{x^{-H-1/2}} dx .
\end{align*}
Now we choose $c_{H,\epsilon}$ and $\gamma_{H,\varepsilon}$  such that 
 the measure $\nu_{H,\varepsilon}(dx) = c_{H,\varepsilon} \delta_{\gamma_{H,\varepsilon}}(dx)$ satisfies
\begin{align*}
    \nu_{H,\varepsilon}(\R_+) =  \mu_{H,\varepsilon}(\R_+)  \quad \mbox{and} \quad      \int_{\R_+} x\nu_{H,\varepsilon}(dx) =  \int_{\R_+} x\mu_{H,\varepsilon}(dx),
\end{align*}
 with $\delta$ the Dirac measure.
This yields that 
\begin{align*}
    c_{H,\varepsilon} = \varepsilon^{H-1/2}  \quad \mbox{and} \quad     \gamma_{H,\varepsilon} = (1/2-H) \varepsilon^{-1}
\end{align*}
leading to the following exponential kernel 
\begin{align*}
K^{exp}(t) := \int_{\R_+} e^{-xt}\nu_{H,\varepsilon}(dx) = \varepsilon^{H-1/2} e^{-(1/2-H) \varepsilon^{-1} t}.
\end{align*}
For small values of $\varepsilon$ and $H$, this parametrisation gives a (Markovian) Ornstein-Uhlenbeck process $X$ in \eqref{polynomial_model} with a fast mean reversion of order $(1/2-H)\varepsilon^{-1}$ and a large vol-of-vol of order $\varepsilon^{H-1/2}$ with the following dynamics
\begin{align*}
dX_t = -(1/2-H) \varepsilon^{-1} X_t dt    + \varepsilon^{H-1/2} dW_t.
\end{align*}

{It is interesting to note that special cases of such parametrizations of conventional stochastic volatility models have already appeared in the literature but for a fixed value of $H$: for $H=0$ one recovers the fast regimes extensively studied by \citet{fouque2003multiscale}, see also   \cite[Section 3.6]{fouque2000derivatives}; setting $H=-1/2$ yields the parametrization studied by  \citet{mechkov2015fast} and establishes the link with jump models, see also \cite{abi2023reconciling, mccrickerd2019foundations}.}

Based on empirical results, the exponential kernel $K^{exp}$ 
produces the best joint fit compared to the other kernels while being the simplest (semi-martingale and Markovian). For SPX maturities up to 3 months and VIX maturities up to 2 months, the exponential kernel $K^{exp}$ can achieve remarkable fits, as shown in Figure \ref{fig:Intro} of implied volatility surfaces dated 23 October 2017, with calibrated parameters $\rho = -0.6997, H = -0.06939, (\alpha_0, \alpha_1, \alpha_3, \alpha_5) = (0.82695,0.84388,0.55012,0.03271)$.

The historical time series of joint calibration rooted mean square error (RMSE) in Figure~\ref{rmse_evol} and 
the distribution of the RMSE in Figure~\ref{rmse_dist} of Appendix \ref{B:rmse_dist} shows that the exponential kernel $K^{exp}$ outperforms other kernels for all market conditions for both SPX \& VIX fit.

{
  \begin{figure}[H]
    \centering
    \includegraphics[scale=0.35]{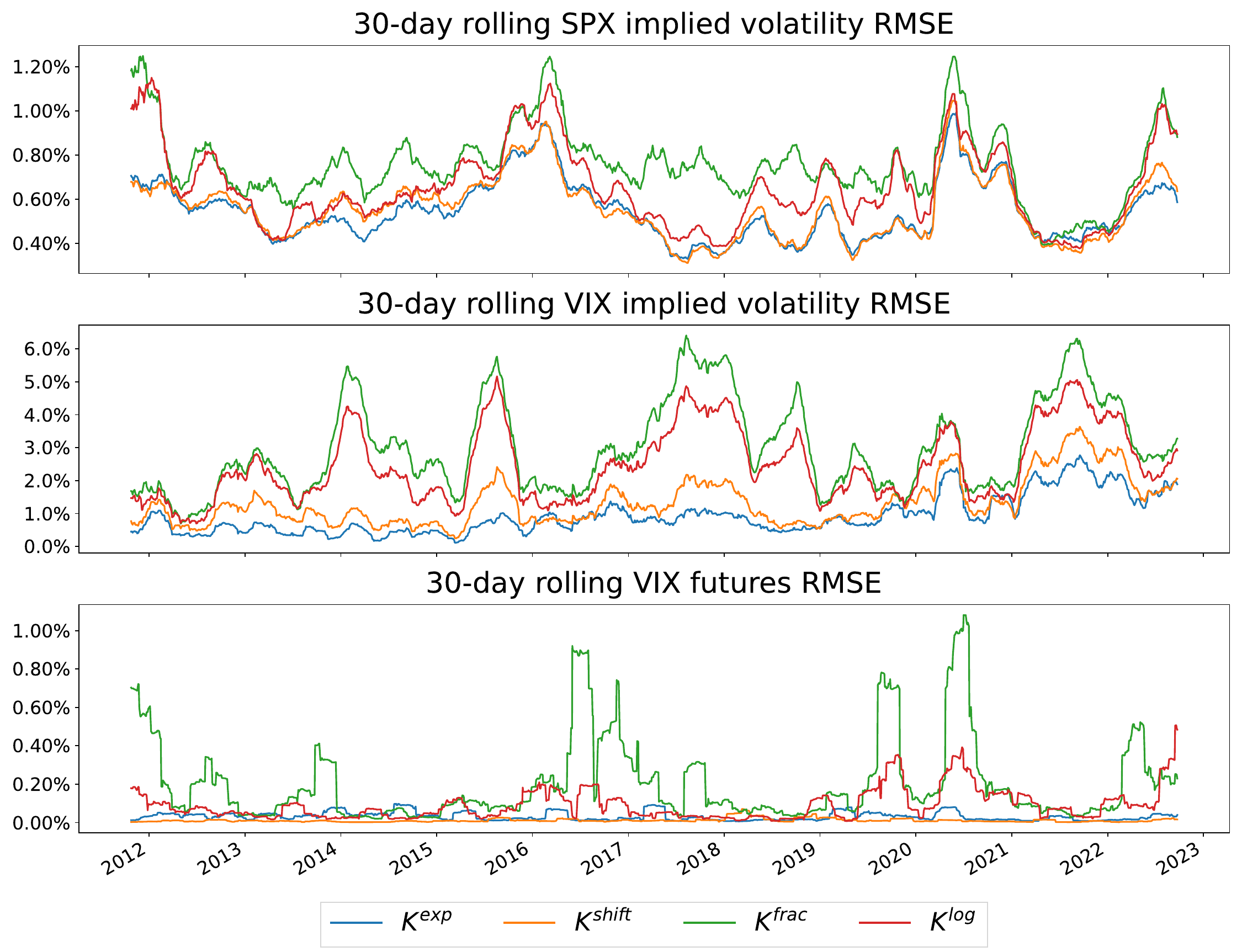}
    \caption{Time series of calibrated RMSE across different kernels between August 2011 and September 2022: the exponential kernel $K^{exp}$ outperforms other kernels in all market conditions.}
    \label{rmse_evol}
  \end{figure}
}

The evolution of jointly calibrated parameters $H$ and $\rho$ also appear to be stable over time in the case of the exponential kernel as shown in Figure \ref{exp_param_history}. {This further validates the robustness of the exponential kernel $K^{exp}$ to jointly fit SPX \& VIX implied volatilities}. Notice that $H$ and $\rho$ also appear to be negatively correlated to one another. We observe that $\rho$ is far from being saturated to $-1$ and $H$ is on average very small and dips below zero from time to time. The parameters $\rho$ and $H$ for the shifted fractional kernel $K^{shift}$ display a similar  trend, see Figure~\ref{shift_param_history} of Appendix \ref{B:param_history}.

{
  \begin{figure}[H]
    \centering
    \includegraphics[scale=0.45]{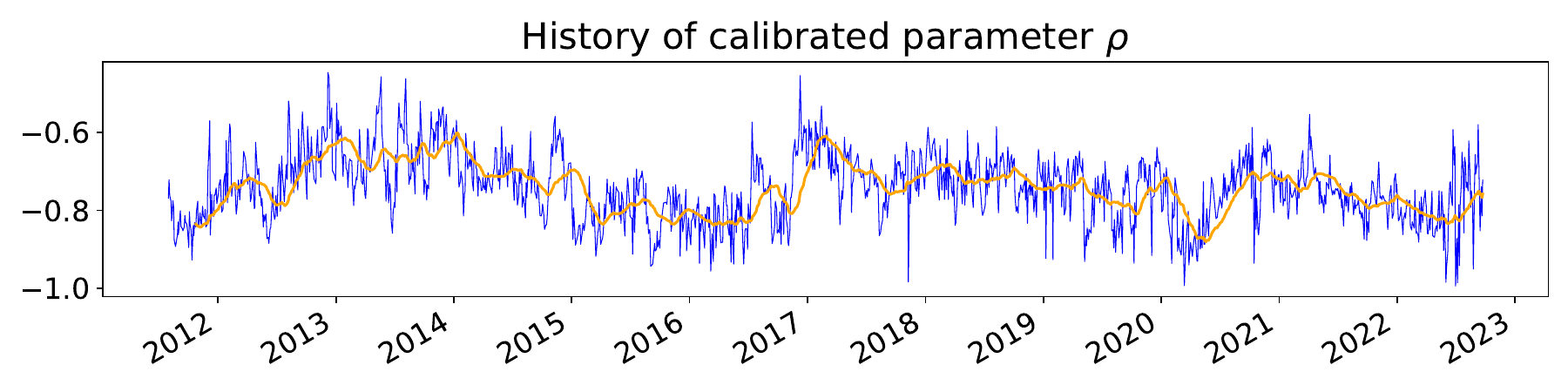}
    \includegraphics[scale=0.45]{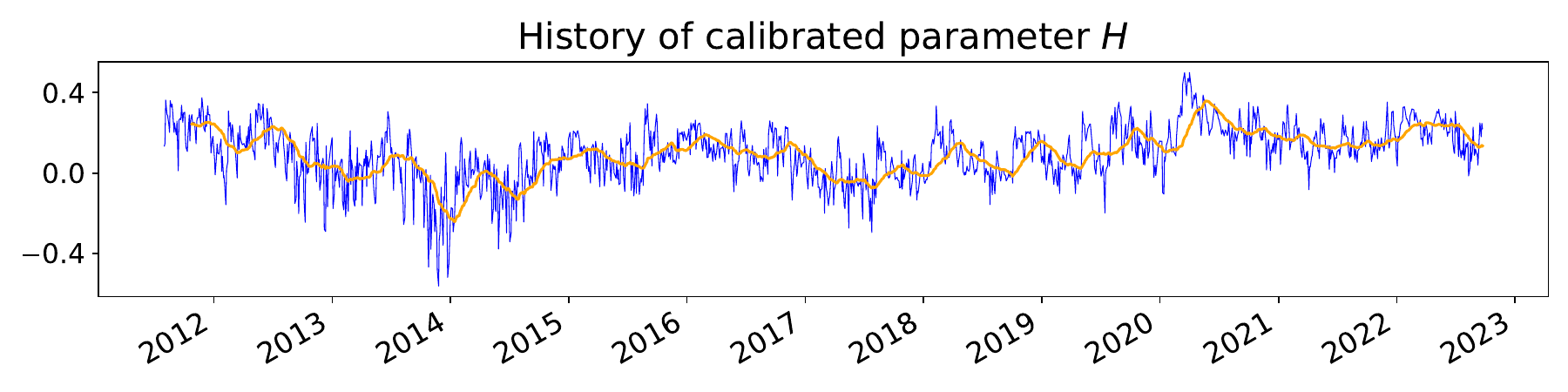}
    \caption{Evolution of the calibrated parameters $\rho$ and $H$ under the exponential kernel $K^{exp}$, the blue line is the actual value of the calibrated parameters in time, the orange line is the 30-day moving average.}
    \label{exp_param_history}
  \end{figure}
}

{For graphs on the joint calibration results by quantiles for the exponential kernel $K^{exp}$, the reader can refer to the Appendix \ref{exp_ker_quantile}}.

\section{Fast pricing of VIX derivatives via Quantization}\label{S:VIXpricing}

\subsection{\texorpdfstring{$L^2$}{Lg}-quantization of Gaussian random variables}\label{S:l2gaussian}

The idea {of quantization} is to approximate an $\mathbb{R}$-valued random variable $Y$ with a discrete random variable $\widehat Y$  in the computation of $\E [F(Y)]$ for some function $F$. Quantization provides a faster alternative to Monte Carlo when no closed-form expression of the expectation is available. We will concentrate on the case where $Y$ is Gaussian.

Formally, fix  $N\in \mathbb N$, for a given set of $N$-points  $\Gamma$ given by
\[
\Gamma = \{y_1,\ldots,y_{N}\} \subset \mathbb{R},  
\]
we consider the Borel partition $C = \{C_1,\ldots,C_{N}\}$  of $\mathbb{R}$ induced by $\Gamma$ using the nearest neighborhood projection
\[
C_{i} =  \left[y_{i-\frac{1}{2}}, y_{i+\frac{1}{2}}\right[, \quad y_{i\pm\frac{1}{2}}:=\frac{y_{i}+y_{i \pm 1}}{2}, \quad  y_0 = -\infty,\quad y_{N+1} = \infty, \quad i \in \{1,\ldots,N\}.
\]
The  $N$-quantizer $\widehat Y^{\Gamma}$ is defined by
  \begin{equation}
    \widehat Y^{\Gamma}(\omega) = \sum_{i=1}^{N} y_i 1_{C_i}(Y(\omega))
  \end{equation}
with the associated probability vector $p = \{p_1,\ldots, p_N\}$ such that
\begin{align}\label{eq:proba1d}
p_i = \mathbb{Q}(\widehat Y^{\Gamma} = y_i) = \mathbb{Q}(Y \in C_i) = \int_{ y_{i-\frac{1}{2}}}^{ y_{i+\frac{1}{2}}}d \mathbb{Q}_Y = \mathcal{N}(y_{i+\frac{1}{2}}) - \mathcal{N}(y_{i-\frac{1}{2}}),  \quad i \in \{1,\ldots, N\},  
\end{align}
with $\mathcal{N}(x)= \int_{-\infty}^x e^{-z^2/2}dz/\sqrt{2\pi}$ the cumulative density function of the standard Gaussian distribution. In particular, one can look for the $L^2$-optimal  $N$-quantizer $\widehat Y^{{\Gamma}*}$ with $\Gamma *$ given by
  \begin{equation}\label{L2_  quantizer_Y}
    {{\Gamma}*} = \arg\min_{\Gamma } \E\left[ \left(Y-\widehat Y ^{\Gamma}\right)^2 \right].
  \end{equation}

For standard Gaussian random variable $Y$, the existence and uniqueness of the $L^2$-optimal  $N$-quantizers have been established in \cite{pages2008quadratic}, along with a large family of distributions. The $L^2$-optimal $N$-quantizer is usually computed numerically, using either a zero search (Newton-Raphson gradient descent) or a fixed point procedure \cite{kieffer1982exponential, pages2003optimal}. Optimal quantizers for $N$ up to $5999$ have been computed offline in \cite{quantization_website} and are available online at \url{http://www.quantize.maths-fi.com/} in the format of pairs of $(y_i, p_i)$. 

From now on, we use $\widehat Y$ to denote the $L^2$-optimal $N$-quantizer $\widehat Y^{{\Gamma}*}$ to ease notations. It is now natural to consider the following approximations
  \begin{equation}
  \mathbb{E}[F(Y)] \approx \mathbb{E}[F(\widehat Y)]= \sum_{i=1}^{N} F(y_i)p_i.
  \end{equation}
Such ideas can be extended to approximate the Brownian motion  $W$  and the 
 VIX  in \eqref{eq:VIXclosed}. This is the object of the next subsections.

\subsection{\texorpdfstring{$L^2$}{Lg}-product functional quantization of Brownian motion}\label{qbm}
Let us now consider the  Brownian motion $(W_{t})_{0\leq{t}\leq{T}}$ defined under Section \ref{gpvm}. The idea of product functional quantization is to reduce the infinite dimension of the path space of $W$ into a finite $N$ number of paths. To do this, we start with the celebrated Karhunen–Loève decomposition of the Brownian motion $W$ in the form of
\begin{equation}\label{eq:KL}
W_{t}  = \sum_{k=1}^{\infty} \sqrt{\lambda_{k}}e_k{(t)}Y_k, {\quad 0 \leq t \leq T,}
\end{equation}
with $(Y_k)_{k\geq1}$ i.i.d.standard Gaussian and  
\[
e_{k}(t):=\sqrt{\frac{2}{T}} \sin \left(\pi(k-1 / 2) \frac{t}{T}\right), \quad \lambda_{k}:=\left(\frac{T}{\pi(k-1 / 2)}\right)^{2}, \quad k \geq 1.
\]
More precisely,  $(e_k, \lambda_k)$ are the $k^{th}$ pair of eigenfunctions in $L^2([0,T])$ and positive eigenvalues associated with the covariance kernel ${C_W}(t,s) = s\wedge t$
\[
\int_{0}^{T}C_{W}(t,s)e_k(s)ds = \lambda_k e_k(t), \quad k\geq 1, {\quad 0 \leq t  \leq T}.  \]
We note that  $(\lambda_k)_{k\geq 1}$ are decreasing and go to zero as $k$ goes to infinity.

 Using \eqref{eq:KL}, the product quantization is achieved by 1) truncating the infinite sum to a finite level $m$ and 2) quantizing the i.i.d standard Gaussian $Y_k$. For a given $N\in \mathbb N$, the product quantizer of the Brownian motion $W$ is defined as follows
  \begin{equation}\label{eq:hatW}
    \widehat{W}_{t}^{\left(N,m\right)}  = \sum_{k=1}^{m} \sqrt{\lambda_{k}}e_k{(t)}\widehat{Y}_{k}^{\left(N_{k}\right)},
  \end{equation}
where for each $k \in \{1,\ldots, m\} $, $\widehat{Y}_{k}^{\left(N_{k}\right)}$ is the $L^2$-optimal $N_k$-quantizer of the standard Gaussian $Y_k$ given in Section \ref{S:l2gaussian}  with $N_k$ number of quantized points $y^k = \{y^k_1,\ldots,y^k_{N_k}\}$. The product quantizer  $\widehat W^{(N,m)}$ has (at most) $N$ trajectories{, i.e.~$N_1 \times N_2 \times \ldots \times N_m\leq N$}.  More precisely, for a specific $\omega \in \Omega$, the trajectory of $\widehat W_t^{(N,m)}$ is defined as
  \begin{equation}\label{eq:bm_quant_traj}
    \widehat{W}_{t}^{\left(N,m\right)}(\omega)  = \sum_{k=1}^{m} \sqrt{\lambda_{k}}e_k{(t)}\sum_{\underline{i}\in \prod_{j=1}^{m}\{1,2,\ldots,N_j\}}y_{\underline{i}_k}^k 1_{\{(Y_1(\omega),\ldots,Y_m(\omega)) \in C_{\underline{i}} \}},
  \end{equation}
where $\underline{i}_k \in \{1,2,\ldots,N_k\} $ denotes the $k$-th element of the tuple $\underline{i}=(\underline{i}_1,\ldots, \underline{i}_m) \in  \prod_{j=1}^{m}\{1,2,\ldots,N_j\}$, with $C_{\underline{i}}$  the partition of $\mathbb{R}^m$ defined by
\begin{equation}\label{eq:cartesian_prod}
C_{\underline{i}} = \prod_{l=1}^{m} \left[y_{\underline{i}_l-\frac{1}{2}}^l, y_{\underline{i}_l+\frac{1}{2}}^l\right[, \hspace{0.4cm} y_{\underline{i}_l\pm\frac{1}{2}}^l:=\frac{y_{\underline{i}_l}^l+y_{\underline{i}_l \pm 1}^l}{2}, \hspace{0.2cm}  y_0^l = -\infty,\hspace{0.2cm} y_{N_l+1}^l = \infty.
\end{equation}
In other words, $C_{\underline{i}}$ is a Cartesian grid formed by $m$ number of optimal $L^2$-quantized standard Gaussians. Using the independence of the family $(\widehat{Y}_{k}^{(N_{k})})_{k=1,\ldots, m}$, the probability associated with each trajectory $w_{\underline{i}}$ of $\widehat{W}_{t}^{\left(N,m\right)}$, {defined as $w_{\underline{i}} = \sum_{k=1}^{m} \sqrt{\lambda_{k}}e_k{(t)} y_{\underline{i}_k}^k$}, is straightforward and given by
\[
p_{\underline{i}} : =  \mathbb{Q}\left(\widehat W^{(N,m)}=w_{\underline{i}}\right) =  \mathbb{Q}\left(\widehat Y^{(N_k)}_k=y_{\underline{i}_k}^k, k \in \{1,\ldots,m\}\right) = \prod_{k=1}^{m} \mathbb{Q}\left(\widehat Y^{(N_k)}_k=y_{\underline{i}_k}^k\right),
\]
where we recall that the quantities $ \mathbb{Q}(\widehat Y^{(N_k)}_k=y_{\underline{i}_k})$ appearing on the right hand side are explicitly given by \eqref{eq:proba1d}.

By fixing the number of trajectories $N$ and using the $L^2$-optimal $N_k$-quantizer $Y_k^{(N_k)}$, the optimal $L^2$-product quantizer on the space $\Omega\times [0,T]$  essentially comes down to the trade-off between the choice of $m$ and the sequence $(N_k)_{k \leq m}$ subject to $\prod_{k=1}^{m}N_k \leq N$ by solving the following optimisation problem
\begin{equation}
   \displaystyle  \min_{\substack{(N_k)_{k \leq m}, m \geq 1 \\ \prod_{k=1}^{m}N_k \leq N}} \mathbb{E}\left[\int_{0}^{T} \left(W_{s}-\widehat{W}_{s}^{\left(N,m\right)} \right)^{2}ds\right],
\end{equation}
with
\begin{equation}\label{L2_product_quantization_err}
\mathbb{E}\left[\int_{0}^{T} \left(W_{s}-\widehat{W}_{s}^{\left(N,m\right)} \right)^{2}ds\right] = \sum_{k=1}^{m}\lambda_{k}\mathbb{E}\left[\left(Y_k-\widehat{Y}_{k}^{\left(N_{k}\right)}\right)^{2}\right]+\sum_{k=m+1}^{\infty}\lambda_{k},
\end{equation}
{using the orthonormality of the eigenfunction $e_k$ and independence of $Y_k$.} Thus, the minimization of the error function  \eqref{L2_product_quantization_err} is a balancing act between the $L^2$-projection error of individual i.i.d.~random variables $Y_k$ ({intra-class} error) and the error of truncation of the KL decomposition ({inter-class} error). It can be proved, see   \cite[Theorem 5]{pages2008quadratic}, that a solution to \eqref{L2_product_quantization_err}  always exists and is unique. {The  quantizer $\widehat W^{(N,m)^*}$ with $\{(N_k)_{k \leq m^*}^*,m^*\}$ that solves \eqref{L2_product_quantization_err} is called the $L^2$ (rate) optimal product quantizer.}  To alleviate notations in the sequel, we will 1) use $\widehat W$ to denote the $L^2$-optimal product quantizer of $W$, and 2) identify the tuple $\underline{i}$ with a corresponding index $i \in \{1,\ldots,N\}$, so when we mention the $i$-th trajectory of $\widehat W$, we are referring to the trajectory $\underline{i}$ generated by the tuple $y_{\underline{i}} = (y_{\underline{i}_1}^1,\ldots,y_{\underline{i}_m}^m,)$, so that 
\[{w_i := w_{\underline{i}} = \sum_{k=1}^{m} \sqrt{\lambda_{k}}e_k{(t)} y_{\underline{i}_k}^k}, \quad N_1 \times N_2 \times \ldots \times N_m\leq N.
\]

In practice, the optimal decomposition sequence $N_{1}, \ldots, N_{m}$ are solved numerically using blind optimization, which consists of computing \eqref{L2_product_quantization_err} for every possible decomposition $N_1 \times N_2 \times \ldots \times N_m\leq N$, see \cite{pages2005functional}. For standard Brownian motions, the optimal decomposition sequence for a wide range of values of $N$  is available online at \url{http://www.quantize.maths-fi.com/}. For an illustration, we plot an $L^2$-optimal product quantizer $\widehat W$ in Figure~\ref{fig:bm}.

  \begin{figure}[H]
    \centering
    \includegraphics[scale=0.5]{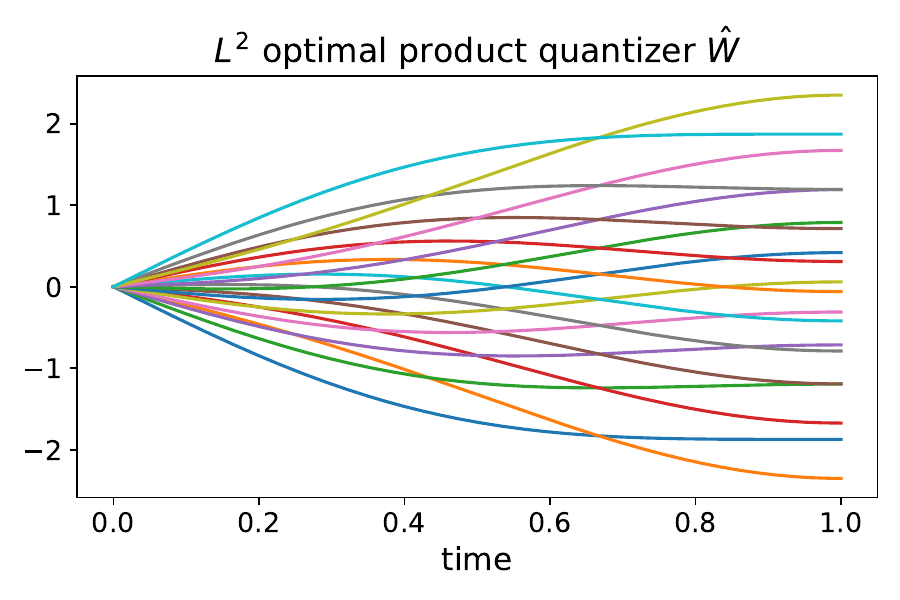}
    \caption{$L^2$-optimal product quantizer $\widehat W$ of the Brownian Motion, $N = N_1 \times N_2 = 10 \times 2 = 20$, T = 1.}\label{fig:bm}
  \end{figure}

\subsection{VIX {derivative} pricing via quantization}

We recall the explicit expression for the VIX in \eqref{eq:VIXclosed} involving the process $Z^u_T$ in \eqref{eq:ZuT}.

\subsubsection{Non-Markovian case}
\paragraph{Quantization of $Z_T^u$.} Fix $T>0$, {as done in \cite[Section 4]{bonesini2021functional}}, we will use the optimal product quantization $\widehat W$ in \eqref{eq:hatW} to build a functional quantizer $\widehat Z_T$ of the process $(Z_T^u)_{u \in [T, T+\Delta]}$ in \eqref{eq:ZuT}  as
\begin{align}
\widehat Z_T^u &= \int_0^T K(u-s)d \widehat W_s = \int_0^T K(u-s)\Dot{\widehat{W}}_sds  \\
&=  \sum_{k=1}^m \sqrt{\lambda_k}\widehat Y_k^{N_k}\int_0^T K(u-s)\Dot{e}_k(s) ds,\quad  \quad \quad   u \in [T,T+\Delta], \label{Z_quantized}
\end{align}
where $\dot f$ denotes the derivative of the function $t\mapsto f(t)$.

The quantization of $Z_T$ thus requires the computation of the integrals $\left(\int_0^T K(u-s)\Dot{e}_k(s) ds\right)_{u\geq T}$ which can be approximated numerically for general kernels.  For the fractional kernel {$K^{{frac}}$} and shifted fractional kernel  {$K^{shift}$}, recall  Table~\ref{tablekernels}, these quantities can be computed explicitly as specified in Appendix~\ref{A:formula}. For the log-modulated kernel {$K^{{log}}$}, we were unable to derive a closed-form solution and resorted to using numerical integration techniques to compute $\int_0^T K(u-s)\Dot{e}_k(s) ds$ directly {(e.g.~Gaussian quadrature, which seems to work well in practice)}.

The quantization of $Z_T$ is then straightforward. For  a tuple $\text{\underline{j}} \in \prod_{i=1}^{m}\{1,2,\ldots,N_i\}$,  we define $(z_{\underline{j}}^u)_{u \in [T,T+\Delta]}$ as the $\text{\underline{j}}^{\text{th}}$  trajectory of $(\widehat Z_T^u)_{u\in [T,T+\Delta]}$, formed through \eqref{Z_quantized} in the sense
\[\left(\widehat Y^{(N_k)}_k=y_{\underline{j}_k}^k, k \in \{1,\ldots,m\}, N_1 \times N_2 \times \ldots \times N_m\leq N \right)
\]
where $y_{\underline{j}_k}^k$ is the same as those defined in \eqref{eq:bm_quant_traj}, with the associated probability given by
\[p_{\underline{j}} =  \mathbb{Q}\left(\widehat Z_T^{u} = z_{\underline{j}}^{u}, T\leq u \leq T + \Delta \right) = \prod_{k=1}^{m} \mathbb{Q}\left(\widehat Y^{(N_k)}_k=y_{\underline{j}_k}^k\right)
\]
by the independence of $(\widehat Y^{(N_k)}_k)_{k=1,\ldots, m}$, where we recall that the quantities appearing on the right-hand side are explicitly given by \eqref{eq:proba1d}. 

Likewise to the quantized Brownian motion, we ease notations by identifying the tuple $\underline{j}$ with a corresponding index $j \in \{1,\ldots,N\}$, so that $(z_j^u)_{u \in [T,T+\Delta]}$ denotes the $j$-th trajectory of the quantizer $(\widehat Z^u_T)_{u \in [T,T+\Delta]}$ with its associated discrete probability $p_j{:=p_{\underline{j}}}$.

\paragraph{Quantization of VIX$_T$.} Once the quantizer $\widehat Z_T$ is computed,  we can plug it in the expression for the VIX in \eqref{eq:VIXclosed} to obtain the quantized version $\widehat{\mbox{VIX}}_T^2$
\begin{equation}
    \widehat{\mbox{VIX}}_T^2 = \frac{100^2}{\Delta} \sum_{k=0}^{2M}\sum_{i=0}^{k} (\alpha *\alpha)_k \binom{k}{i}\int_T^{T+\Delta} \frac{\xi_0(u)}{g(u)}\mathbb{E}\left[(G_T^u)^i \right] (\widehat Z_T^u)^{k-i}du.
\end{equation}
For $j=1,\ldots, N$, the $j$-th quantized  point $v_j$ for the $\mbox{VIX}_T^2$ is explicitly given by
\begin{equation}\label{vj}
v_j := \frac{100^2}{\Delta}\sum_{k=0}^{2M}\sum_{i=0}^{k} (\alpha *\alpha)_k \binom{k}{i}\int_T^{T+\Delta} \frac{\xi_0(u)}{g(u)}\mathbb{E}\left[(G_T^u)^i\right] (z_j^u)^{k-i} du.
\end{equation}
Recall that the moments $\E[(G^u_T)^i]$ are explicitly given by \eqref{eq:momentgaussian}, so that the integral in \eqref{vj}:
\[\int_T^{T+\Delta} \frac{\xi_0(u)}{g(u)}\mathbb{E}\left[(G_T^u)^i\right] (z_j^u)^{k-i}(u) du
\]
can be approximated efficiently (our numerical implementation shows that $50$ points between $T$ and $T+\Delta$ are largely sufficient using Gaussian quadrature).

With the quantized $\widehat{\mbox{VIX}}_T^2$, we can now price quickly a variety of  VIX derivatives with payoff function $\Phi$ by
\begin{align}\label{eq:VIXpayoffquantized}
\mathbb{E}[\Phi({\mbox{VIX}}_T)] \approx \mathbb{E}[\Phi(\widehat{\mbox{VIX}}_T)] = \sum_{j=1}^{N}\Phi(\sqrt{v_j})p_j.
\end{align}
In particular, we can price VIX Futures and VIX call options as follows:
  \begin{equation}\label{vix_future_quant}
    F_0^{\text{\scriptsize VIX}_{T}} = \mathbb{E}[\text{VIX}_T] \approx \sum_{j=1}^{N}\sqrt{v_j}p_j, \quad 
     C_0^{\text{\scriptsize VIX}_{T}} = \mathbb{E}\left[\left(\mbox{VIX}_T-K\right)^{+}\right] \approx \sum_{j=1}^{N}  (\sqrt{v_j}-K)^{+}p_j.
  \end{equation}

\paragraph{The moment-matching trick.} To further improve the accuracy of the  VIX pricing via quantization, we propose a moment-matching trick, that is
\begin{align}\label{eq:momentmatching}
\Tilde{z}_j^u := z_j^u \left(\frac{\mathbb{E}[(Z_T^u)^q]}{\sum_{i=1}^N (z_i^u)^q p_i}\right)^{\frac{1}{q}}, \quad  1\leq j \leq N,
\end{align}
for some {even} integer $q$ (since $Z_T$ has zero odd moments) and for each time step $u \in \left[T, T+\Delta \right]$ and then substitute $\Tilde{z}_j^u$ into \eqref{vj}. {This way, we can match the $q$-th moment of the quantizer $\widehat Z_T^u$ with that of $Z_T^u$.} We suggest using $q=4$ based on numerical experiments. {Note that 
\begin{align}
  \mathbb{E}\left[(Z_T^u)^q \right] =
      \left(\int_{u-T}^{u}K^2(s)ds\right)^{\frac{q}{2}}(q-1)!!
\end{align}
can be computed explicitly for all non-Markovian kernels in this paper for even integer $q$.} {Figure~\ref{fig:Z_T} shows the moment-matching trick results in quantized trajectories being more spread out, and this turns out to speed up convergence in VIX derivative pricing as we will see later.}

  \begin{figure}[H]
    \centering
    \includegraphics[scale=0.5]{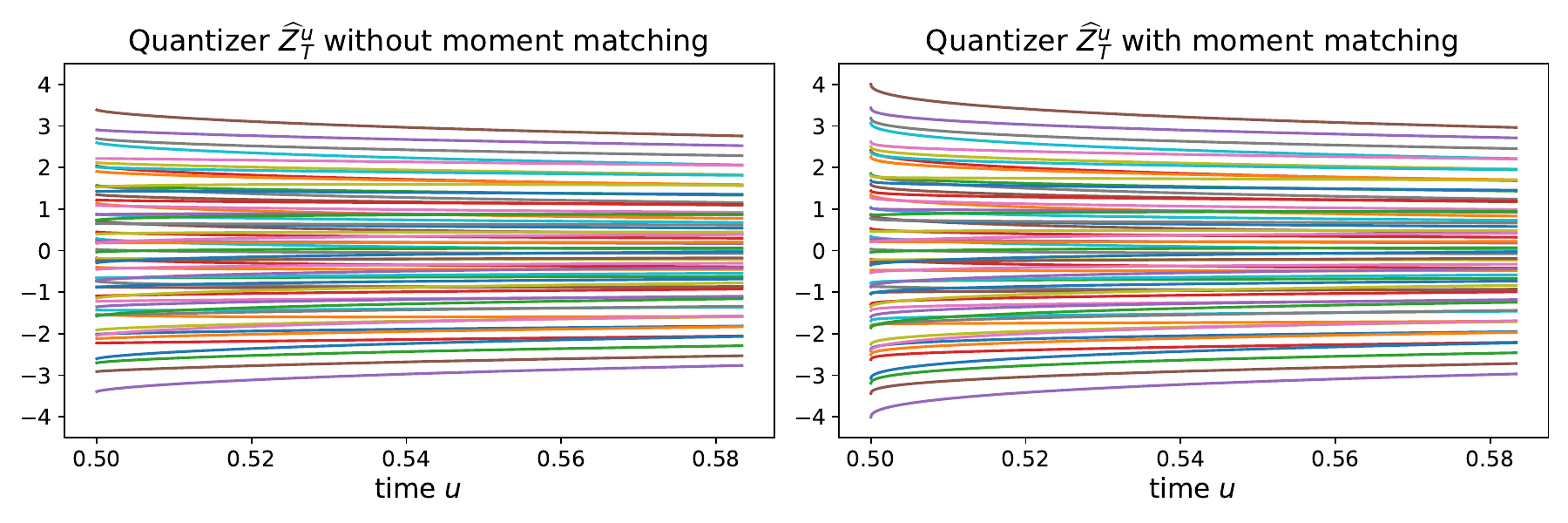}
    \caption{Comparison of quantizer $\widehat Z_T^u$ using the Volterra fractional kernel $K^{frac}$: LHS is without moment matching, RHS is with moment matching; $T = 0.5, H = 0.05$, $N = N_1 \times N_2 \times N_3 = 10 \times 3 \times 2 = 60$.}
    \label{fig:Z_T}
  \end{figure}

{We stress that such a moment matching trick improves the quality of the quantization considerably, both in terms of  VIX Futures prices and VIX implied volatility smile.} Indeed, as shown in Figure~\ref{Fig:momentmatching1}, quantization without the moment-matching trick is unusable in practice for the fractional kernel with small values of $H$ even with a lot of quantization trajectories $N=100,000$. See also   \cite[Figure 3]{bonesini2021functional}, where the number of quantized points was pushed as far as $N=1,000,000$ but the approximation is still well off the correct values due to the extremely slow convergence rate of the quantization of fractional process in the order of $1/(\log N)^{H}$, see  \cite{dereich2006high}.  After moment-matching, we achieve very accurate results with way fewer quantization points and faster convergence: Figure \ref{Fig:momentmatching2} shows the convergence of VIX pricing using quantization, where $N=200$ seems largely sufficient for pricing and calibrating VIX derivatives! (The Monte Carlo benchmark results are obtained using one million simulations with antithetic variables, with total time steps of $50$ between $T$ and $T+\Delta$.)

  \begin{figure}[H]
    \centering
    \includegraphics[scale=0.5]{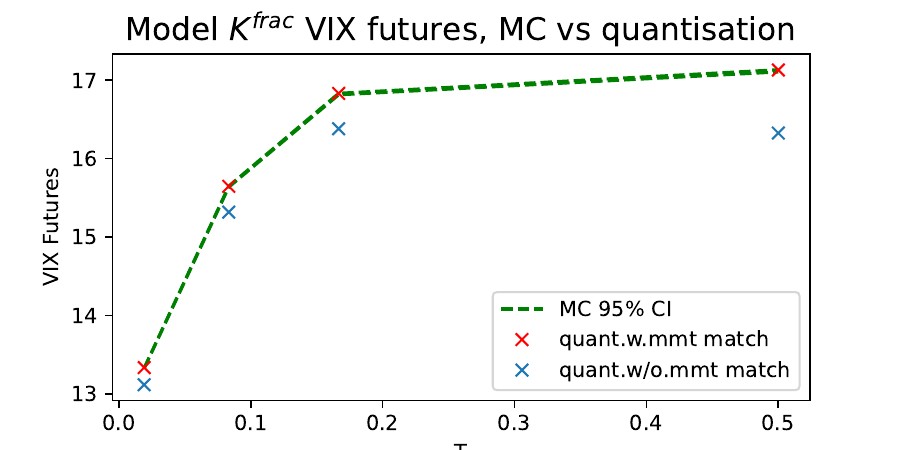}
    \includegraphics[scale=0.5]{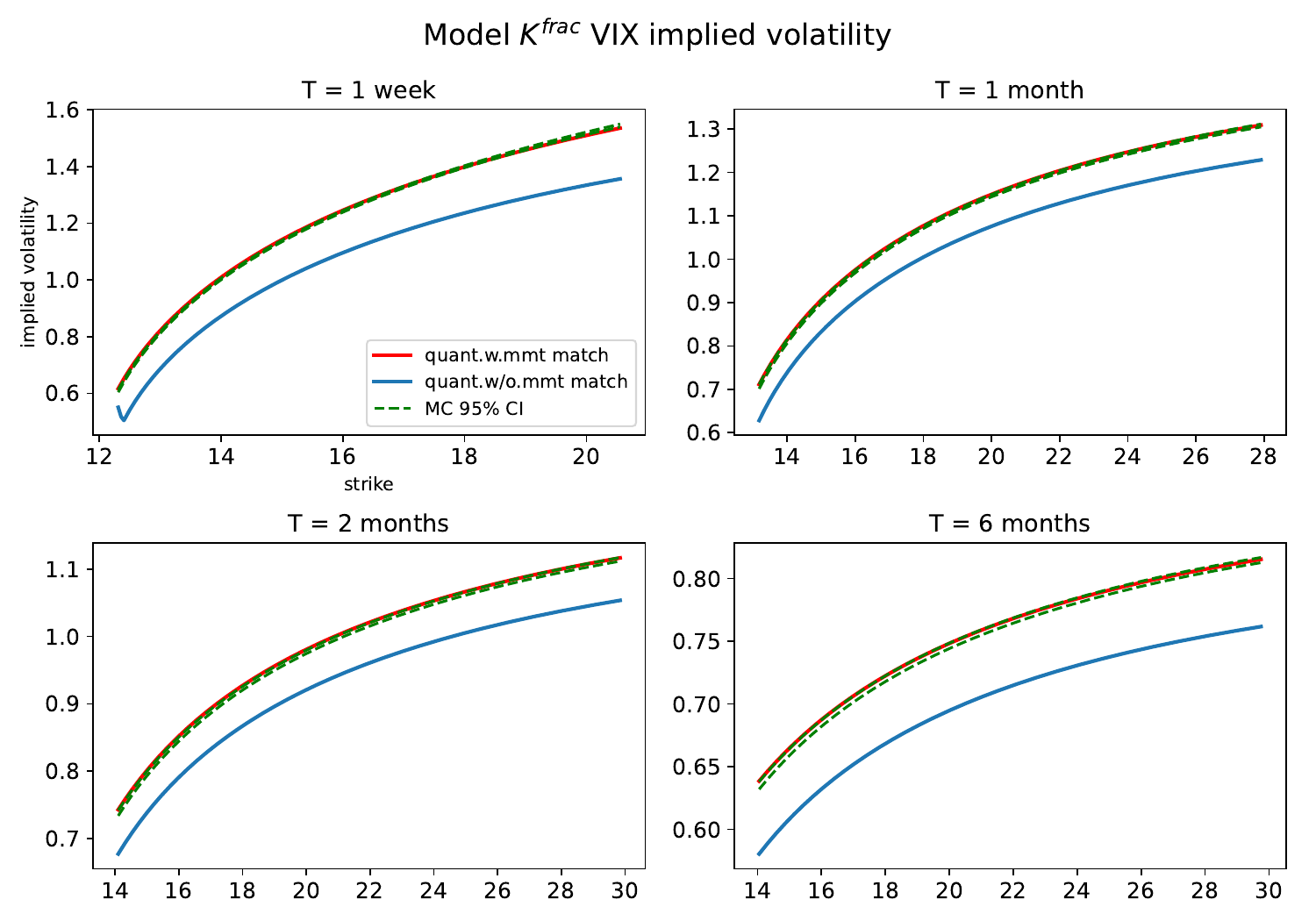}
    \caption{VIX Futures term structure and VIX call options implied volatility with  $N=100,000$ quantized trajectories of $\widehat Z_T$ with moment matching (in red), without moment matching (blue) versus Monte-Carlo prices (dashed green) with one million trajectories and 50 time steps, for the fractional kernel $K^{frac}$ with Hurst index  $H=0.05$ and parameters $\xi_0(t) = 0.005e^{-8t}+0.04(1-e^{-8t})$, $(\alpha_0,\alpha_1,\alpha_3,\alpha_5) = (0.01,1,0.214,0.227)$.}
    \label{Fig:momentmatching1}
  \end{figure}

  \begin{figure}[H]
    \centering
    \includegraphics[scale=0.6]{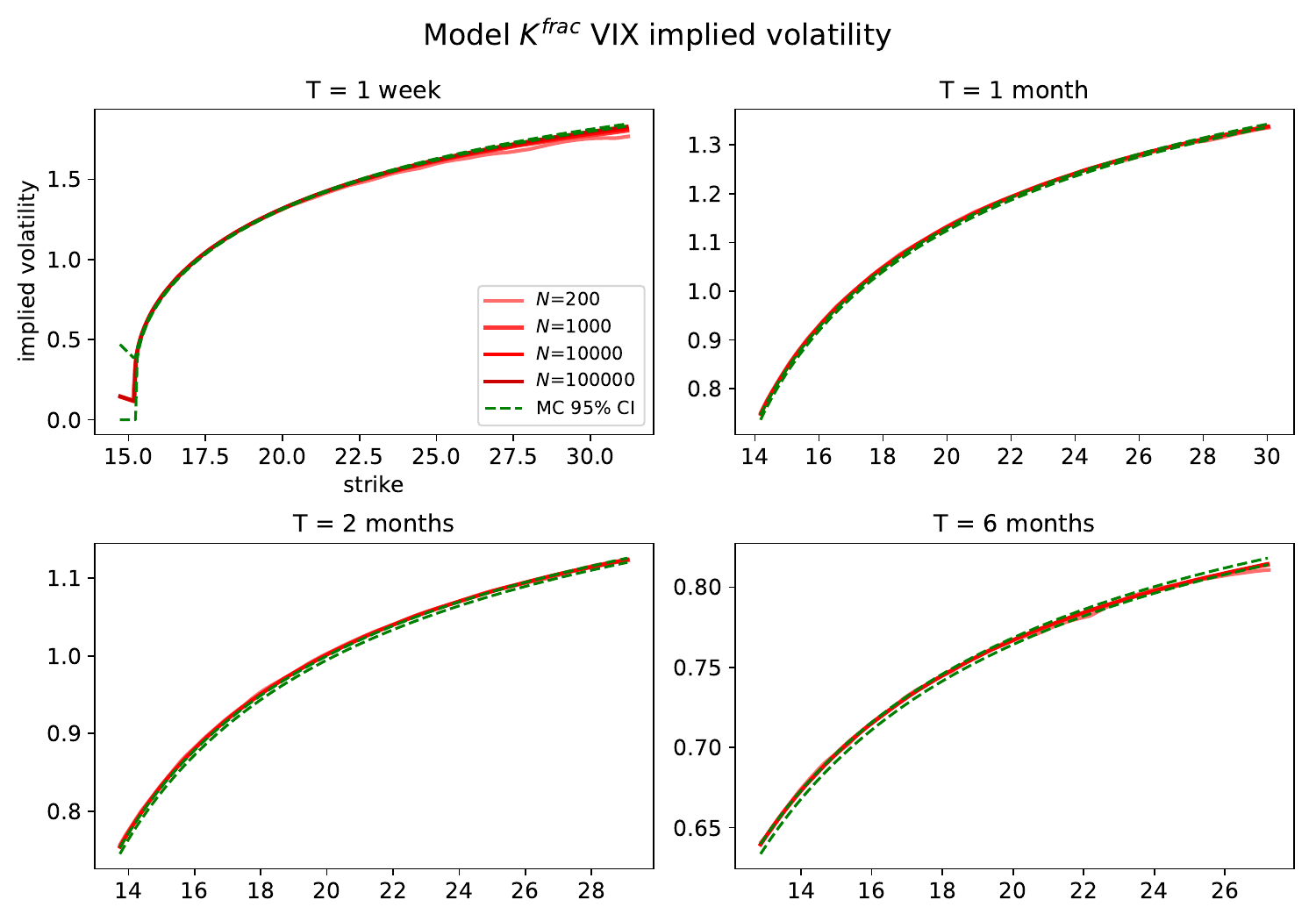}
    \caption{{Convergence of the implied volatility of VIX call options using quantization with varying $N$ number of trajectories of $\widehat Z_T$ under the fractional kernel $K^{frac}$ after moment matching. The Monte-Carlo prices (dashed green) are estimated with one million trajectories and 50 time steps. $H=0.05$, $\xi_0(t) = 0.03$, $(\alpha_0,\alpha_1,\alpha_3,\alpha_5) = (0.01,1,0.214,0.227)$}.}
  \label{Fig:momentmatching2}
  \end{figure}

\subsubsection{Markovian case}
In the case of the exponential kernel $K^{exp}(t)={\varepsilon^{H-1/2}}e^{-(1/2-H){\varepsilon^{-1}} t}$, the expression of $Z_T^u$  in \eqref{eq:ZuT}  can be simplified to  $Z_T^u = X_T e^{-{(1/2-H)\varepsilon^{-1}}(u-T)}$ with $X_T \sim  \mathcal{N} \left(0,{\varepsilon^{2H-1}} \frac{1-e^{-(1-2H){\varepsilon^{-1}} T}}{(1-2H){\varepsilon^{-1}}}\right)$. So that  VIX$_T$ is just a function of ${\varepsilon^{H-1/2}} \sqrt{\frac{1-e^{-(1-2H){\varepsilon^{-1}} T}}{(1-2H){\varepsilon^{-1}}}}Y$ with $Y$ a standard Gaussian
{\begin{align}
    \mbox{VIX}_T^2 &= \frac{100^2}{\Delta} \sum_{k=0}^{2M}\sum_{i=0}^{k} (\alpha *\alpha)_k \binom{k}{i}\left({\varepsilon^{H-1/2}} \sqrt{\frac{1-e^{-(1-2H){\varepsilon^{-1}} T}}{(1-2H){\varepsilon^{-1}}}}Y\right)^{k-i}\\
    &\quad \quad \quad \quad\quad \quad \quad \quad \times \int_T^{T+\Delta} \frac{\xi_0(u)}{g(u)}\mathbb{E}\left[(G_T^u)^i \right] (e^{-{(1/2-H)\varepsilon^{-1}}(u-T)(k-i)})du. \label{vix2_exp_kernel}
\end{align}}

Applying the $L^2$-optimal quantizer on $Y$ as in Section~\ref{S:l2gaussian},  for $j=1,\ldots, N$, the $j$-th quantized  point $v_j$ for the $\widehat {\mbox{VIX}}_T$ is explicitly given by
\begin{align}
v_j &= \frac{100^2}{\Delta}\sum_{k=0}^{2M}\sum_{i=0}^{k} y_j^{k-i} (\alpha *\alpha)_k \binom{k}{i}{\varepsilon^{(H-1/2)(k-i)}} \left(\frac{1-e^{-(1-2H){\varepsilon^{-1}} T}}{(1-2H){\varepsilon^{-1}}}\right)^{\frac{k-i}{2}}\\
& \quad \quad \quad \quad\quad \quad \quad \quad \times \int_T^{T+\Delta} \frac{\xi_0(u)}{g(u)}\mathbb{E}\left[(G_T^u)^i\right] e^{-(1/2-H){\varepsilon^{-1}}(u-T)(k-i)}du,
\end{align}
where $\{y_1,\dotsm, y_{N}\}$ are the quantized points of $\widehat Y$ with corresponding probabilities  $p_j  =  \mathbb{Q}(\widehat Y = y_j)$. The VIX derivatives can then be computed by plugging $(v_j,p_j)_{j=1,\ldots, N}$ in the formula  \eqref{eq:VIXpayoffquantized}. For the exponential kernel, the convergence is a lot faster {after applying a similar moment-matching trick directly on the quantized points of $\widehat Y$, with $q$ an even integer
\begin{align}\label{eq:momentmatching_markovian}
\Tilde{y}_j := y_j \left(\frac{\mathbb{E}[Y^q]}{\sum_{i=1}^N (y_i)^q p_i}\right)^{\frac{1}{q}}, \quad  1\leq j \leq N.
\end{align}
}

Th convergence of VIX pricing using quantization in the Markovian case is shown in Figure~\ref{convergence_exp_K}.

  \begin{figure}[H]
    \centering
    \includegraphics[scale=0.6]{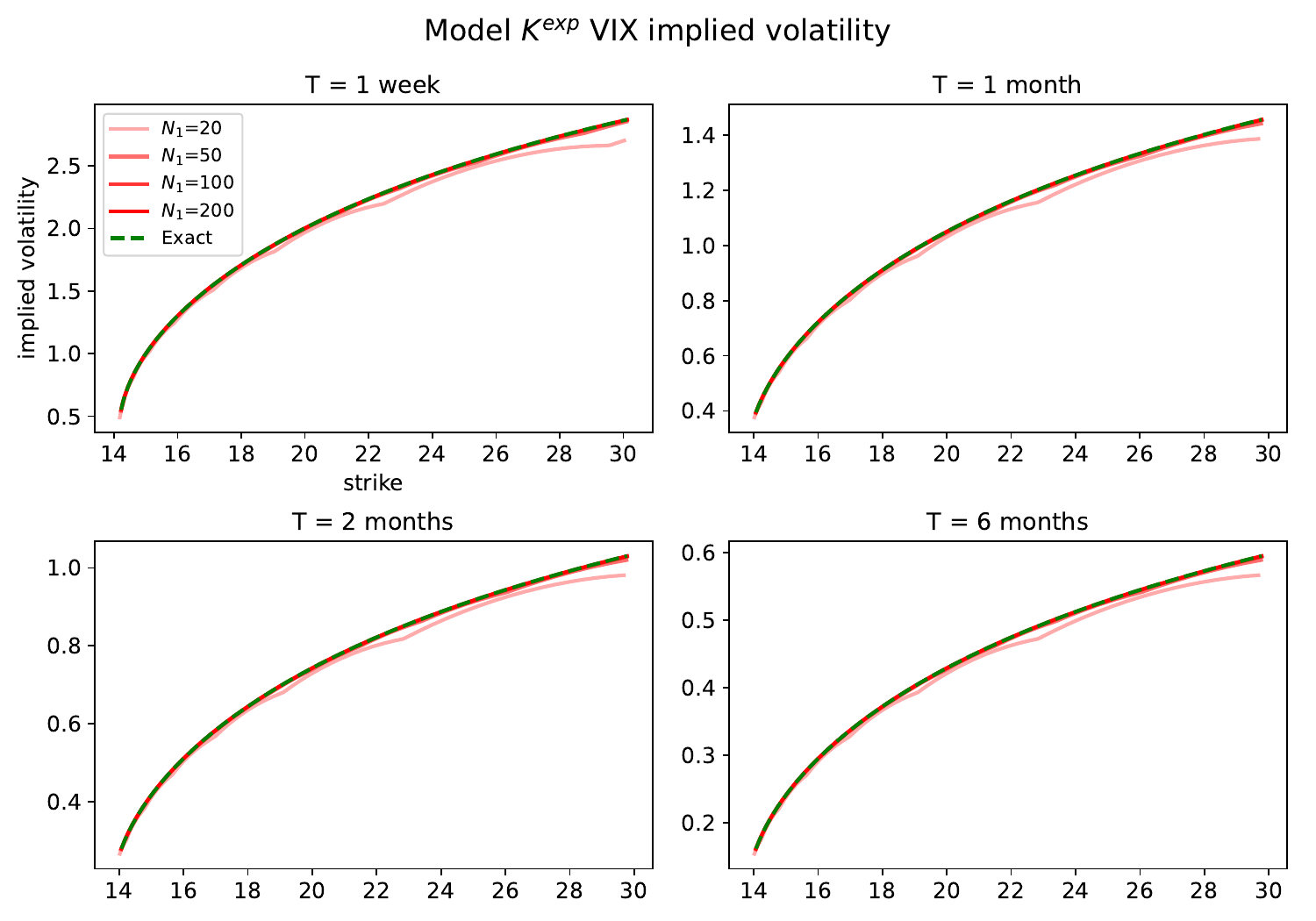}
    \caption{Convergence of the implied volatility of VIX call options using quantization with varying $N$ number of trajectories of $\widehat Z_T$ under the exponential kernel $K^{exp}$ after moment matching. The dotted green line represents VIX implied volatility calculated by integrating the payoff function of expression \eqref{vix2_exp_kernel} directly against the Gaussian density. $H=-0.2, \varepsilon = 1/52$, $\xi_0(\cdot) = 0.03$, $(\alpha_0,\alpha_1,\alpha_3,\alpha_5) = (0.01,1,0.214,0.227)$.}
    \label{convergence_exp_K}\end{figure}

{
Since $\text{VIX}$ is a function of Gaussian variable $Y$ in \eqref{vix2_exp_kernel}, one can also price VIX derivatives by integrating the payoff function against the Gaussian density numerically, see \cite{jaber2022quintic}. Nevertheless, we chose to price $\text{VIX}$ derivatives using functional quantization in this paper to ensure a fair comparison across all kernels and showcase the wide application of quantization in VIX derivative pricing. 
}

\section{Fast pricing of SPX options via Neural Networks with Quantization hints}\label{S:SPXpricing}

\subsection{Quantization techniques to SPX derivatives}\label{S:Quantization}
We now explain how to extend the quantization ideas to approximate derivative prices $P_0$ on $S$ of the form
\begin{align}
P_0:&= \mathbb{E}\left[F(\log(S_{T}))\right],
\end{align}
for some payoff function $F$. The process $\log(S)$ in \eqref{polynomial_model} reads
\[
\log(S_T) = \log(S_0)-\frac{1}{2}U_T+\rho V_T+\sqrt{1-\rho^{2}}V_T^{\perp},
\]
with
\[U_T=\int_{0}^{T}\sigma_{s}^{2}ds,\hspace{0.2cm} V_T=\int_{0}^{T}\sigma_{s}dW_{s},\hspace{0.2cm} V_T^{\perp}=\int_{0}^{T}\sigma_{s}dW_{s}^{\perp}.
\]
We denote by  $(\mathcal{F}^{W}_t)_{t\geq 0}$, the natural filtration generated by the Brownian motion $W$.

Using the celebrated conditioning argument on $\mathcal{F}_{{T}}^{W}$ in stochastic volatility models of \cite{renault1996option} combined with the fact that $\mathbb{E}\left[V_T^{\perp} \mid \mathcal{F}^{W}_T \right]$ is Gaussian with conditional mean zero and conditional variance $U_T$, we get  
\[
\begin{aligned}
P_0&= \mathbb{E}[F(\log(S_{T}))]\\
&=\mathbb{E}\left[\mathbb{E}\left[F\left(\log(S_0)-\frac{1}{2}U_T+\rho V_T+\sqrt{1-\rho^{2}}\ V_T^{\perp}\right)\Mid \mathcal{F}^{W}_T\right]\right]\\
&=\mathbb{E}\left[\Tilde F\left(U_T,V_T\right)\right]
\end{aligned}
\]
after applying properties of conditional expectations with the  deterministic function $\Tilde F$
\[
\Tilde F\left(u,v\right) = \int_{\mathbb{R}}F\left(\log(S_0)-\frac{1}{2}u+\rho v + \sqrt{(1-\rho^2)u}z\right)e^{-z^2/2}dz/\sqrt{2 \pi}.
\]

With the help of quantization, we then  approximate $P_0$ by
\[
\begin{aligned}\label{general_spx_pricing_formula}
P_0 \approx \sum_{j=1}^{N} p_j \Tilde F(u_j,v_j) \end{aligned}
\]
with  $(u_j,v_j,p_j)_{1 \leq j \leq N}$ the tuple of quantized points of $U_T$, $V_T$ and their associated discrete probability $p_j$ to be defined in the coming sections.

\begin{example}
For instance, for the case of a European Call option,  with payoff  $F(x) = (e^x-K)^{+}$, we have 
\[
\Tilde F(U_T,V_T) =\mathcal{BS}^{call}\left(S_0\exp\left(-\frac{1}{2}\rho^{2}U_T+\rho V_T\right),\sqrt{\frac{(1-\rho^{2})U_T}{T}},T,K\right)
\]
with
\[\mathcal{BS}^{call} (x,\sigma,T,K) = x \mathcal{N}(d_1)-K \mathcal{N}(d_2), 
\]
\[
d_1 = \frac{\log\left(\frac{x}{K}\right)+\frac{1}{2} \sigma^2T}{\sigma \sqrt{T}}, \quad d_2 = d_1 - \sigma \sqrt{T},
\]
with  $\mathcal{N}(x)= \int_{-\infty}^x e^{-z^2/2}dz/\sqrt{2\pi}$ the cumulative density function of standard Gaussian. 

With the help of quantization, we  price the call by
\begin{equation}\label{price_spx_quantization}
C_0 \approx \sum_{j=1}^{N} p_j \mathcal{BS}^{call}\left(S_0\exp\left(-\frac{1}{2}\rho^{2}u_{j}+\rho v_{j}\right),\sqrt{\frac{(1-\rho^{2})u_{j}}{T}},T,K\right).
\end{equation}
\end{example}

 The task now is to quantize the two random variables ($U_T, V_T)$. 
 
\subsubsection{Quantization of \texorpdfstring{$U_T$}{Lg}}

Since the path of $U$ is determined by the volatility process $\sigma$, which is itself determined by the path of $X$, the first task is to quantize $X$. Let's define the quantizer $\widehat X$ as
\begin{equation}\label{X_quantized}
\widehat X_t = \int_0^t K(t-s)d \widehat W_s = \int_0^t K(t-s)\Dot{\widehat{W_s}}ds  = \sum_{k=1}^m \sqrt{\lambda_k}\widehat Y_k^{N_k}\int_0^t K(t-s)\Dot{e}_k(s) ds.
\end{equation}

For general kernel $K$, the integral $\int_0^t K(t-s)\Dot{e}_k(s) ds$ can be computed numerically  (e.g. Gaussian quadrature). For the cases of fractional kernel $K^{frac}$, shifted fractional $K^{shift}$ and exponential kernel $K^{exp}$ this integral can be computed explicitly in Appendix \ref{A:formula_Xt}.

{Figures \ref{fig:X_t} and  \ref{fig:X_t_fmr} show the quantized trajectories of $\widehat X$ under the fractional kernel and exponential kernel for different values of $H$. Recall that the exponential kernel $K^{exp}$ is well-defined for negative values of $H$.} Compared to the trajectories of $\widehat Z_T$ in Figure \ref{fig:Z_T}, the trajectories of $\widehat X$ seem more ``noisy" and tend to cross each other more often.

  \begin{figure}[H]
    \centering
    \includegraphics[scale=0.5]{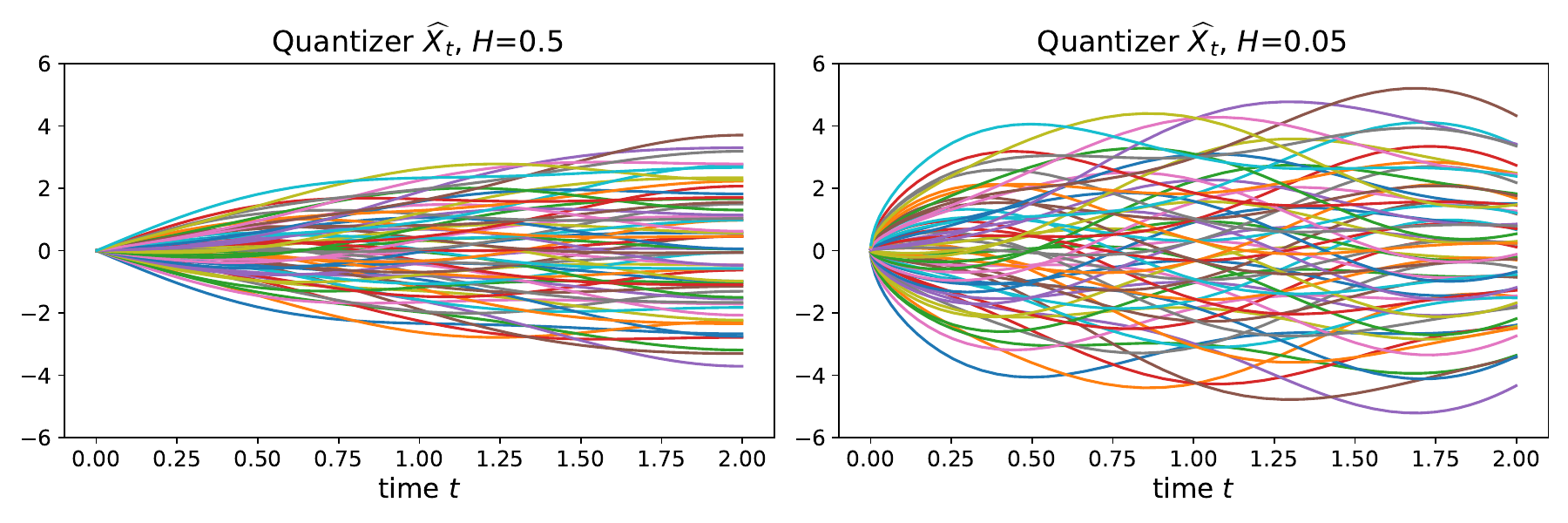}
    \caption{Quantizer $\widehat X_t$ using the fractional kernel $K^{frac}$, LHS is $H = 1/2$, RHS is $H=0.05$; $T = 2, N = N_1 \times N_2 \times N_3 = 10 \times 3 \times 2 = 60$.}
    \label{fig:X_t}
    
  \end{figure}
  
    \begin{figure}[H]
    \centering
    \includegraphics[scale=0.5]{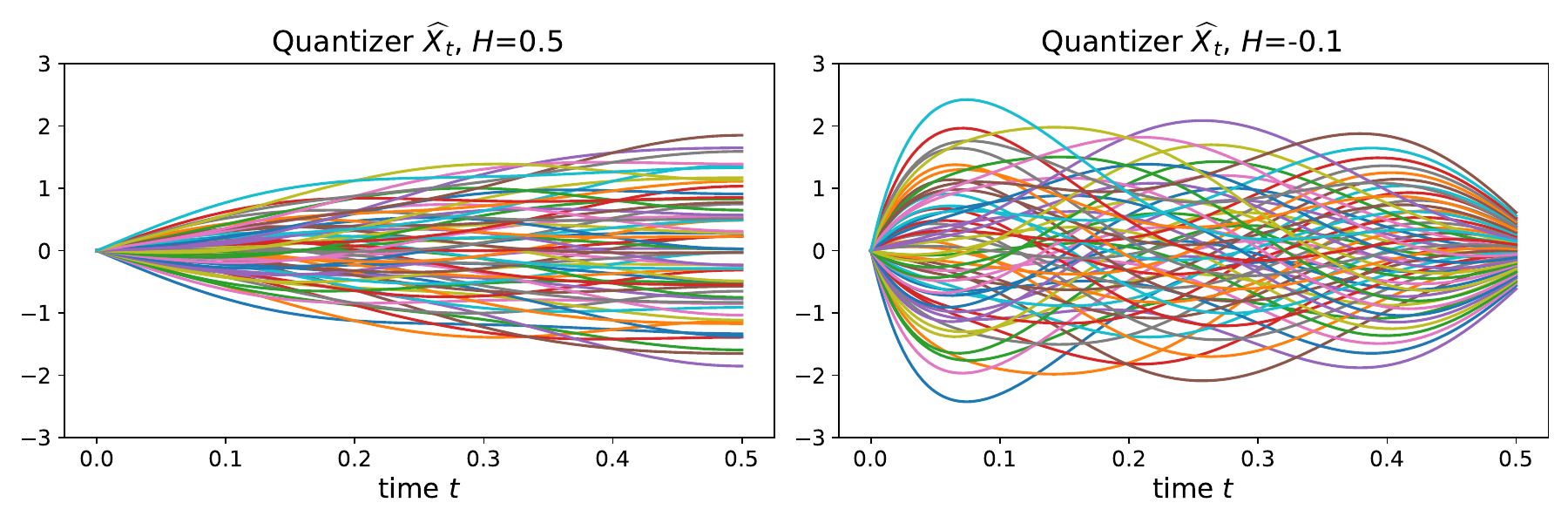}
    \caption{Quantizer $\widehat X_t$ using the exponential kernel $K^{exp}$, LHS is $H = 0.5$, RHS is $H=-0.1$; $\varepsilon = 1/52, T = 0.5, N = N_1 \times N_2 \times N_3 = 10 \times 3 \times 2 = 60$.}
    \label{fig:X_t_fmr}
  \end{figure}

We now define $(x^{\underline{j}}_t)_{t\geq 0}$ as the $\underline{j}$-th trajectory at time $t$ of the quantizer $(\widehat X_t)_{t\geq 0}$ in \eqref{X_quantized} with associated probability $p_{\underline{j}}$ as 
\[p_{\underline{j}} = \mathbb{Q}\left(\widehat X_t = x_t^{\underline{j}}, 0\leq t \leq T \right) = \prod_{k=1}^{m} \mathbb{Q}\left(\widehat Y^{(N_k)}_k=y_{\underline{j}_k}^k\right).
\] 
Again, to ease notations we identify the tuple $\underline{j}$ with a corresponding index $j \in \{1,\ldots,N\}$, so that $(x_t^j)_{t \in [0,T]}$ denotes the $j$-th trajectory of the quantizer $(\widehat X_t)_{t \in [0,T]}$ with its associated discrete probability $p_j:=p_{\underline{j}}$.

Define the quantizer $\widehat \sigma$  in \eqref{polynomial_model} as
\[
\widehat \sigma_t = \sqrt{\xi_0(t)}\frac{p(\widehat X_t)}{\sqrt{\E\left[p( X_t)^2\right]}}, \quad t \geq 0.\\
\]
We are now ready to define the quantizer $\widehat U$ as
\[
\widehat U_T=\int_{0}^{T} \widehat \sigma_{s}^{2}ds.
\]

This is a Riemann integral, which can be easily computed trajectory by trajectory using numerical integration {in the form of
\[
\int_{0}^{T} \widehat \sigma_{s}^{2}ds \approx \sum_{p=1}^{n_T}\widehat \sigma^2_{t_p}w_p,
\]
for some numerical quadrature methods with weights $w$. For our numerical experiments, we used Gaussian quadrature with $n_T = 50$, but other quadratures are possible.
}
We denote $u_T^j$ as the $j$-th quantized point of $\widehat U_T$, with the associated trajectory probability $p_j$ carried over directly from the quantizer $\widehat X$.

\subsubsection{Quantization of \texorpdfstring{$V_T$}{Lg}}
The quantization of $V_T$ is more intricate. To understand the intricacy, let us first look at the case where $\sigma$ is a semi-martingale.

\paragraph{Semi-martingale case.} This is the case for smooth enough kernels as shown in the following proposition.

\begin{proposition}\label{P:semi}
Assume that $K$ is absolutely continuous on  $[0,T]$ with a locally square-integrable derivative $K'$. Then, the process 
$$ X_t = \int_0^t K(t-s)dW_s  $$
is a semi-martingale with dynamics:
\begin{align*}
    dX_t = K(0) dW_t + \left(  \int_0^t K'(t-s) dW_s \right) dt.
\end{align*}
\end{proposition}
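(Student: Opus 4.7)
The plan is to use the absolute continuity of $K$ to turn the Volterra convolution into a Brownian term plus a drift term, via the fundamental theorem of calculus combined with a stochastic Fubini exchange.

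First, I would write, for $0 \le s \le t$,
\begin{equation}
K(t-s) = K(0) + \int_s^t K'(u-s)\, du,
\end{equation}
which is legitimate by absolute continuity of $K$ on $[0,T]$ (note the standard change of variables $v = u-s$ hidden in the integration bounds). Substituting into the definition of $X_t$, I obtain
\begin{equation}
X_t = \int_0^t K(0)\, dW_s + \int_0^t \left(\int_s^t K'(u-s)\, du\right) dW_s = K(0)\, W_t + \int_0^t \int_s^t K'(u-s)\, du\, dW_s.
\end{equation}

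The second step is to apply the stochastic Fubini theorem to swap the order of integration in the double integral, yielding
\begin{equation}
\int_0^t \int_s^t K'(u-s)\, du\, dW_s = \int_0^t \left(\int_0^u K'(u-s)\, dW_s\right) du.
\end{equation}
To justify this exchange I need to verify the standard integrability hypothesis, namely
\begin{equation}
\int_0^t \int_0^t \mathbf{1}_{\{s \le u\}}\, K'(u-s)^2\, ds\, du < \infty,
\end{equation}
which follows directly from the local square-integrability of $K'$ on $[0,T]$ via a change of variables, so this is the only analytic point requiring care. Once this is established, setting $A_u := \int_0^u K'(u-s)\, dW_s$ gives the representation
\begin{equation}
X_t = K(0)\, W_t + \int_0^t A_u\, du,
\end{equation}
which is explicitly the sum of a Brownian martingale and an absolutely continuous drift, hence $X$ is a semi-martingale with the claimed dynamics $dX_t = K(0)\, dW_t + A_t\, dt$.

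The main obstacle is the verification of the stochastic Fubini hypothesis; everything else reduces to an application of the fundamental theorem of calculus and bookkeeping. One should also briefly note that the process $A_u$ is well-defined as an It\^o integral for a.e.~$u \in [0,T]$ (again by local square-integrability of $K'$), and that it is progressively measurable, so the Lebesgue integral $\int_0^t A_u\, du$ makes pathwise sense $\mathbb{Q}$-a.s.
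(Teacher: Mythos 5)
Your proof is correct and takes essentially the same route as the paper: decompose $K(t-s)=K(0)+\int_0^{t-s}K'(v)\,dv$ via the fundamental theorem of calculus, substitute into the Volterra integral, and swap the order of integration by the stochastic Fubini theorem, justified by the local square-integrability of $K'$. The only difference is that you spell out the Fubini integrability check and the measurability of the drift a bit more explicitly, which is harmless added detail.
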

\begin{proof}
Using that $K= K(0) + \int_0^{\cdot}K'(s)ds$, we obtain  
\begin{align}
    X_t = K(0)W_t + \int_0^t \int_0^{t-s} K'(t-s-r) dr dW_{s}  = K(0)W_t + \int_0^t \int_0^{r} K'(r-s)  dW_s dr,
\end{align}

where the second equality follows from the stochastic Fubini theorem which applies since $K'$ is square-integrable. This ends the proof. 
\end{proof}

\begin{example}
The shifted fractional kernel $K^{shift}$ and the exponential kernel $K^{\exp}$ clearly satisfy the assumptions of Proposition~\ref{P:semi}. 
\end{example}

Based on the works of   \citet{wong1965convergence} and \citet{sellami2011convergence}, we have the convergence of the approximation $\int_0^T \widehat \sigma_s d\widehat W_s$ towards the Stratonovich integral $\int_0^T \sigma_s \circ dW_s$
\begin{align}\label{stratonovich_integral}
\int_0^T \widehat \sigma_s d\widehat W_s \xrightarrow[N\xrightarrow{}\infty]{} \int_0^T \sigma_s \circ dW_s = \int_0^T \sigma_s  d W_s + \frac 1 2 \left<\sigma,W \right>_T,
\end{align}
where $\langle \sigma, W\rangle$ is the quadratic covariation between the two semimartingales $\sigma$ and $W$ given by
\[
\langle \sigma, W \rangle_T = K(0)\int_0^T \sqrt{\xi_0(s)} \frac{p'(X_s)}{\sqrt{\mathbb{E}\left[p(X_s)^2\right]}}ds, \quad p'(x) = \sum_{k= 0}^{M} k \alpha_k x^{k-1}.
\]
We therefore construct 
 $\widehat {\left<\sigma, W \right>}_T$ as the quantized version of the quadratic covariation as follows
\begin{align}\label{eq:quadraticcov}
\widehat {\left<\sigma,W \right>}_T = K(0)\int_0^T \sqrt{\xi_0(s)} \frac{p'(\widehat X_s)}{\sqrt{\mathbb{E}\left[p(X_s)^2\right]}}ds.
\end{align}
 Using the identity \eqref{stratonovich_integral}, we  define the $\widehat V_T$ the quantizer of $V_T$ as  
\[
\widehat V_T = \int_0^T \widehat \sigma_s d\widehat W_s-\frac{1}{2}\widehat {\left<\sigma,W \right>}_T. 
\]
In practice, the quantizer $\widehat V_T$ is {also} computed numerically trajectory by trajectory. Likewise, we denote $v_T^j$ as the $j$-th trajectory of $\widehat V$ at time $T$, with the associated trajectory probability $p_j$ carried over directly from quantizer $\widehat X$.

\paragraph{Non semi-martingale case.}{If the process $(\sigma_t)_{t\geq 0}$ is not a semimartingale and has infinite quadratic variation, which is the case for the fractional $K^{frac}$ and the log-modulated kernels $K^{log}$  with $H<1/2$: the quadratic covariation $\langle \sigma, W \rangle$ explodes.  This can be seen informally on \eqref{eq:quadraticcov}  since the kernels are singular at $0$, i.e. ~$K^{frac}(0)=K^{log}(0)=+\infty$.  We suggest the following workaround.

{To avoid the explosion in \eqref{eq:quadraticcov}, we replace $K(0)$ with a positive constant $C$ so that
\[
\widehat {\left<\sigma,W \right>}_T^C = C\int_0^T \sqrt{\xi_0(s)} \frac{p'(\widehat X_s)}{\sqrt{\mathbb{E}\left[p(X_s)^2\right]}}ds
\]
and thus define the quantizer $\widehat V_T$ in the non semi-martingale setting as
\[
\widehat V_T = \int_0^T \widehat \sigma_s d\widehat W_s-\frac{1}{2}\widehat {\left<\sigma,W \right>}_T^C.
\]
A natural choice of $C$ is a value that enforces the centered martingale property of $\widehat V_T$, i.e.~$\E\left[\widehat V_T\right] = 0$. That is, we set
\begin{align}\label{eq:renormalization}
C = 2\frac{\E\left[\int_0^T \widehat \sigma_s d\widehat W_s\right]}{\int_0^T \sqrt{\xi_0(s)} \frac{\E\left[p'(\widehat X_s)\right]}{\sqrt{\mathbb{E}\left[p(X_s)^2\right]}}ds}
\end{align}
which can be computed numerically.

It is worth highlighting that $C$ diverges at the limit but acts as a renormalization constant. This bears resemblance to the renormalization theory \cite{hairer2014theory} and the approach in  \cite[Theorem 1.3]{bayer2020regularity}.
}}

\subsubsection{Numerical illustration}\label{ss:spx_quant_nums}
{For the numerical implementation, we propose two moment-matching tricks to improve convergence\footnote{{These moment-matching tricks are different from the one proposed for VIX quantization, recall \eqref{eq:momentmatching}, and seem to be more suited for pricing SPX derivatives numerically.}}. First, we consider the following modified quantizer of $\widehat W$
\begin{equation}\label{W_quant_match}
\widetilde{\widehat{W}}_t = \sum_{k=1}^{m} \sqrt{\lambda_{k}+\varepsilon_{\lambda}}e_k{(t)}\widehat{Y}_{k}^{\left(N_{k}\right)},
\end{equation}
where $\varepsilon_{\lambda}$ is set to match the trace of the covariance kernel of $X$, that is
\[
\mathbb{E} \left[\int_0^T \widehat{X}_s^2ds
\right] = \mathbb{E} \left[\int_0^T X_s^2ds
\right] =  \int_0^T \int_0^s K(u)^2duds,
\]
with $\widetilde{\widehat{W}}_t$ now entering into the definition of the quantizer $\widehat X$ in \eqref{X_quantized}.
The RHS of the integral above can be computed for all kernels in this paper. 

The second moment matching trick is to introduce a constant $b$ in front of $\widehat V_T$ such that
\[
\mathbb{E}\left[b \widehat V_T^2 \right] = \mathbb{E} \left[\int_0^T \sigma_s^2ds = 
\right] = \int_0^T \xi_0(s)ds,
\]
so that the new object $b \widehat V_T$ now matches both the first and second moments of $V_T$.
}

Figure \ref{fig:steinstein} shows that moment matching improves the quantization results. For $H=1/2$, the moment matching technique produces similar accuracy to that of Romberg interpolation proposed by \cite{pages2008quadratic}. However, for $H<1/2$, the moment matching tricks outperform Romberg interpolation.

  \begin{figure}[H]
    \centering
    \includegraphics[width=0.7\textwidth]{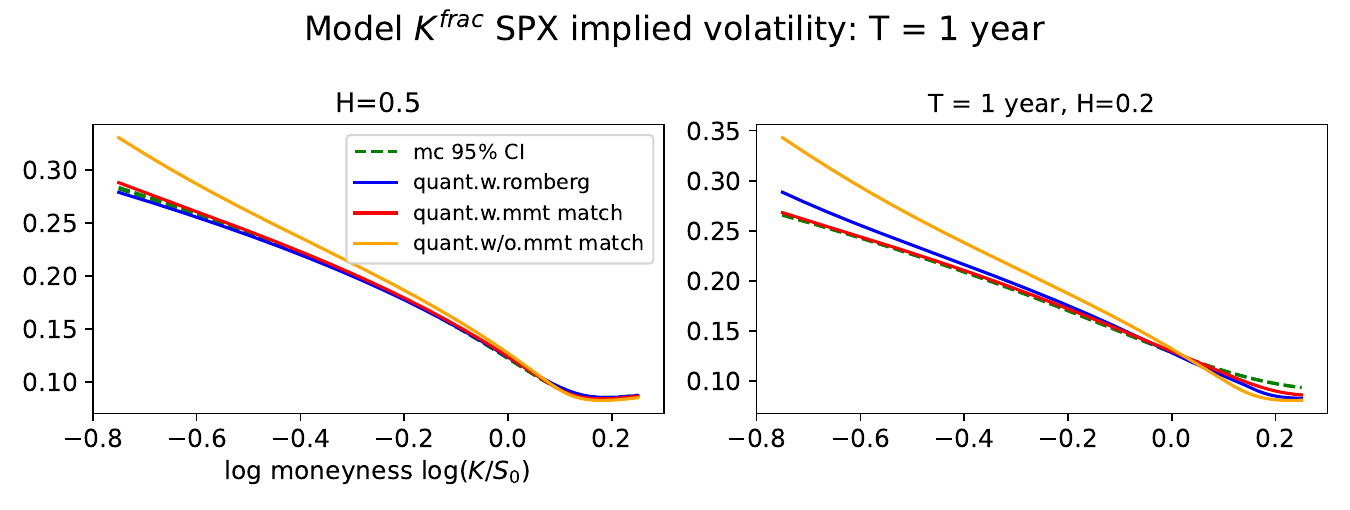}
    \caption{SPX implied volatility smile under the fractional kernel, $\rho = -0.8$, $\alpha = [0,1,0,0,0,0]$, $\xi_0(t) = 0.02$. The dotted green line is Monte Carlo; the blue line is quantization with Romberg interpolation between $N=1,000$ and $N=10,000$ as per \cite{pages2008quadratic}; the red line is quantization with 10,000 points applying moment matching; the orange line is quantization with 10,000 points without moment matching.\ }
    \label{fig:steinstein}  
  \end{figure}

Unfortunately, quantization approximation degrades as $H$ goes to zero, as shown in Figure \ref{fig:spx_quant_simple}. In the next section, we will discuss how to use Neural Networks to further improve quantization estimates, especially for lower $H$.

  \begin{figure}[H]
    \centering
    \includegraphics[width=0.8\textwidth]{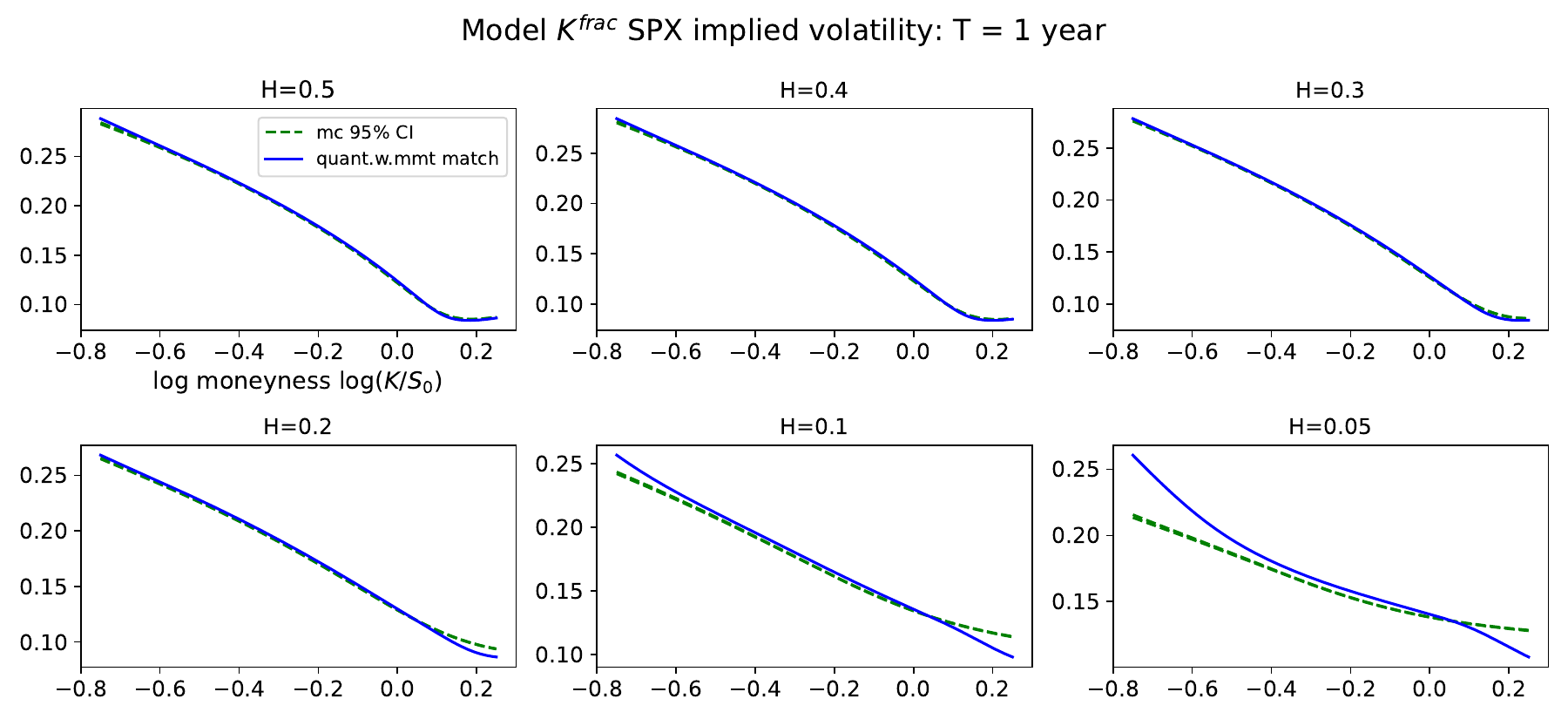}
    \caption{SPX smile under the fractional kernel $K^{frac}$ with different values of $H$, $\rho = -0.8$, $\alpha = [0,1,0,0,0,0]$, $\xi_0(t) = 0.02$. The Green dotted line is Monte Carlo; the blue line is quantization with 10,000 points after moment matching.} \label{fig:spx_quant_simple}
  \end{figure}

\subsection{Quantization Neural Network}\label{S:NN}

As we see in Figure \ref{fig:spx_quant_simple}, the results obtained through quantization degrade as 
$H$ gets closer to zero for the  singular kernels $K^{frac}$ and $K^{log}$. This is also true to some degree for the non-singular kernels $K^{shift}$ and $K^{exp}$ for lower values of $H$. To further improve the estimates on {SPX} options, we will use Neural Networks.

First, define $n_{\Theta}$ the dimension of the input of the Neural Networks, with $\Theta$ referring to the parameters of the Gaussian polynomial volatility model defined in \eqref{eq:THeta} (including the extra parameter $\beta$ for the log-modulated kernel $K^{log}$) together with the maturity parameter $T${, so that $n_{\Theta} = dim(\Theta)+1$.} Recall that we fixed $\varepsilon = 1/52$ (1 week) for the exponential kernel $K^{exp}$ and time shifted fractional kernel $K^{shift}$, and $\theta=0.1$ for the log-modulated fractional kernel $K^{log}$.

Next, we use Neural Networks (or 3 Neural Networks $\mathcal {NN}^1, \mathcal {NN}^2, \mathcal {NN}^3$) to modify existing quantized trajectories of $\widehat X$ in {\eqref{X_quantized}} and $d\widetilde{\widehat W}$ {the derivative of $\widetilde{\widehat{W}}$} in \eqref{W_quant_match}, as well as to tweak the probability vector $p = (p_1,\cdots,p_N)$ by
\begin{equation}\label{nn_equations}
\begin{aligned}
X^{\mathcal{NN}} & =  \widehat X + \mathcal{NN}^1,\\
dW^{\mathcal{NN}} & =  d{\widetilde{\widehat W}} + \mathcal{NN}^2, \\
p^{\mathcal{NN}} & = \text{softmax}(p+\mathcal{NN}^3 ),
\end{aligned}
\end{equation}
with the function $\text{softmax}: \mathbb{R}^N 	\rightarrow{(0,1)^N}, \text{softmax}(r)_i = \frac{e^{r_i}}{\sum_{j=1}^{N}e^{r_j}}$ for $r = (r_1, \dots, r_N) \in \mathbb{R}^N$, ensuring the output probabilities are positive and sum to $1$. Here, $\widehat X$ and $d\widetilde{\widehat W}$ are matrices of dimension $N\times n_T$ representing the $N$-quantizer with a discrete number of time steps $n_T$. $\mathcal{NN}$ are Neural Networks such that
\[
\begin{aligned}
& \mathcal{NN}^1, \mathcal{NN}^2: \mathbb{R}^{n_\Theta} \rightarrow{ \mathbb{R}^{N\times n_T}},\\
& \mathcal{NN}^3:\mathbb{R}^{n_\Theta}\rightarrow{\mathbb{R}^N}.
\end{aligned}
\]

At this point, we highlight that the forward variance curves $(\xi_0(t))_{t\leq T}$ is not part of the input parameters of the Neural Networks, since we want to tweak the trajectories of $\widehat X$,  d${\widetilde{\widehat W}}$ and the trajectory probabilities so that they remain $\mathbf{independent}$ of the shape of $(\xi_0(t))_{t\leq T}$. {The treatment of forward variance curves has always been challenging in deep pricing, for example in \cite{horvath2021deep,romer2022empirical}, where a large number of input parameters for the Neural Networks were required to incorporate piece-wise constant forward variance curves}. However, our Neural Networks approach {solves this problem by} 1) using lower input dimension, and 2) generalizing over a larger variety of forward variance curves {(e.g. not only piece-wise constant)} during training.

\subsubsection{Neural Network setup}
The input parameters \eqref{eq:THeta} (and an extra parameter $\beta$ for the log-modulated kernel $K^{log}$) together with the maturity parameter $T$ is first normalized into the interval $[-1,1]^{n_{\Theta}}$ before feeding into the Neural Networks $\mathcal{NN}$. In terms of Neural Networks' architecture, we chose 3 hidden layers of 30 neurons each, connected using $tanh(x)  = \frac{e^x-e^{-x}}{e^x+e^{-x}}$ activation function, except for the output layer where identity function is used. The Neural Networks are built using the Tensorflow package in Python.

We then recompute $\widehat U_T$, $\widehat V_T$ using $X^{\mathcal{NN}}$ and $dW^{\mathcal{NN}}$, { while also applying the second moment-matching trick described in Section \ref{ss:spx_quant_nums}}. With new quantized points and probabilities $(u_j^{\mathcal{NN}},v_j^{\mathcal{NN}},p_j^{\mathcal{NN}})_{j \leq N}$, we compute the call option price
\begin{equation}\label{price_spx_NN}
C(\Theta,T,K)^{\mathcal{NN}} \approx \sum_{j=1}^{N} p_j^{\mathcal{NN}} \mathcal{BS}^{call}\left(S_0\exp(-\frac{1}{2}\rho^{2}u_{j}^{\mathcal{NN}}+\rho v_{j}^{\mathcal{NN}}),\sqrt{\frac{(1-\rho^{2})u_{j}^{\mathcal{NN}}}{T}},T,K\right).
\end{equation}

Introducing the following root mean square error
\[
\textbf{RMSE}_1 = \sqrt{\frac{1}{N_d \times M_k}\sum_{i=1}^{N_d}\sum_{j=1}^{M_k}\left( C(\Theta_i,T_i, K_j)^{\mathcal{NN}}-C(\Theta_i,T_i,K_j)^{\mathcal{MC}} \right)^2},
\]
where $C(\Theta_i,T_i,K_j)^{\mathcal{NN}}$ is the call option priced using $(X^{\mathcal{NN}},dW^{\mathcal{NN}}, p^{\mathcal{NN}})$ and $C(\Theta_i,T_i,K_j)^{\mathcal{MC}}$ is the call option priced through Monte Carlo. $N_d$ is the number of different ${\Theta}$ in the {training dataset} and $M_k$ is the number of strikes.

It is known that implied volatilities for SPX options are more sensitive to option price movements for deep out-of-the-money/in-the-money strikes at shorter maturities. In response, we propose a second loss function that penalizes more the regions far away from near-the-money at shorter maturities
\[
\textbf{RMSE}_2 = \sqrt{\frac{1}{N_d \times M_k}\sum_{i=1}^{N_d}\sum_{j=1}^{M_k}\left(\frac{1}{T_i}\log \left( \frac{C(\Theta_i,T_i,K_j)^{\mathcal{NN}}-lb^{call}(K_j)}{C(\Theta_i,T_i,K_j)^{\mathcal{MC}}-lb^{call}(K_j)} \right) \right)^2},
\]
where $lb^{call}(K) := (S_0-K)^{+}$ is the theoretical lower bound of a call option in our setting \eqref{polynomial_model}. 

We set the loss function for training the Neural Networks as
\[
\mathcal{L} = \textbf{RMSE}_1 + \frac{1}{2}\textbf{RMSE}_2.
\]

Despite the extra computation steps involved between the output of the Neural Networks in \eqref{nn_equations} and the call option price in \eqref{price_spx_NN}, we can capture these computations inside Tensorflow's computational graph to perform backward propagation to update Neural Networks' weights. We train the Neural Networks by splitting generated data 85/15 between the training and validation sets. For each epoch, the training data is divided into mini-batches of 200 samples. During the training, Adam Optimizer is used with a learning rate of 0.001 for the first 1,000 epochs, and then a learning rate of 0.0001 for the next 1,000 epochs. More training epochs with smaller learning rates do not improve the results notably, and there appears to be no over-fitting by checking the validation error vs.~training error, as seen for instance in the case of the exponential kernel $K^{exp}$ in Figure \ref{nn_heatmap_valid} and \ref{nn_heatmap_train}. The distribution of relative errors from the validation dataset seems to be similar to that from the training dataset. The relative error is calculated as $|C(\Theta,T,K)^{\mathcal{NN}}-C(\Theta,T,K)^{\mathcal{MC}}|/(C(\Theta,T,K)^{\mathcal{MC}}+\epsilon)$ with $\epsilon=0.1$ to ensure that the errors do not blow up for very small prices in the out of the money region.  

  \begin{figure}[H]
    \centering
    \includegraphics[width=1\textwidth]{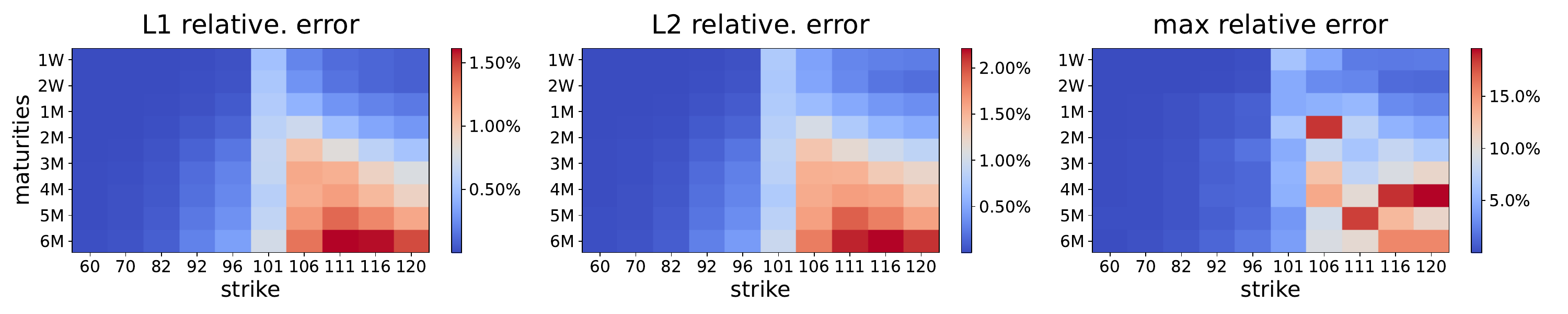}
    \caption{Heatmap of relative errors between estimated prices by Neural Networks vs. Monte Carlo from the validation dataset in the case of exponential kernel $K^{exp}$.}
    \label{nn_heatmap_valid}
  \end{figure}
  
  \begin{figure}[H]
    \centering
    \includegraphics[width=1\textwidth]{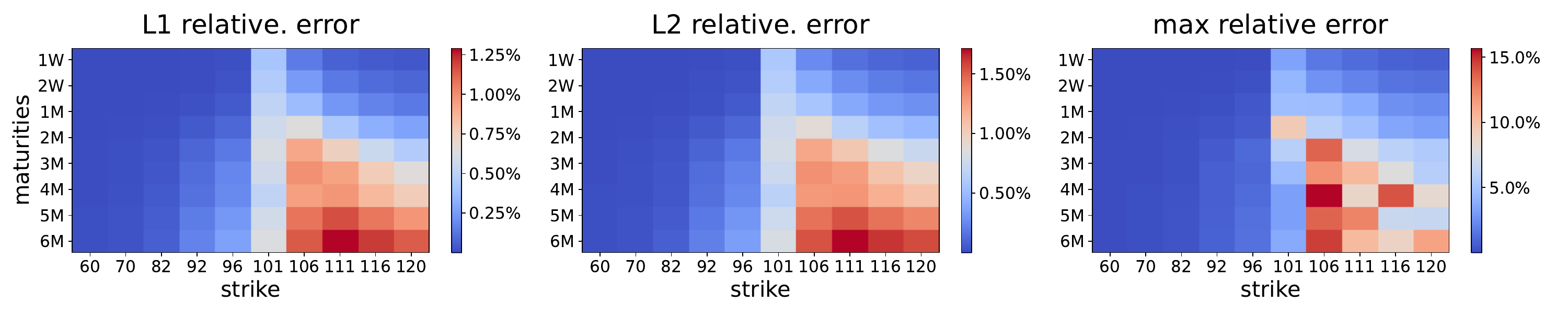}
    \caption{Heatmap of relative errors between estimated prices by Neural Networks vs. Monte Carlo from the training dataset in the case of exponential kernel $K^{exp}$.}
    \label{nn_heatmap_train}
  \end{figure}

 Figure \ref{nn_example} shows the improvement of option price estimation via Neural Networks, compared to that of quantization only.
  \begin{figure}[H]
    \centering
    \includegraphics[scale=0.4]{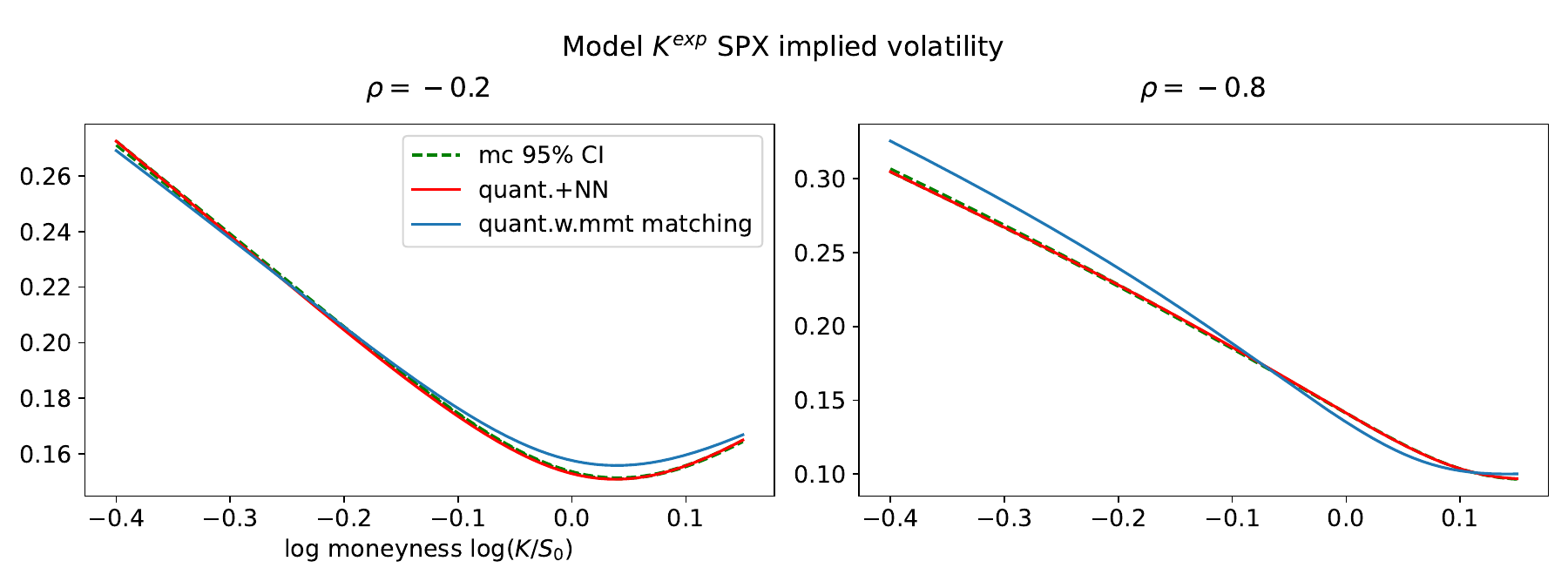}
    \caption{SPX implied volatility smile comparison between Monte Carlo (dotted green lines), quantization only (in blue) and with Neural Networks (in red) with the exponential kernel $K^{exp}$ for different values of $\rho$; $H = -0.1, (\alpha_0, \alpha_1, \alpha_3, \alpha_5) = (0.001,1,0.1,0.01), T = \text{6 months}$, $\xi_0(t) = 0.005e^{-8t}+0.04(1-e^{-8t})$. The quantization estimate (in blue) is computed using $10,000$ quantization points, where the quantization + Neural Networks (in red) is based on only $60$ quantization points.}
    \label{nn_example}
  \end{figure}

\subsubsection{A closed form density function for \texorpdfstring{$\log(S)$}{Lg}}
{Our approach emits a closed form density function of $\log(S_T/S_0)$, allowing us to price any derivatives depending on the final payoff of spot $S$. To see this, recall that 
\[
\log(S_T) = \log(S_0)-\frac{1}{2}U_T+\rho V_T+\sqrt{1-\rho^{2}}V_T^{\perp}.
\]
By modelling $U_T$ and $V_T$ using quantization, the law of $\log(S_T/S_0)$ is a Gaussian mixture with density function $f(x)$
\[
f(x) = \sum_{j=1}^{N} p_j^{\mathcal{NN}}f_j(x),
\]
where
\[f_j(x) = 
\frac{1}{\sqrt{2\pi(1-\rho^2)u_{j}^{\mathcal{NN}}}}\exp\left(\frac{-(x+\frac{1}{2}u_{j}^{\mathcal{NN}}-\rho v_{j}^{\mathcal{NN}})^2}{2(1-\rho^2)u_{j}^{\mathcal{NN}}}\right).
\]

Figure \ref{nn_density_example} is an example of density of $\log(S_T/S_0)$ produced by Neural Networks vs. Monte Carlo: the two are very close to each other:

  \begin{figure}[H]
    \centering
    \includegraphics[scale=0.4]{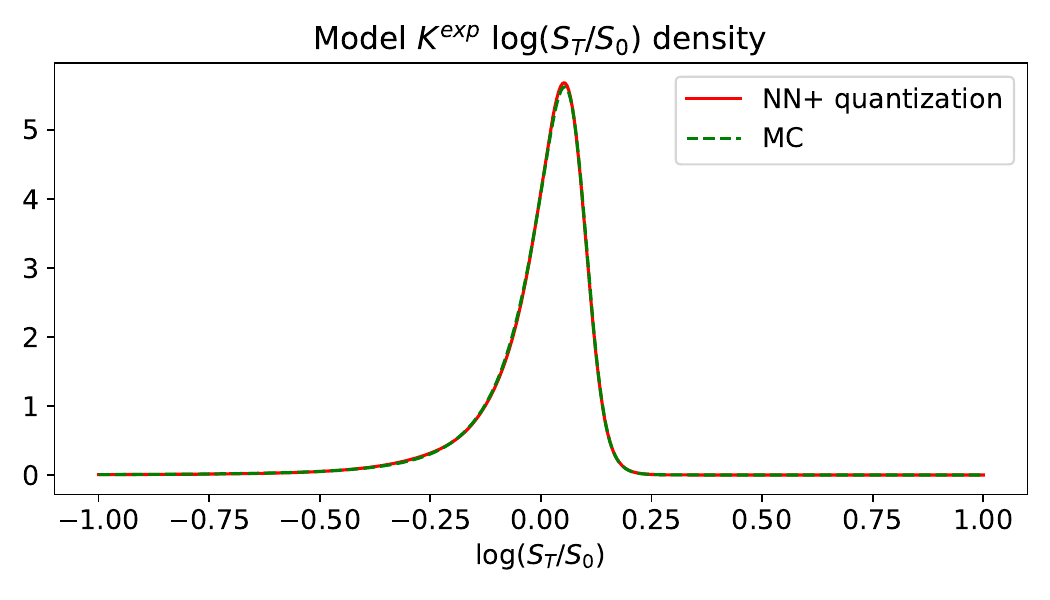}
    \caption{$\log(S_T/S_0)$ density comparison between Monte Carlo (dotted green lines) and Neural Networks (in red) with the exponential kernel $K^{exp}$; $H = -0.1, \rho=-0.8, (\alpha_0, \alpha_1, \alpha_3, \alpha_5) = (0.001,1,0.1,0.01), T = \text{6 months}$, $\xi_0(t) = 0.005e^{-8t}+0.04(1-e^{-8t})$.}
    \label{nn_density_example}
  \end{figure}
}

\subsubsection{Data Generation}\label{ss:data_gen}

First, we sample 100,000 different combinations of input parameters for each kernel in this paper as per the following
\[
\rho \sim \mathcal{U}\left[-1, -0.2\right], \left(\alpha_0, \alpha_1, \alpha_3, \alpha_5 \right)  \sim \mathcal{U} \left[0, 1\right]^4,T \sim \mathcal{U} \left[0.01, 0.5\right]; \beta \sim \mathcal{U} \left[1, 4\right],
\]
with $\mathcal{U}$ the uniform distribution. For the parameter $H$, we sampled $H \sim \mathcal{U} \left[-1, 0.5\right]$ for the exponential kernel $K^{exp}$ and the shifted fractional kernel $K^{shift}$. For the fractional kernel $K^{frac}$ and the log modulated kernel $K^{log}$, we sampled $H \sim \mathcal{U} \left[0.005, 0.5\right]$

To sample realistic forward variance curves $(\xi_0(t))_{t \leq T}$, we extracted the forward variance curves of SPX between 2017 and 2021 using the celebrated formula by Carr \& Madan \cite{carr2001towards}. To help generalizing the Neural Network to different $(\xi_0(t))_{t \leq T}$, we added some noise to the extracted $(\xi_0(t))_{t \leq T}$ at each discretised time step $(t_p)_{1 \leq p \leq n_T}$ by multiplying $e^{0.2Y_p}$ with $(Y_p)_{1 \leq p \leq n_T}$ i.i.d.~standard Gaussian. This idea is similar to that of \cite{romer2022empirical}.

After randomly pairing up the sampled parameters with sampled the forward variance curves, we compute the option call price via Monte Carlo simulations. The process $X$ is simulated exactly using Cholesky decomposition for all kernels.  Apart from the covariance matrix of $X$ under the log modulated kernel $K^{log}$ which is computed numerically, all covariance matrices are computed explicitly. 

To further reduce MC variance, We make use of the antithetic variable for $X$ and the control variable proposed by \cite{bergomi2015stochastic, mccrickerd2018turbocharging}. The MC prices are computed on a fixed vector of strikes $K:= (K_1, \ldots, K_{M_k})$ ranging between $60\%$ and $120\%$ of spot price $S_0=100$. The number of time steps $n_T$ is set to $800$ between 0 and $T$ for each simulation, with the number of simulations set to $80,000$ including antithetic variables.

\subsubsection{Critiques of Neural Networks}
Compared to the works of  \cite{horvath2021deep,romer2022empirical,rosenbaum2021deep}, our Neural Networks:
\begin{enumerate}
  \item learn the joint probability distribution of $(U, V, \log(S))$ instead of  learning the map between model parameters $\Theta$ and its implied volatility;
  \item do not require a fixed mesh of strikes and maturities and interpolation between various strikes/maturities during joint calibration.
\end{enumerate}

Our approach thus brings several benefits, for instance:
\begin{enumerate}

  \item \textbf{Greater flexibility}: since the output is a joint density, we can price vanilla options for any strikes and maturities; pricing of  other types of derivatives based on the stock price is also possible (transfer learning if needed);
  \item \textbf{Butterfly arbitrage-free}: positive density integrating to 1 for $\log(S)$;
  \item \textbf{Improved interpretability}: Neural Networks are used as a corrector of a first-order proxy coming from quantization;
  \item \textbf{Smaller input dimension}: the forward variance curve and strikes are not part of the input parameters.
\end{enumerate}

Of course, the price to pay for having a more flexible Neural Networks model is the large number of Neural Networks weights involved and longer training time. Recall the output of $\mathcal{NN}^1$ and $\mathcal{NN}^2$ have dimension of $N \times n_T$. As we have set $N=60$ and $n_T=50$, the number of weights connecting the last hidden layer and the output layer alone is $(30+1) \times (60 \times 50) = 93,000$ for $\mathcal{NN}^1$ and $\mathcal{NN}^2$. Larger $N$ and $n_T$ could further improve option prices estimated through Neural Networks, but this will also take longer time to train.

{In addition, it has been reported in \cite{ferguson2018deeply} that Neural Networks can learn to remove/reduce Monte Carlo noise in the training data. We also tested this claim on our Neural Network structure using the Volterra Stein-Stein model from \cite{abi2022characteristic} that is a special case of the Gaussian polynomial volatility model, see Section \ref{gpvm}. Under the Volterra Stein-Stein model, there exists a closed-form expression for SPX pricing via Fourier which allows direct comparison against Monte Carlo prices and Neural Networks trained using Monte Carlo prices. We observed considerable reductions in Monte Carlo noise using our Neural Network structure with detailed results in Appendix \ref{C:nn_rss}.}

\appendix

\section{Formulae}\label{A:formulae}
\subsection{A formula for \texorpdfstring{$\int_0^T (u-s)^{H-1/2}\Dot{e}_k(s) ds$}{Lg}}\label{A:formula}
\paragraph{Fractional kernel $K^{frac}$:} Thanks to \cite{bonesini2021functional}, there is a semi-closed form solution for the integral involving the fractional kernel $K^{frac}$
\[
\begin{aligned}
    &\int_0^T (u-s)^{H-1/2}\Dot{e}_k(s) ds = \Bigg\{ \cos \left(\frac{\left(k-\frac{1}{2}\right)}{T+\Delta} u \pi\right)\left(\zeta_{\frac{1}{2}}\left(\frac{\left(k-\frac{1}{2}\right)}{T+\Delta} u, h_1\right)-\zeta_{\frac{1}{2}}\left(\frac{\left(k-\frac{1}{2}\right)}{T+\Delta}(u-T), h_1\right)\right)\\
    &+\pi \sin \left(\frac{\left(k-\frac{1}{2}\right)}{T+\Delta} u \pi\right)\left(\zeta_{\frac{3}{2}}\left(\frac{\left(k-\frac{1}{2}\right)}{T+\Delta} u, h_2\right)-\zeta_{\frac{3}{2}}\left(\frac{\left(k-\frac{1}{2}\right)}{T+\Delta}(u-T), h_2\right)\right)\Bigg\}\frac{\sqrt{2}(T+\Delta)^H}{\left(k-\frac{1}{2}\right)^{H+\frac{1}{2}}\sqrt{\lambda_k}}
\end{aligned}
\]
with $h_1 = \frac{1}{2}(H+\frac{1}{2})$, $h_2 = h_1+\frac{1}{2}$ and
\[
\zeta_q(z,h) = \frac{z^{2h}}{2h} \, {}_1{F}_{2} (h;q,1+h;\text{-}\frac{1}{4}\pi^2z^2)
\]
where ${}_1 {F}_{2}$ is the hypergeometric function.

\paragraph{Shifted fractional kernel $K^{shift}$:} The formula is similar to the one above for the fractional kernel $K^{frac}$ by simply replacing $u$ by $u+\varepsilon$.

\subsection{A formula for \texorpdfstring{$\int_0^t (t-s)^{H-1/2}\Dot{e}_k(s) ds$}{Lg}}\label{A:formula_Xt}
\paragraph{Fractional kernel $K^{frac}$:}Thanks to \cite{bonesini2021functional}, there is a semi-closed form solution for the integral involving the fractional kernel $K^{frac}$
\[
\begin{aligned}
    &\int_0^t(t-s)^{H-1/2}\Dot{e}_k(s) ds = \frac{2}{1+2H}\sqrt{\frac{2}{T \lambda_k}}t^{H+1/2}{}_2{F}_{1}(1; \frac{3}{4}+\frac{H}{2}, \frac{5}{4}+\frac{H}{4}; -\frac{t^2}{4\lambda_k}).
\end{aligned}
\]

\paragraph{Shifted fractional kernel $K^{shift}$:}

\[
\begin{aligned}
    &\int_0^t (t+\varepsilon-s)^{H-1/2}\Dot{e}_k(s) ds = \Bigg\{ \cos \left(\frac{\left(k-\frac{1}{2}\right)}{T} (t+\varepsilon) \pi\right)\left(\zeta_{\frac{1}{2}}\left(\frac{\left(k-\frac{1}{2}\right)}{T} (t+\varepsilon), h_1\right)-\zeta_{\frac{1}{2}}\left(\frac{\left(k-\frac{1}{2}\right)}{T}\varepsilon, h_1\right)\right)\\
    &+\pi \sin \left(\frac{\left(k-\frac{1}{2}\right)}{T} (t+\varepsilon) \pi\right)\left(\zeta_{\frac{3}{2}}\left(\frac{\left(k-\frac{1}{2}\right)}{T} (t+\varepsilon), h_2\right)-\zeta_{\frac{3}{2}}\left(\frac{\left(k-\frac{1}{2}\right)}{T}\varepsilon, h_2\right)\right)\Bigg\}\frac{\sqrt{2}t^H}{\left(k-\frac{1}{2}\right)^{H+\frac{1}{2}}\sqrt{\lambda_k}}
\end{aligned}
\]
with $h_1 = \frac{1}{2}(H+\frac{1}{2})$, $h_2 = h_1+\frac{1}{2}$ and
\[
\zeta_q(z,h) = \frac{z^{2h}}{2h} \, {}_1{F}_{2} (h;q,1+h;\text{-}\frac{1}{4}\pi^2z^2)
\]
where ${}_1 {F}_{2}$ is the hypergeometric function.

\paragraph{Exponential kernel $K^{exp}$:}the form solution for the integral involving the exponential kernel
\[
\begin{aligned}
    \int_0^t K(t-s)\Dot{e}_k(s) &= \int_0^t \varepsilon^{H-1/2}e^{-(1/2-H)\varepsilon^{-1}(t-s)}\Dot{e}_k(s) ds\\
    &= \varepsilon^{H-1/2} \sqrt{\frac{2}{T}}\frac{-(1/2-H)\varepsilon^{-1} e^{-(1/2-H)\varepsilon^{-1} t}+(1/2-H)\varepsilon^{-1} \cos(\frac{t}{\sqrt{\lambda_k}})+\frac{1}{\sqrt{\lambda_k}}\sin{\frac{t}{\lambda_k}}}{((1/2-H)^2\varepsilon^{-2}+\frac{1}{\lambda_k})\sqrt{\lambda_k}}.
\end{aligned}
\]

{\section{More joint calibration results}\label{B:jc}
\subsection{RMSE across different kernels after calibration}\label{B:rmse_dist}
{
  \begin{figure}[H]
    \centering
    \includegraphics[scale=0.42]{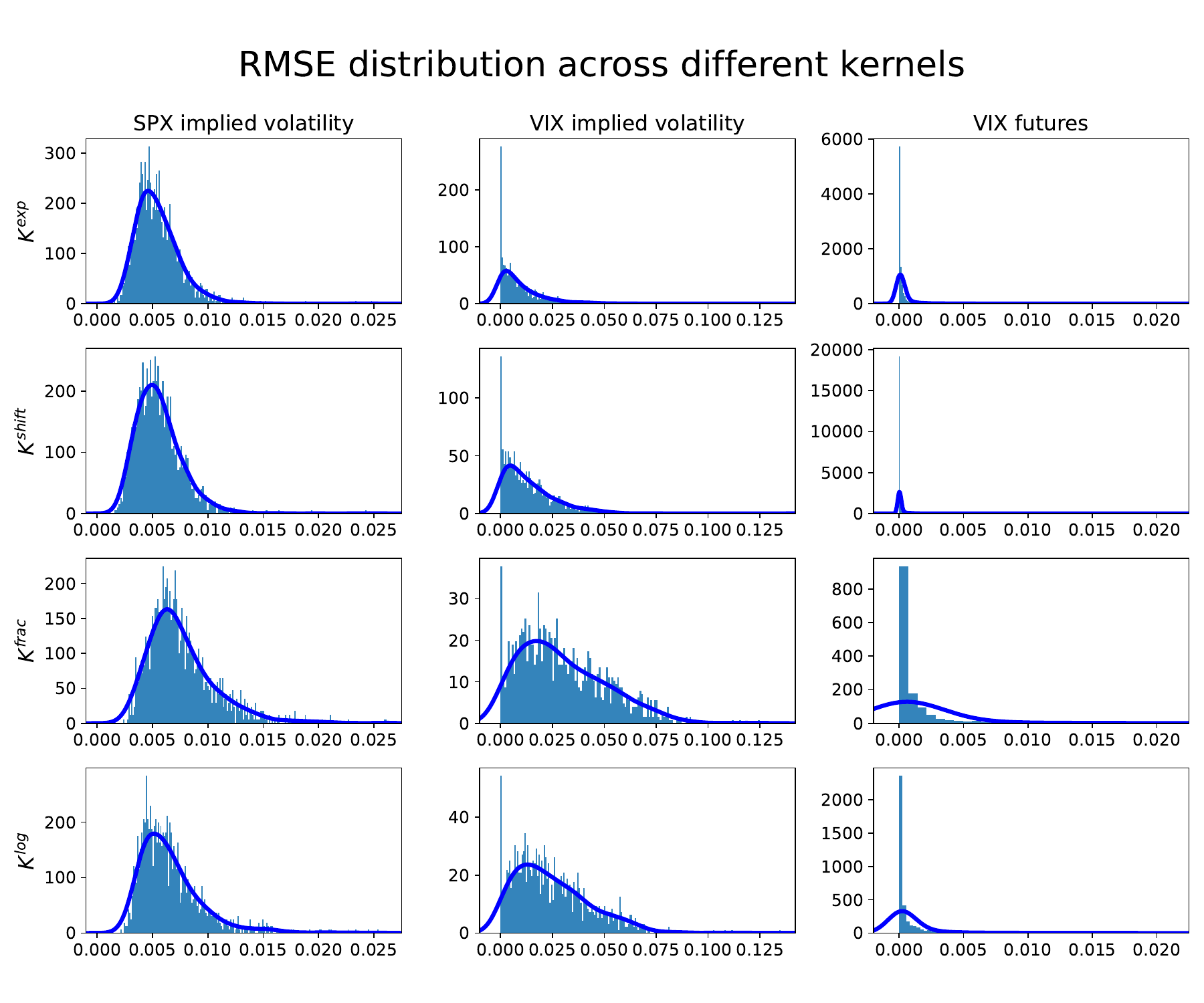}
    \caption{Histogram of the distribution of calibration RMSE for different kernels from August 2011 to September 2022.}
    \label{rmse_dist}
  \end{figure}
}

\subsection{Joint calibration among other kernels for the date 23 October 2017}\label{B:jc_kernels_compare}

{
  \begin{figure}[H]
    \centering
    \includegraphics[width=0.5\textwidth]{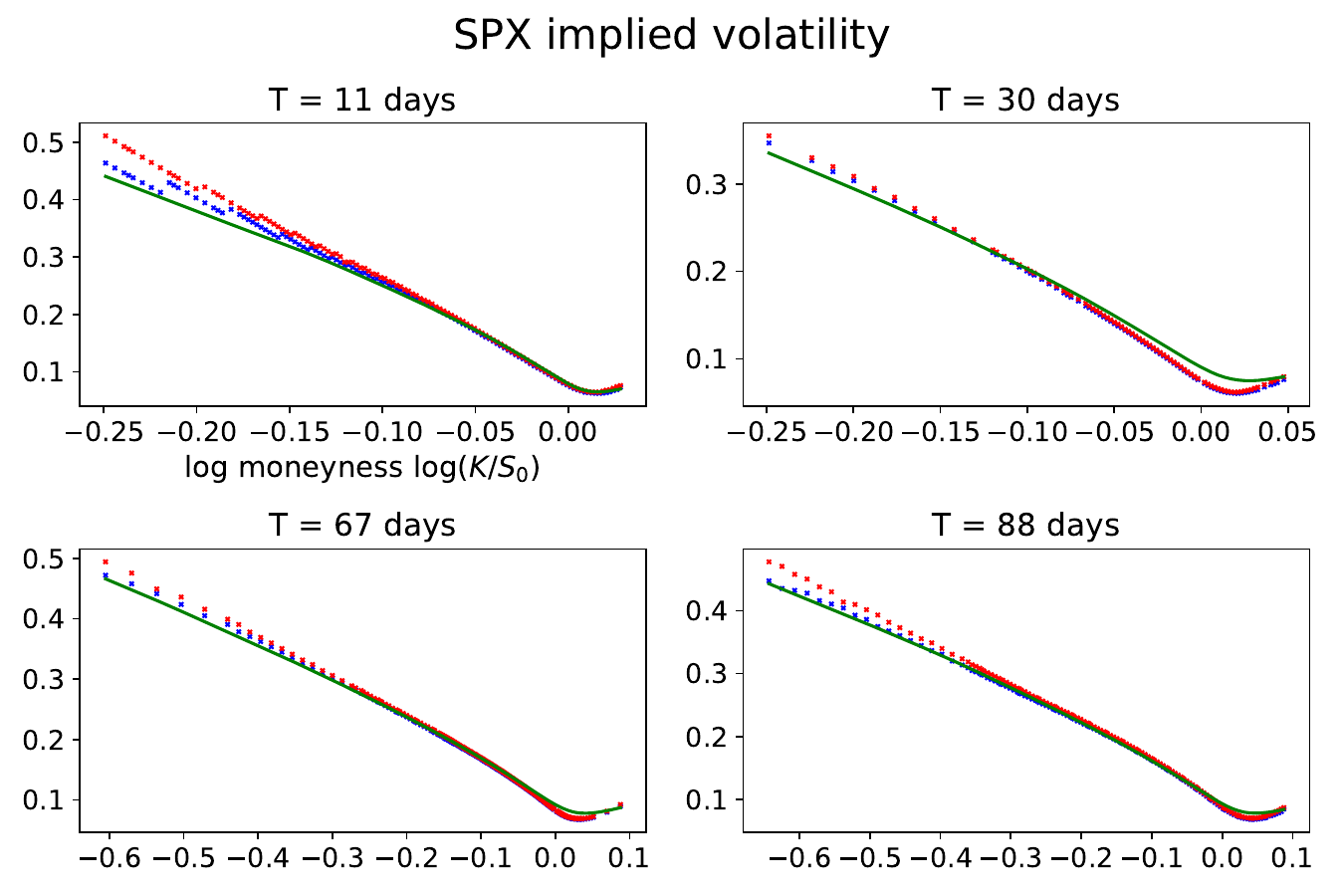}%
    \includegraphics[width=0.5\textwidth]{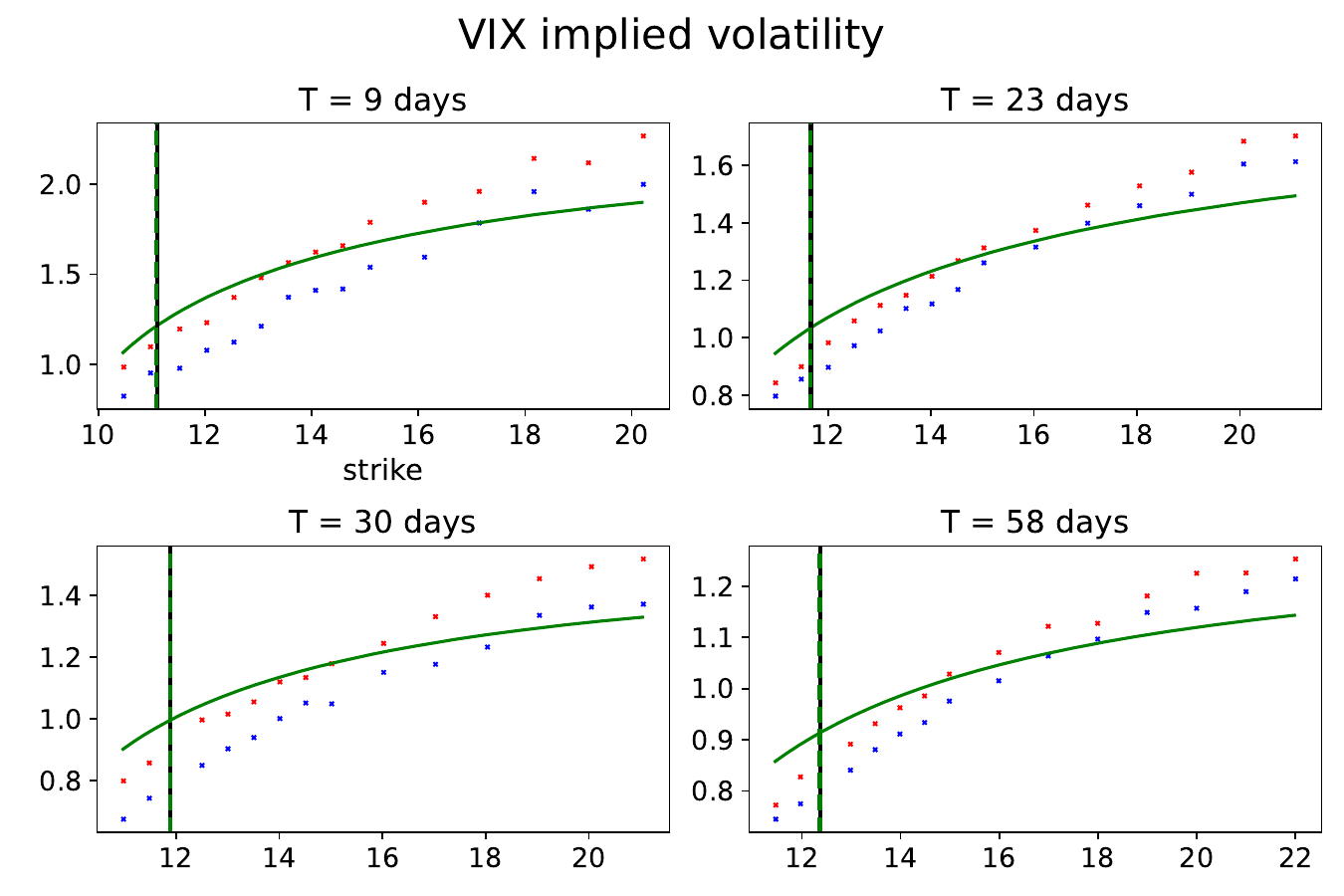}
    \caption{Fractional kernel $K^{frac}$: Joint calibration of SPX implied volatility, VIX implied volatility, and VIX Futures on 23 October 2017 using the fractional kernel $K^{frac}$. The blue and red dots are bid/ask implied volatilities, with the green lines are model fit. The vertical bars represent VIX Futures price. Calibrated parameters are: $\rho = -1, H = 0.09698, (\alpha_0, \alpha_1, \alpha_3, \alpha_5) = (0.61799,0.90211,1,0.0097)$.}
    \label{frac_jc1}
  \end{figure}
}

{
  \begin{figure}[H]
    \centering
    
    \includegraphics[width=0.5\textwidth]{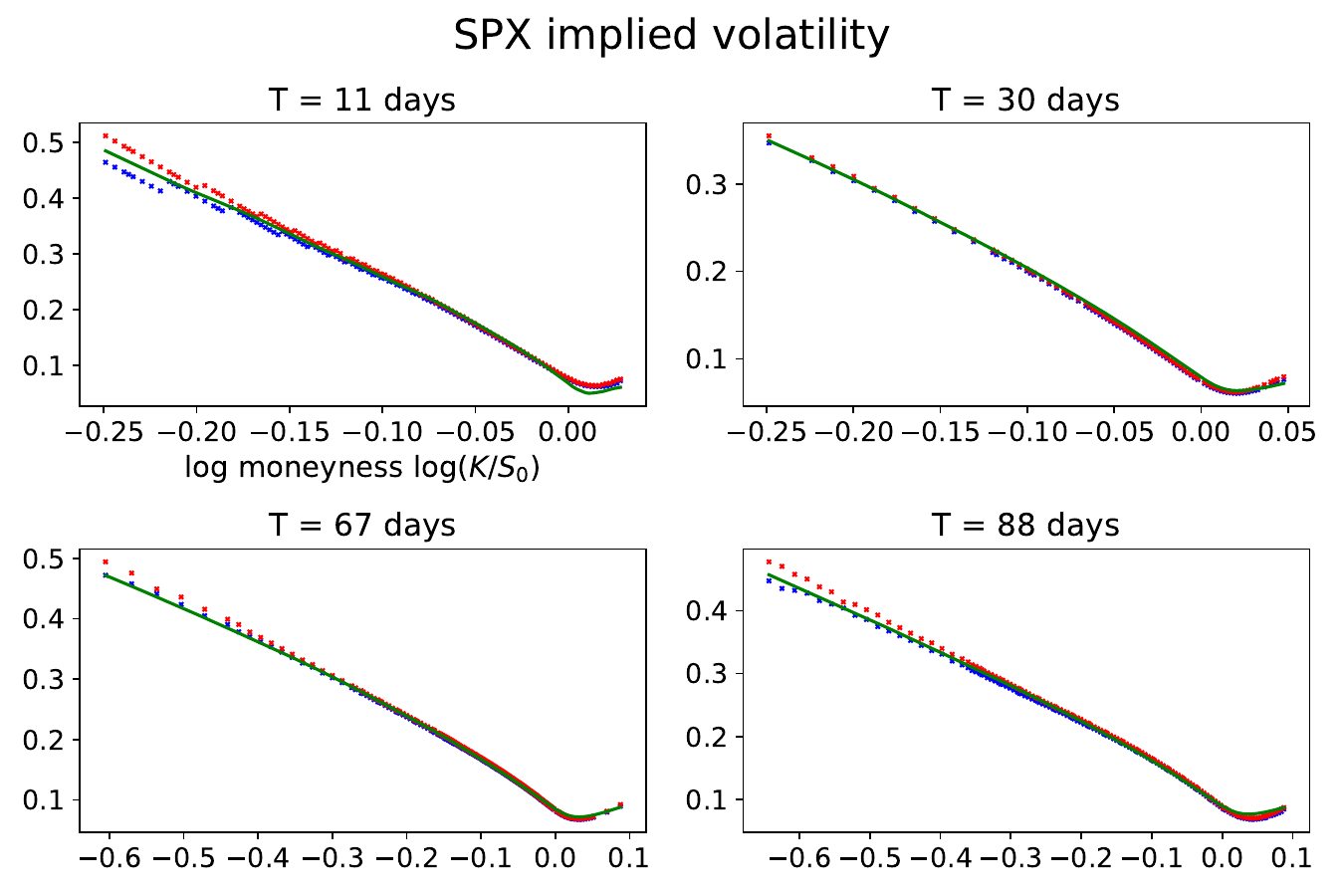}%
    \includegraphics[width=0.5\textwidth]{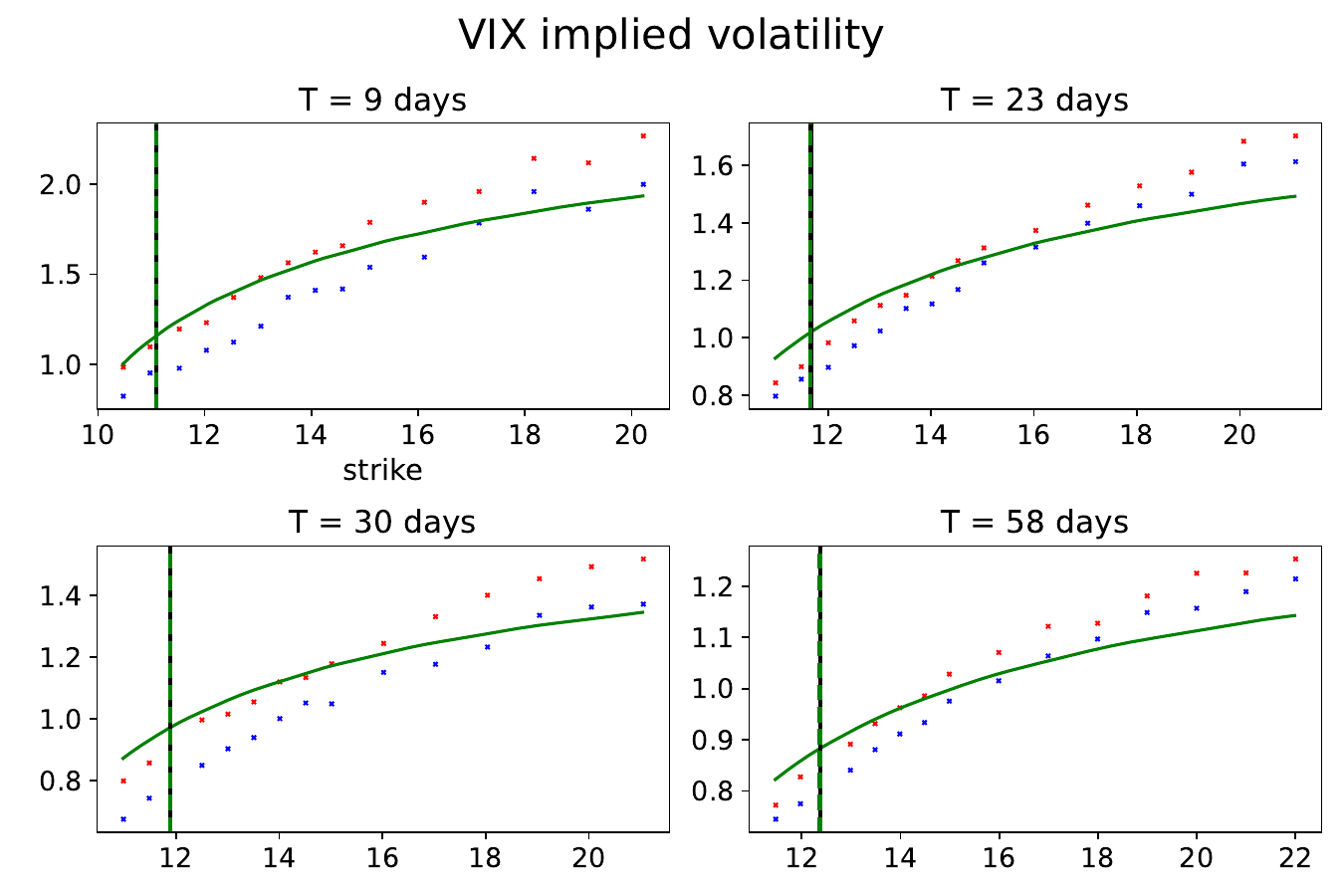}
    \caption{Log modulated kernel $K^{log}$: Joint calibration of SPX implied volatility, VIX implied volatility, and VIX Futures on 23 October 2017 using the log modulated kernel $K^{log}$. The blue and red dots are bid/ask implied volatilities, with the green lines are model fit. The vertical bars represent the VIX Futures price. Calibrated parameters are: $\rho = -0.993822, H = 0.01608,\beta=4, (\alpha_0, \alpha_1, \alpha_3, \alpha_5) = (0.32486,0.0,0.49826,0.017482)$.}
    \label{log_m_jc1}
  \end{figure}
}

{
  \begin{figure}[H]
    \centering
    
    \includegraphics[width=0.5\textwidth]{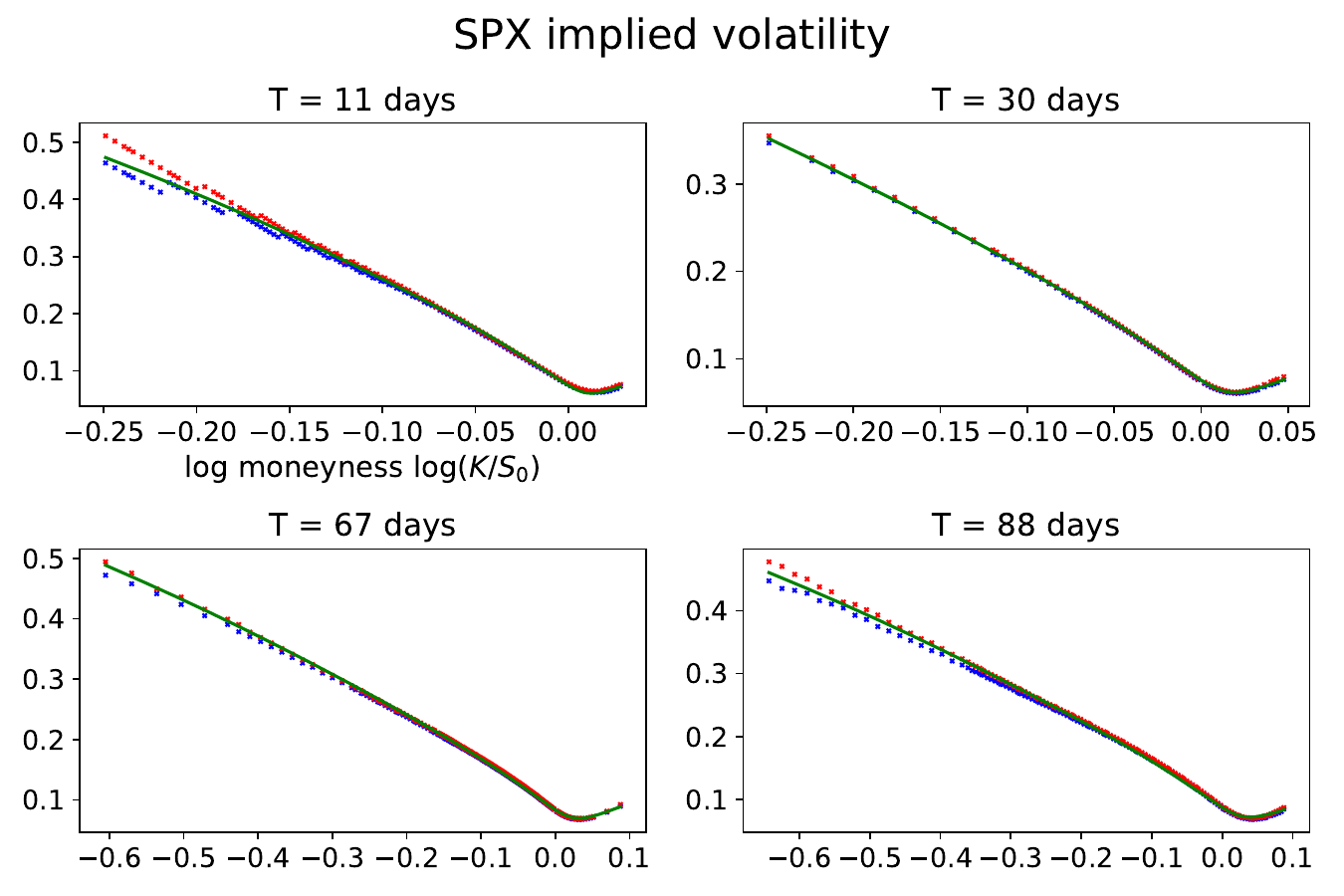}%
    \includegraphics[width=0.5\textwidth]{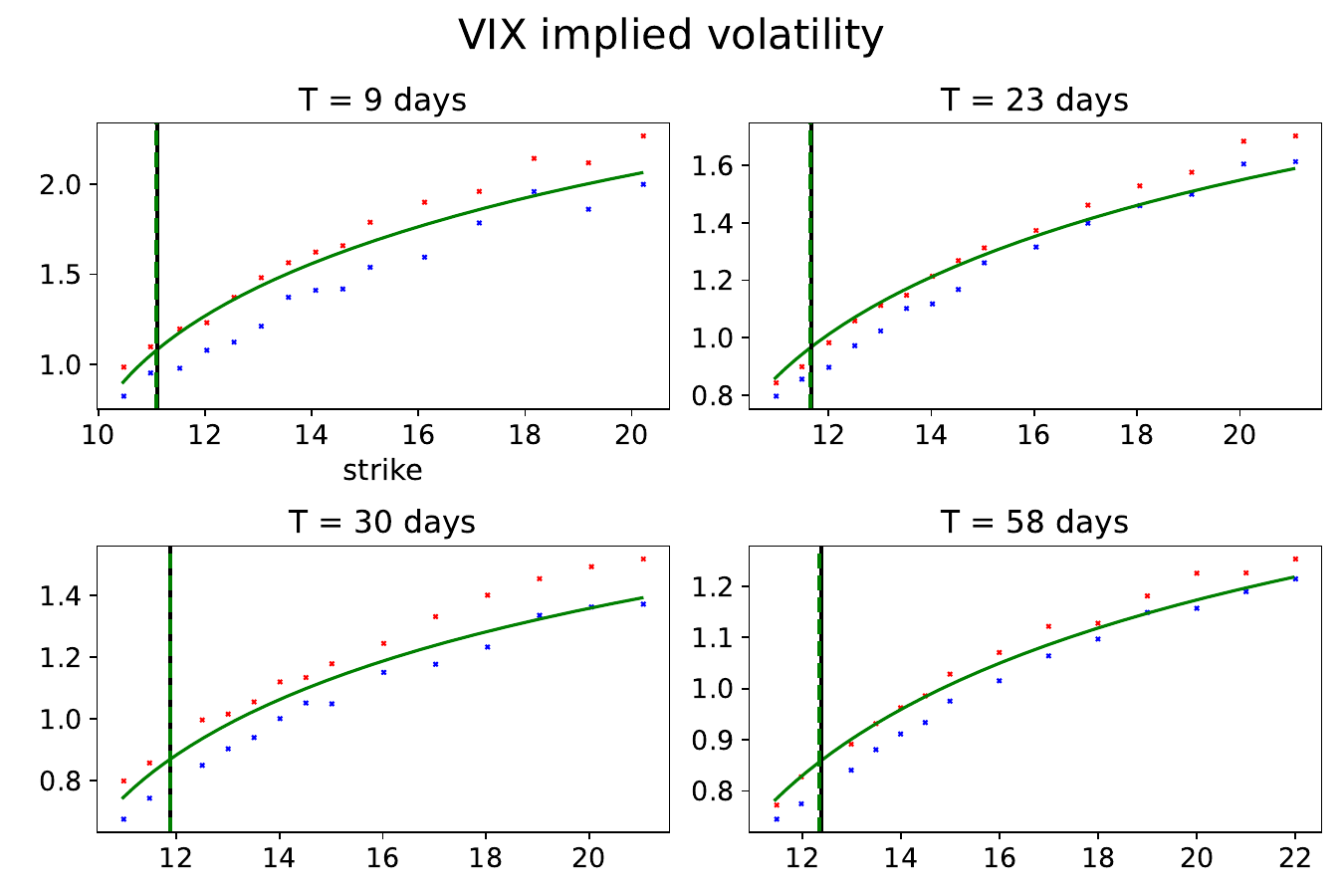}
    \caption{Shifted fractional kernel $K^{shift}$: Joint calibration of SPX implied volatility, VIX implied volatility, and VIX Futures on 23 October 2017 using the shifted fractional kernel $K^{shift}$. The blue and red dots are bid/ask implied volatilities, with the green lines are model fit. The vertical bars represent the VIX Futures price. Calibrated parameters are: $\rho = -0.6891, H = -0.845297, (\alpha_0, \alpha_1, \alpha_3, \alpha_5) = (0.1035051,0.163729,0.0001054,5.1882e-8)$.}
    \label{shift_jc1}
  \end{figure}
}

\subsection{Evolution of calibrated parameters under different kernels}\label{B:param_history}

{
  \begin{figure}[H]
    \centering
    \includegraphics[scale=0.45]{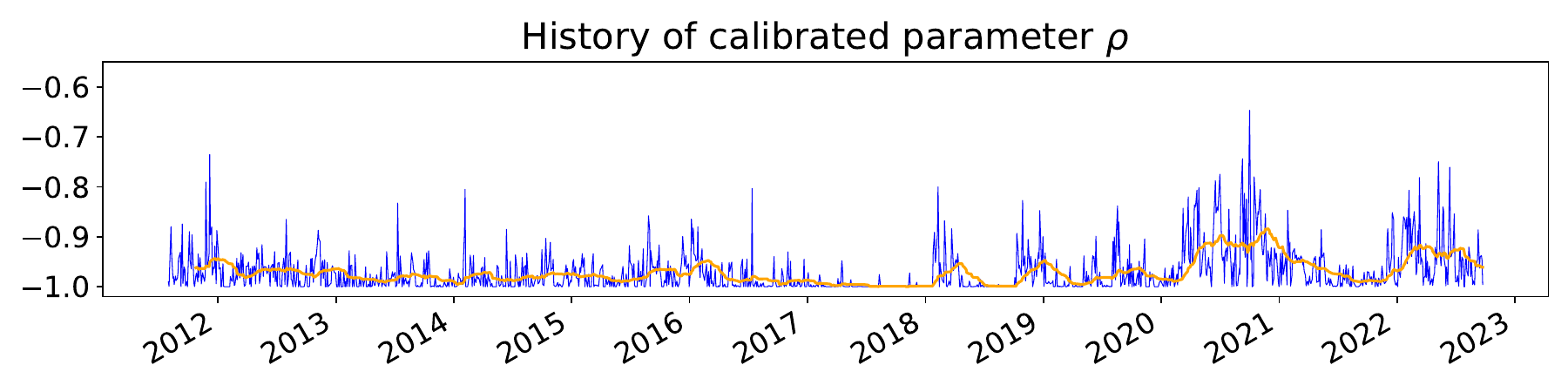}
    \includegraphics[scale=0.45]{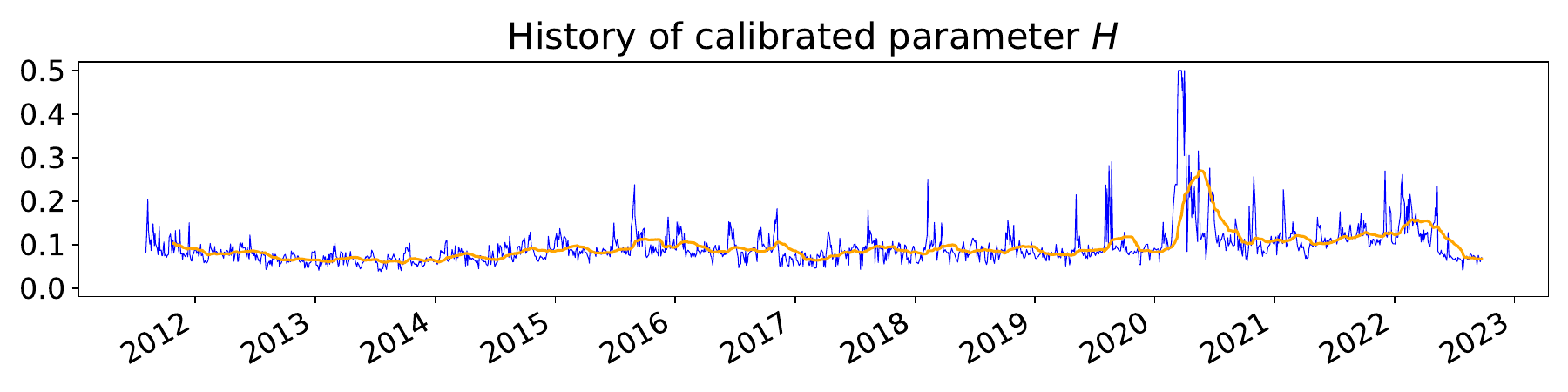}
    \caption{Fractional kernel $K^{frac}$: Evolution of the calibrated parameters $\rho$ and $H$ under the fractional kernel $K^{frac}$, the blue line is the actual value of the calibrated parameters in time, the orange line is the 30-day moving average. Note how $\rho$ is saturated at -1 on most days, but still not enough to capture the SPX ATM skew. $H$ is unable to descend to near zero due to the ``vanishing" skew phenomena.  }
    \label{frac_param_history}
  \end{figure}
}

{
  \begin{figure}[H]
    \centering
    \includegraphics[scale=0.45]{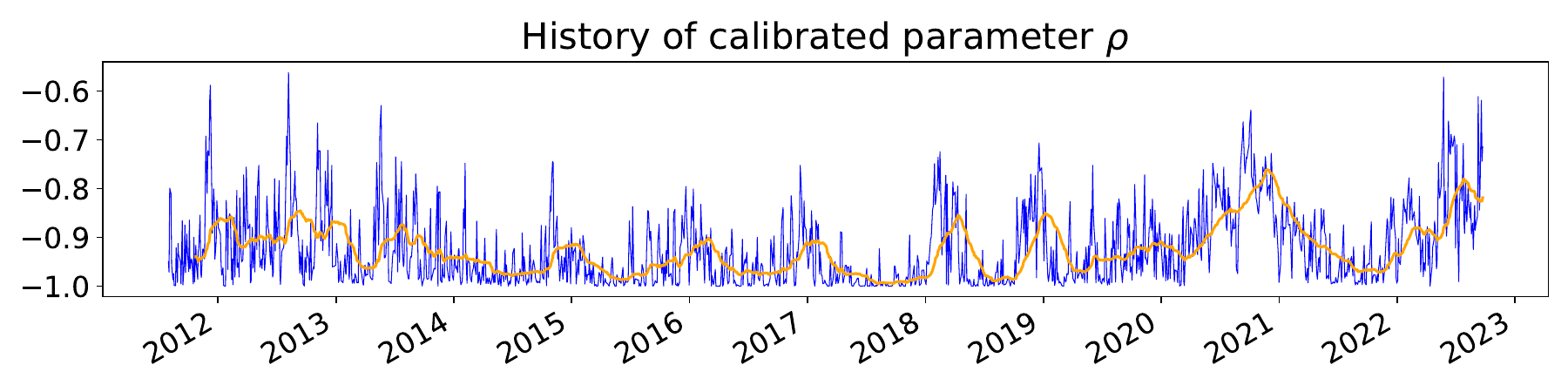}
    \includegraphics[scale=0.45]{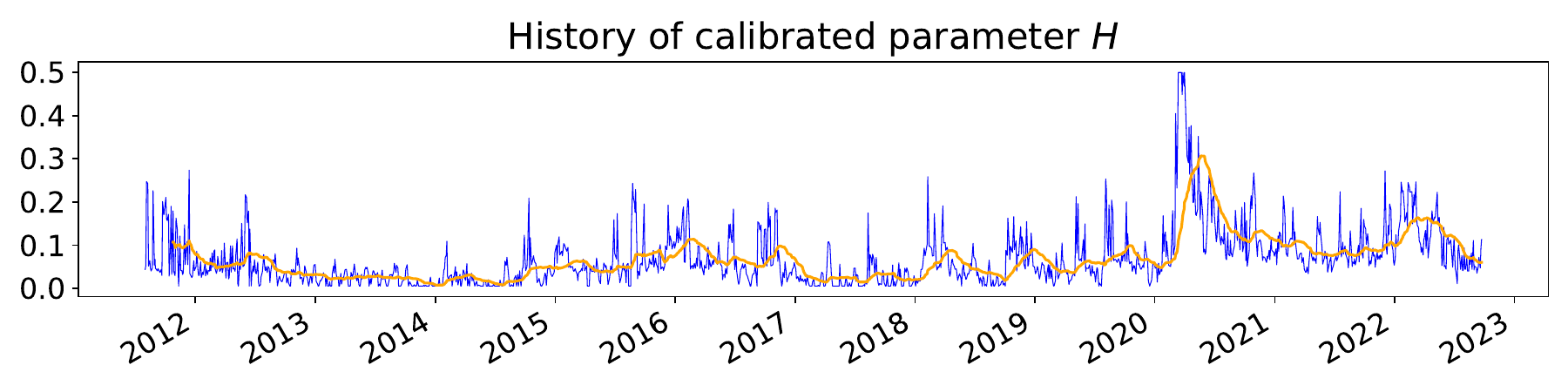}
    \caption{Log modulated kernel $K^{log}$: Evolution of the calibrated parameters $\rho$ and $H$ under the log modulated kernel $K^{log}$, the blue line is the actual value of the calibrated parameters in time, the orange line is the 30-day moving average. Note how $\rho$ is still very close to  -1 on most days, while $H$ is much closer to zero than that of fractional kernel $K^{frac}$. In any case, the log modulated kernel $K^{log}$ still struggles to jointly calibrate SPX \& VIX.}
    \label{log_m_param_history}
  \end{figure}
}

{
  \begin{figure}[H]
    \centering
    \includegraphics[scale=0.45]{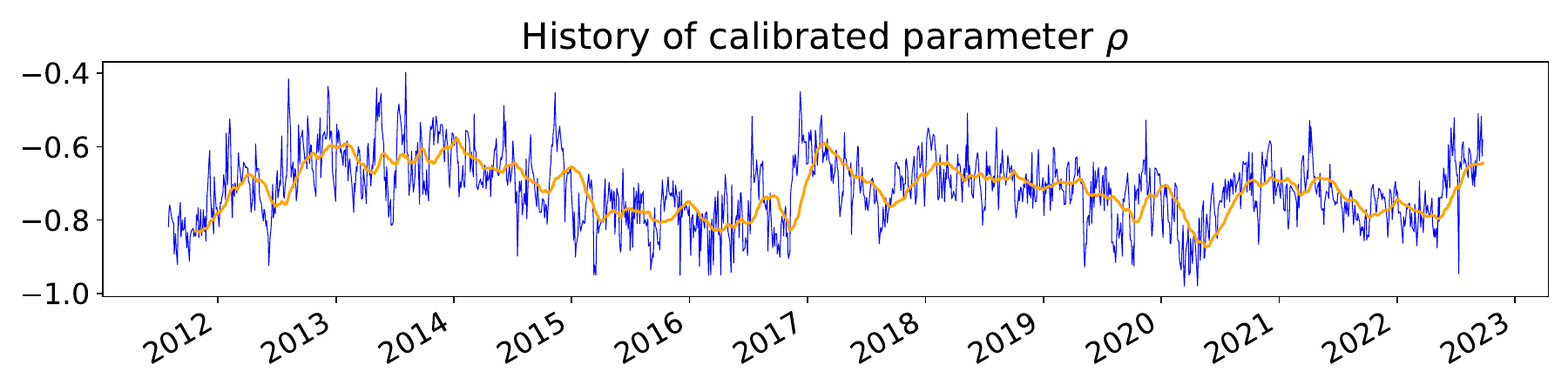}
    \includegraphics[scale=0.45]{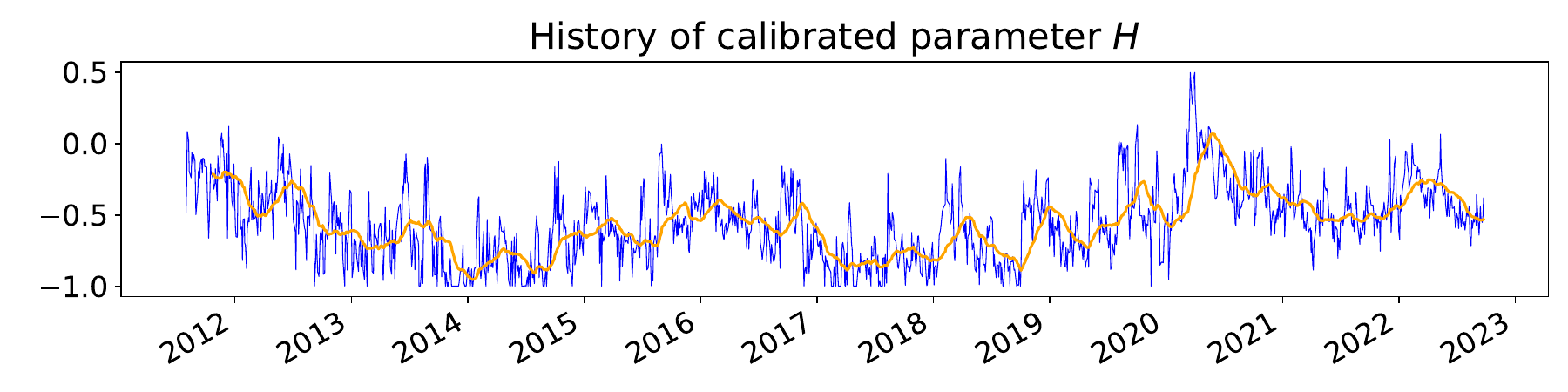}
    \caption{Shifted fractional kernel $K^{shift}$: Evolution of the calibrated parameters $\rho$ and $H$ under the shifted fractional kernel $K^{shift}$, the blue line is the actual value of the calibrated parameters in time, the orange line is the 30-day moving average. Note how $\rho$ is not saturated, similar to that of exponential kernel $K^{exp}$, with $H$ basically staying below zero the entire time series (except during 2020 Covid pandemic).}
    \label{shift_param_history}
  \end{figure}
}

\subsection{Exponential kernel \texorpdfstring{$K^{exp}$}{Lg} fit quality quantiles}\label{exp_ker_quantile}

\paragraph{0.3 Quantile:} \text{ }

{
  \begin{figure}[H]
    \centering
    
    \includegraphics[width=0.5\textwidth]{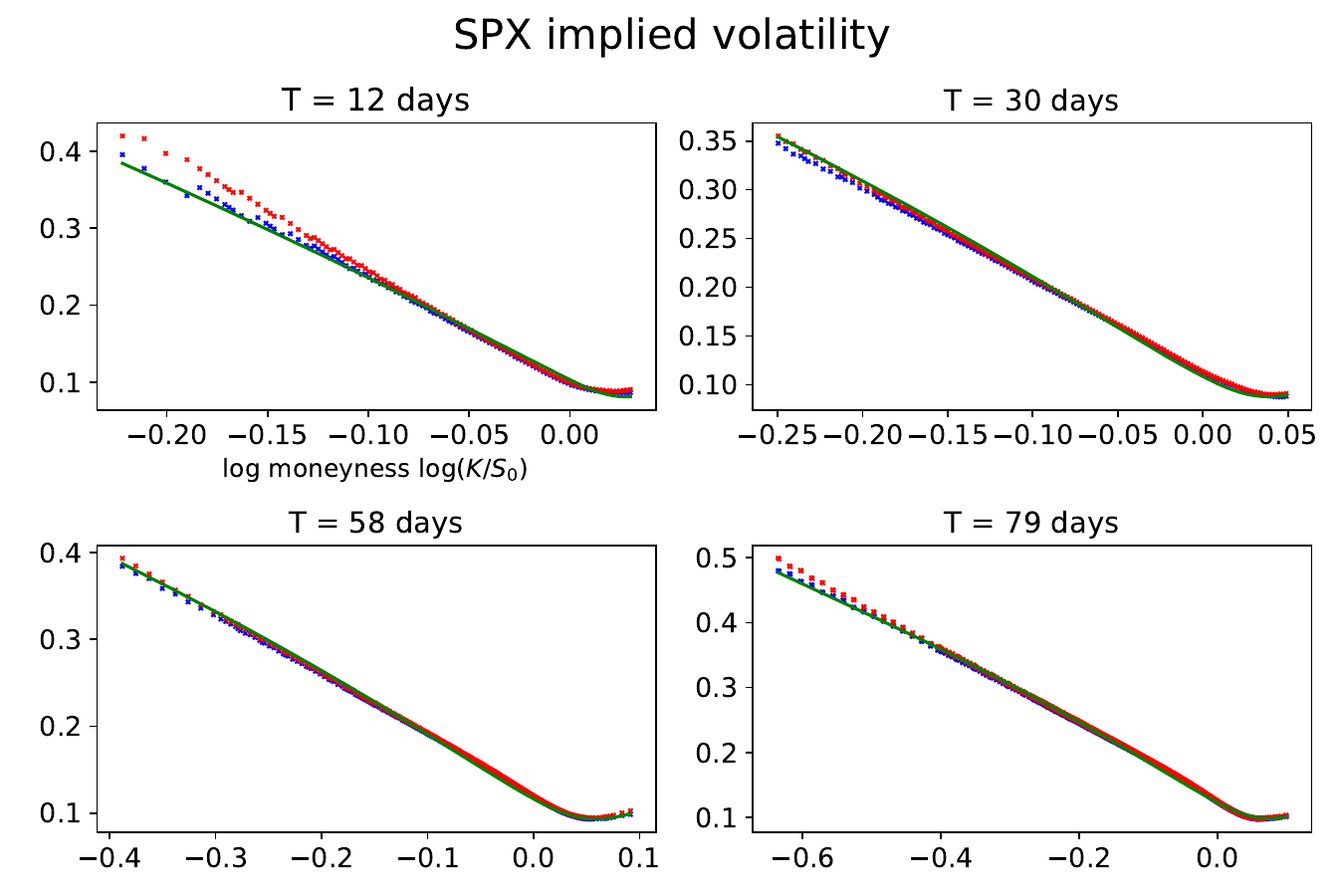}%
    \includegraphics[width=0.5\textwidth]{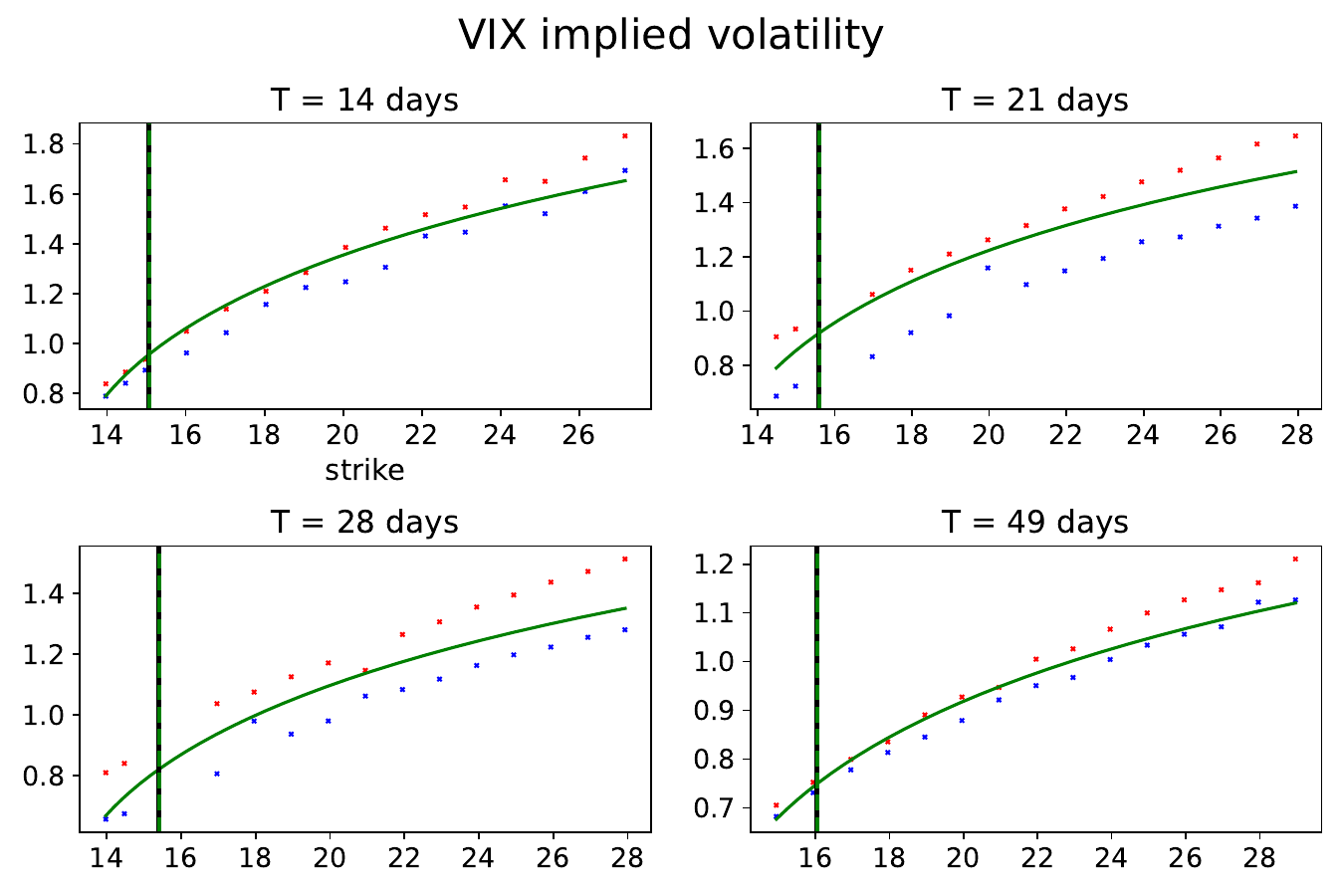}
    \caption{Exponential kernel $K^{exp}$: Joint calibration of SPX implied volatility, VIX implied volatility, and VIX Futures on 03 April 2019. The blue and red dots are bid/ask implied volatilities, and the green lines are model fit. The vertical bars represent the VIX Futures price. Calibrated parameters are: $\rho = -0.69148, H = 0.04743, (\alpha_0, \alpha_1, \alpha_3, \alpha_5) = (1,0.1,0.30223,0.04788)$.}
    \label{exp_30}
  \end{figure}
}

\paragraph{0.6 Quantile:} \text{ }

{
  \begin{figure}[H]
    \centering
    
    \includegraphics[width=0.5\textwidth]{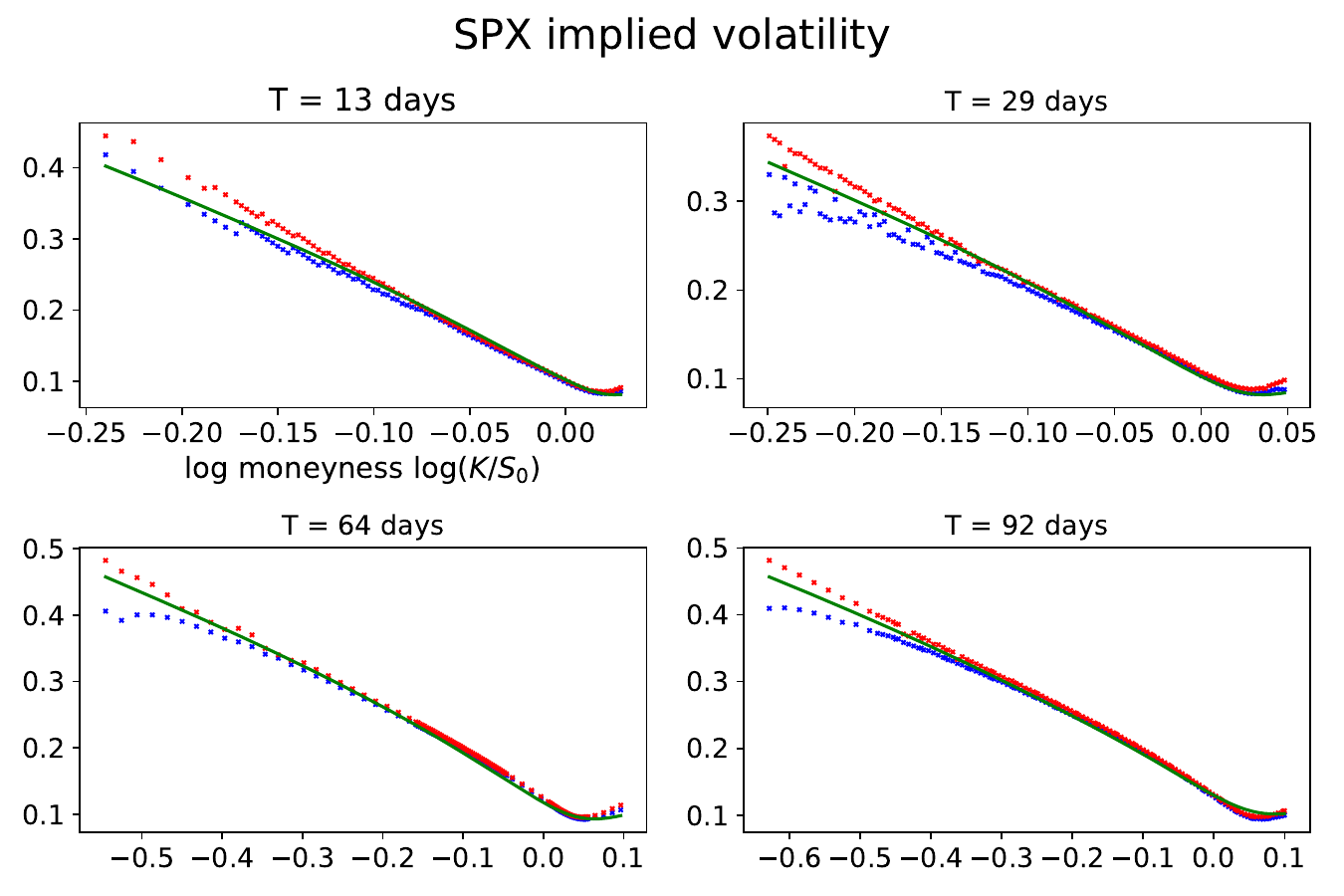}%
    \includegraphics[width=0.5\textwidth]{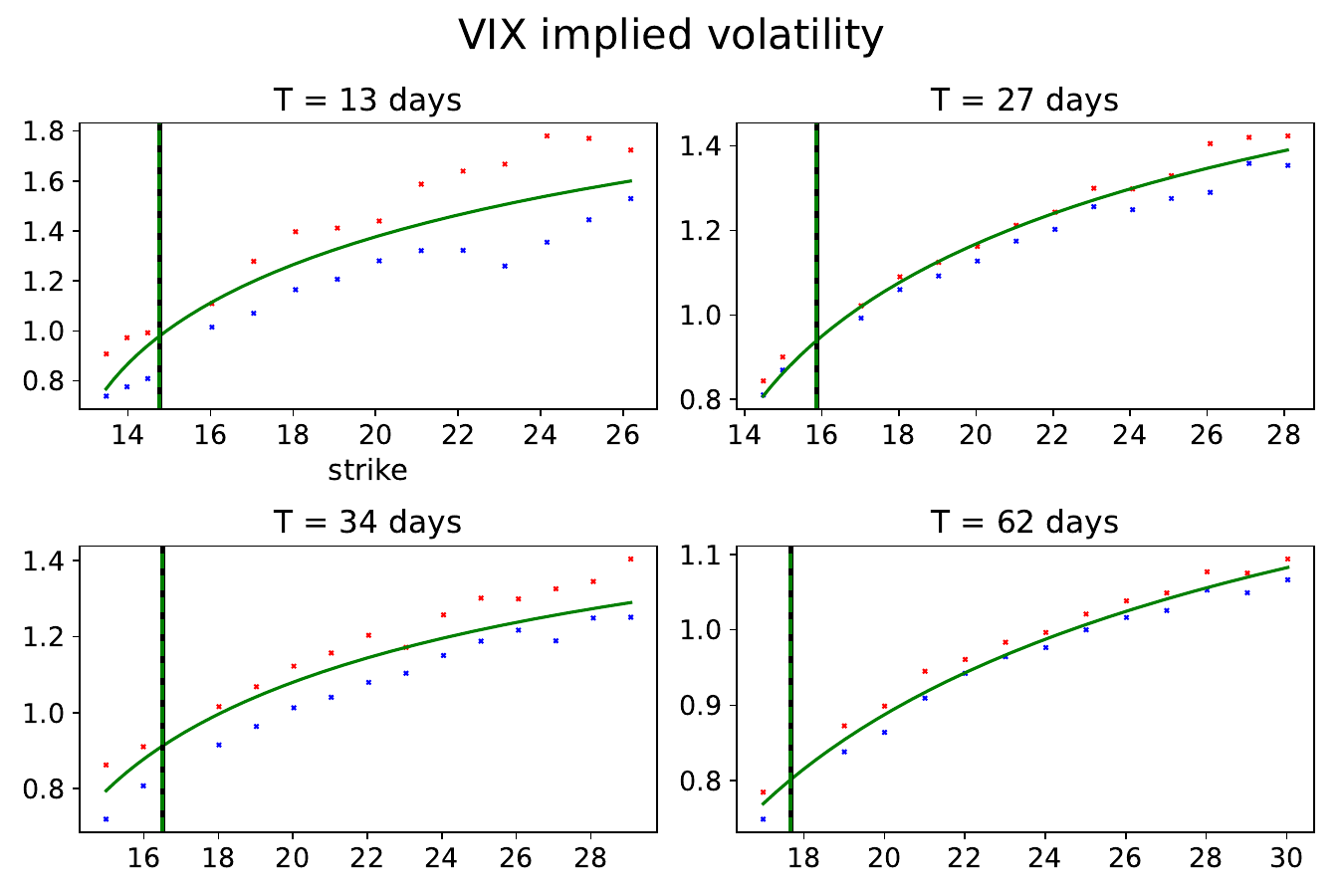}
    \caption{Exponential kernel $K^{exp}$: Joint calibration of SPX implied volatility, VIX implied volatility, and VIX Futures on 21 July 2016. The blue and red dots are bid/ask implied volatilities, and the green lines are model fit. The vertical bars represent the VIX Futures price. Calibrated parameters are: $\rho = -0.72793, H = 0.1373, (\alpha_0, \alpha_1, \alpha_3, \alpha_5) = (0.81221,0.73015,0.94968,0.02753)$.}
    \label{exp_60}
  \end{figure}
}

\paragraph{0.99 Quantile:} \text{ }

{
  \begin{figure}[H]
    \centering
    
    \includegraphics[width=0.5\textwidth]{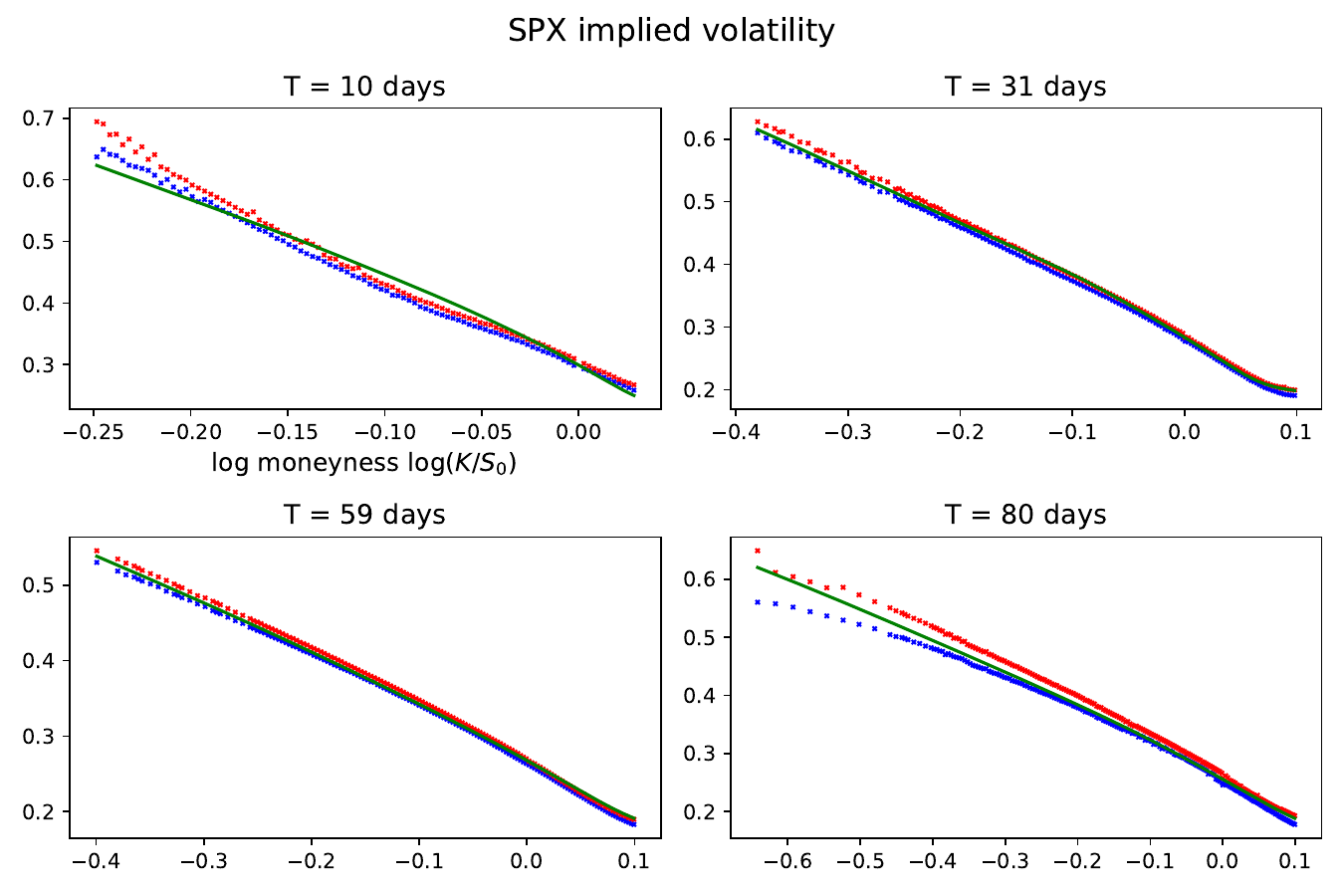}%
    \includegraphics[width=0.5\textwidth]{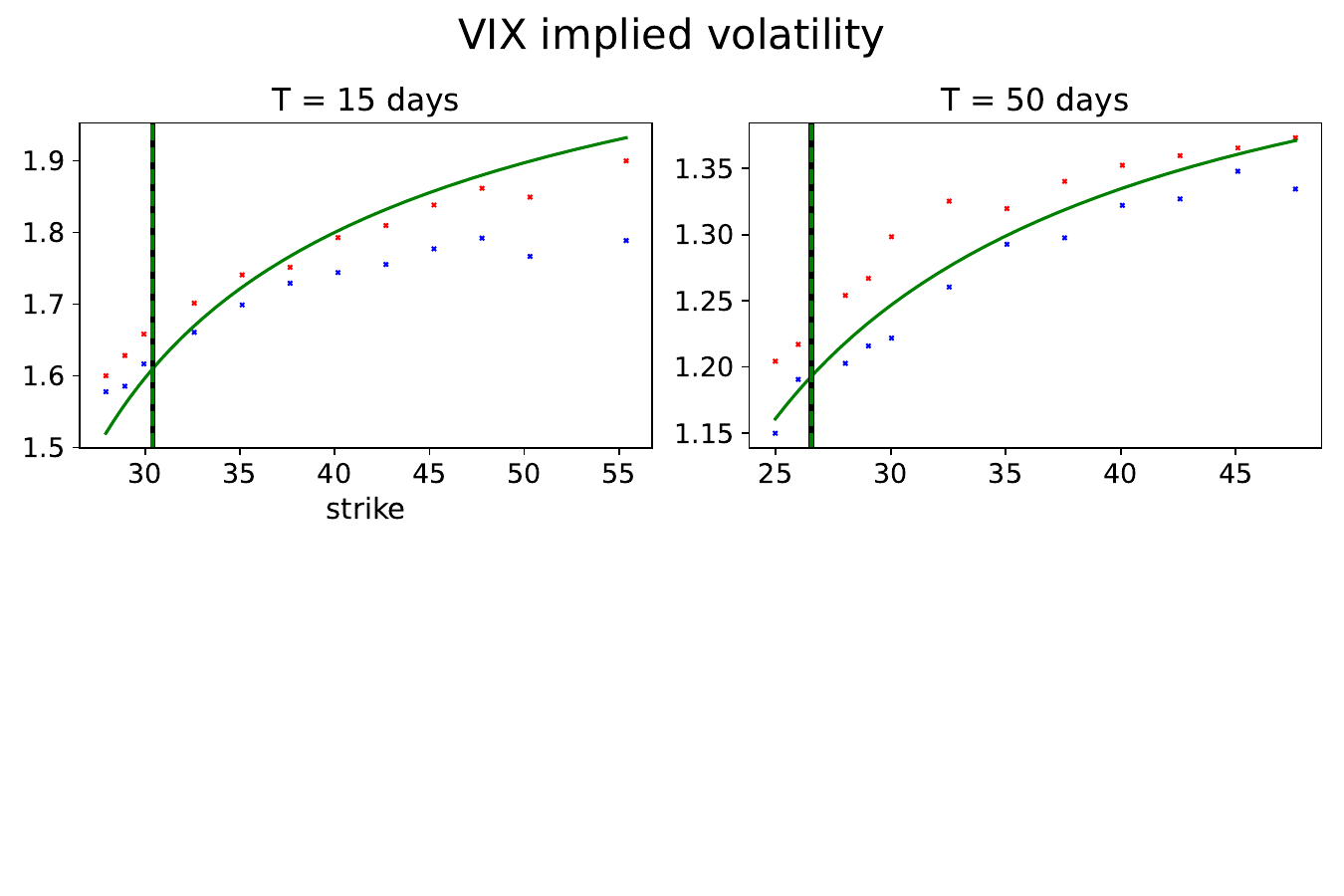}
    \caption{Exponential kernel $K^{exp}$: Joint calibration of SPX implied volatility, VIX implied volatility, and VIX Futures on 01 September 2015. The blue and red dots are bid/ask implied volatilities, and the green lines are model fit. The vertical bars represent the VIX Futures price. Calibrated parameters are: $\rho = -0.8171, H = 0.35267, (\alpha_0, \alpha_1, \alpha_3, \alpha_5) = (0.3021,0.8658,0.8522,0.0574)$.}
    \label{exp_99}
  \end{figure}
}

{
{\section{Monte Carlo variance reduction via Neural Networks }\label{C:nn_rss}

We performed an experiment to evaluate the Neural Networks' ability to reduce Monte Carlo noise reported in \cite{ferguson2018deeply} using the Volterra Stein-Stein model from \cite{abi2022characteristic} for which the prices of call options can be computed by Fourier inversion techniques without any Monte-Carlo simulation. Similar to the data generation process in  Section \ref{ss:data_gen}, we randomly sampled $100,000$ combinations of the rough Stein Stein model parameters for the following parameters: $\eta \sim \mathcal{U}[0.1,1], \rho \sim \mathcal{U}[-1,-0.3], X_0 \sim \mathcal{U}[0,0.2], \theta \sim \mathcal{U}[-0.1,0.1], H \sim \mathcal{U}[0.02,0.5], T \sim \mathcal{U}[0.01,0.5]$ with $\kappa=0$ and $\mathcal{U}$ denoting the uniform distribution. Next, we generated MC prices for each parameter combination using 80,000 simulations including antithetic variables, and 800 step size, split into 85\% training dataset and 15\% testing dataset. We applied functional quantization techniques as per Section \ref{S:SPXpricing} and trained our Neural Networks as per Section \ref{S:NN}.

Figures \ref{fig:nn_vs_fourier_maturity} and \ref{fig:mc_vs_fourier_maturity} indeed show that the prices produced by Quantization plus Neural Network are much closer to the actual Fourier prices compared to that of Monte Carlo on the testing dataset, despite being trained on Monte Carlo prices.

  \begin{figure}[H]
    \centering    \includegraphics[width=0.8\textwidth]{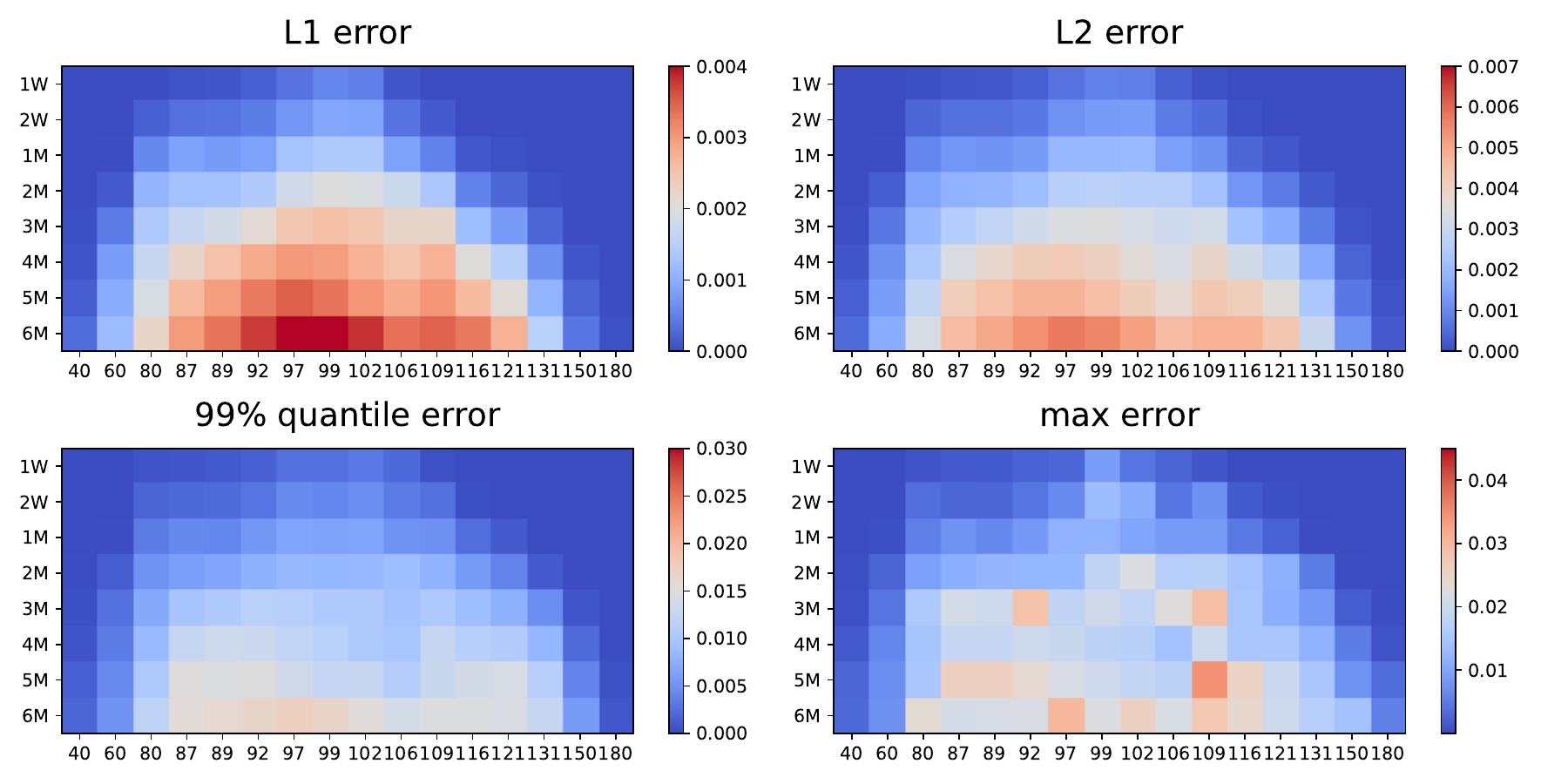}%
    \caption{Price error distribution of SPX call between trained Neural Networks' prices (trained using Monte Carlo data) and Fourier prices under the Volterra Stein-Stein model using the testing dataset. The horizontal axis is different strikes with $S_0 = 100$, the vertical axis is different maturities between 1 week to 6 months.}    \label{fig:nn_vs_fourier_maturity}
  \end{figure}
  
  \begin{figure}[H]
    \centering    \includegraphics[width=0.8\textwidth]{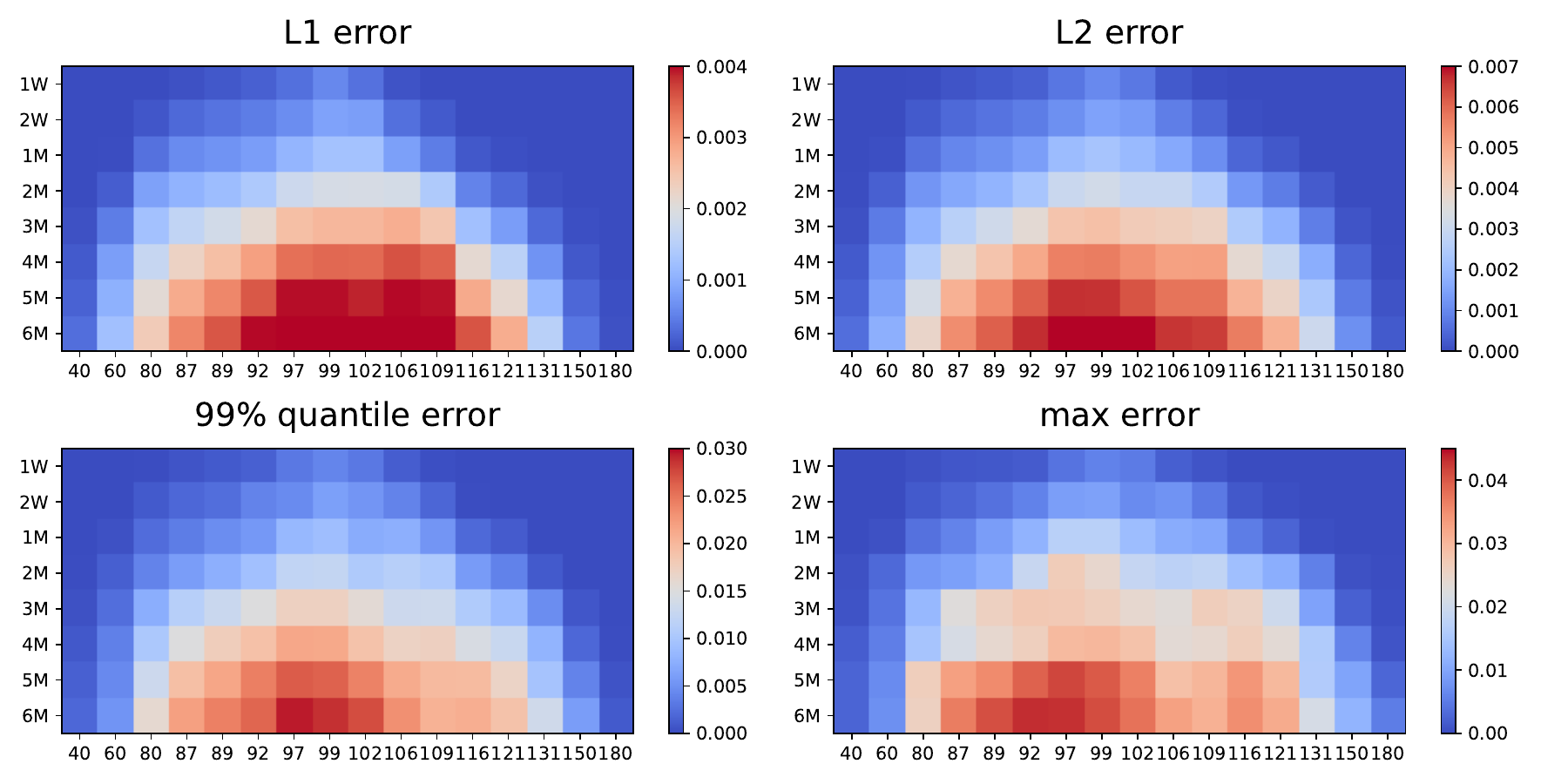}%
    \caption{Price error distribution of SPX call between MC prices and Fourier prices under the Volterra Stein-Stein model using the testing dataset. The horizontal axis is different strikes with $S_0 = 100$, the vertical axis is different maturities between 1 week to 6 months.}
    \label{fig:mc_vs_fourier_maturity}
  \end{figure}
}

}


\bibliographystyle{plainnat}
\bibliography{bibl.bib}

\begin{thebibliography}{66}
\providecommand{\natexlab}[1]{#1}
\providecommand{\url}[1]{\texttt{#1}}
\expandafter\ifx\csname urlstyle\endcsname\relax
  \providecommand{\doi}[1]{doi: #1}\else
  \providecommand{\doi}{doi: \begingroup \urlstyle{rm}\Url}\fi

\bibitem[Abi~Jaber(2019)]{abi2019lifting}
Eduardo Abi~Jaber.
\newblock Lifting the {H}eston model.
\newblock \emph{Quantitative Finance}, 19\penalty0 (12):\penalty0 1995--2013,
  2019.

\bibitem[Abi~Jaber(2022)]{abi2022characteristic}
Eduardo Abi~Jaber.
\newblock The characteristic function of {G}aussian stochastic volatility
  models: an analytic expression.
\newblock \emph{Finance and Stochastics}, 26\penalty0 (4):\penalty0 733--769,
  2022.

\bibitem[Abi~Jaber and De~Carvalho(2024)]{abi2023reconciling}
Eduardo Abi~Jaber and Nathan De~Carvalho.
\newblock Reconciling rough volatility with jumps.
\newblock \emph{SIAM Journal on Financial Mathematics}, 15\penalty0
  (3):\penalty0 785--823, 2024.

\bibitem[Abi~Jaber and El~Euch(2019{\natexlab{a}})]{abi2019markovian}
Eduardo Abi~Jaber and Omar El~Euch.
\newblock Markovian structure of the {V}olterra {H}eston model.
\newblock \emph{Statistics \& Probability Letters}, 149:\penalty0 63--72,
  2019{\natexlab{a}}.

\bibitem[Abi~Jaber and El~Euch(2019{\natexlab{b}})]{abi2019multifactor}
Eduardo Abi~Jaber and Omar El~Euch.
\newblock Multifactor approximation of rough volatility models.
\newblock \emph{SIAM Journal on Financial Mathematics}, 10\penalty0
  (2):\penalty0 309--349, 2019{\natexlab{b}}.

\bibitem[Abi~Jaber et~al.(2019)Abi~Jaber, Larsson, and Pulido]{abi2019affine}
Eduardo Abi~Jaber, Martin Larsson, and Sergio Pulido.
\newblock Affine {V}olterra processes.
\newblock \emph{The Annals of Applied Probability}, 29\penalty0 (5):\penalty0
  3155--3200, 2019.

\bibitem[Abi~Jaber et~al.(2021)Abi~Jaber, Miller, and Pham]{abi2021linear}
Eduardo Abi~Jaber, Enzo Miller, and Huy{\^e}n Pham.
\newblock Linear-quadratic control for a class of stochastic {V}olterra
  equations: solvability and approximation.
\newblock \emph{The Annals of Applied Probability}, 31\penalty0 (5):\penalty0
  2244--2274, 2021.

\bibitem[Abi~Jaber et~al.(2023)Abi~Jaber, Illand, and Li]{jaber2022quintic}
Eduardo Abi~Jaber, Camille Illand, and Shaun~(Xiaoyuan) Li.
\newblock The quintic {O}rnstein-{U}hlenbeck volatility model that jointly
  calibrates \text{SPX} \& {VIX} smiles.
\newblock \emph{Risk Magazine, Cutting Edge Section}, 2023.

\bibitem[Alessandro et~al.(2024)Alessandro, Sergio, Scotti,
  et~al.]{bondi2022rough}
Bondi Alessandro, Pulido Sergio, Simone Scotti, et~al.
\newblock The rough {H}awkes {H}eston stochastic volatility model.
\newblock \emph{Mathematical Finance}, 2024.

\bibitem[Alfonsi and Kebaier(2024)]{alfonsi2021approximation}
Aur{\'e}lien Alfonsi and Ahmed Kebaier.
\newblock Approximation of {S}tochastic {V}olterra {E}quations with kernels of
  completely monotone type.
\newblock \emph{Mathematics of Computation}, 93\penalty0 (346):\penalty0
  643--677, 2024.

\bibitem[Baldeaux and Badran(2014)]{baldeaux2014consistent}
Jan Baldeaux and Alexander Badran.
\newblock Consistent modelling of {VIX} and equity derivatives using a 3/2 plus
  jumps model.
\newblock \emph{Applied Mathematical Finance}, 21\penalty0 (4):\penalty0
  299--312, 2014.

\bibitem[Barndorff-Nielsen and Shephard(2006)]{barndorff2006econometrics}
Ole~E Barndorff-Nielsen and Neil Shephard.
\newblock Econometrics of testing for jumps in financial economics using
  bipower variation.
\newblock \emph{Journal of financial Econometrics}, 4\penalty0 (1):\penalty0
  1--30, 2006.

\bibitem[Bayer and Breneis(2023)]{bayer2021markovian}
Christian Bayer and Simon Breneis.
\newblock Markovian approximations of stochastic {V}olterra equations with the
  fractional kernel.
\newblock \emph{Quantitative Finance}, 23\penalty0 (1):\penalty0 53--70, 2023.

\bibitem[Bayer et~al.(2016)Bayer, Friz, and Gatheral]{bayer2016pricing}
Christian Bayer, Peter Friz, and Jim Gatheral.
\newblock Pricing under rough volatility.
\newblock \emph{Quantitative Finance}, 16\penalty0 (6):\penalty0 887--904,
  2016.

\bibitem[Bayer et~al.(2020)Bayer, Friz, Gassiat, Martin, and
  Stemper]{bayer2020regularity}
Christian Bayer, Peter~K Friz, Paul Gassiat, Jorg Martin, and Benjamin Stemper.
\newblock A regularity structure for rough volatility.
\newblock \emph{Mathematical Finance}, 30\penalty0 (3):\penalty0 782--832,
  2020.

\bibitem[Bayer et~al.(2021)Bayer, Harang, and Pigato]{bayer2021log}
Christian Bayer, Fabian~A Harang, and Paolo Pigato.
\newblock Log-modulated rough stochastic volatility models.
\newblock \emph{SIAM Journal on Financial Mathematics}, 12\penalty0
  (3):\penalty0 1257--1284, 2021.

\bibitem[Bennedsen et~al.(2021)Bennedsen, Lunde, and
  Pakkanen]{bennedsen2016decoupling}
Mikkel Bennedsen, Asger Lunde, and Mikko~S Pakkanen.
\newblock {Decoupling the Short- and Long-Term Behavior of Stochastic
  Volatility}.
\newblock \emph{Journal of Financial Econometrics}, 01 2021.

\bibitem[Bergomi(2015)]{bergomi2015stochastic}
Lorenzo Bergomi.
\newblock \emph{Stochastic volatility modeling}.
\newblock CRC press, 2015.

\bibitem[Bergomi and Guyon(2012)]{bergomi2012stochastic}
Lorenzo Bergomi and Julien Guyon.
\newblock Stochastic volatility's orderly smiles.
\newblock \emph{Risk}, 25\penalty0 (5):\penalty0 60, 2012.

\bibitem[Bonesini et~al.(2023)Bonesini, Callegaro, and
  Jacquier]{bonesini2021functional}
Ofelia Bonesini, Giorgia Callegaro, and Antoine Jacquier.
\newblock Functional quantization of rough volatility and applications to the
  {VIX}.
\newblock \emph{Quantitative Finance}, 23\penalty0 (12):\penalty0 1769--1792,
  2023.

\bibitem[Carmona and Coutin(1998)]{carmona1998fractional}
Philippe Carmona and Laure Coutin.
\newblock Fractional {B}rownian motion and the {M}arkov property.
\newblock \emph{Electronic Communications in Probability}, 3:\penalty0 95--107,
  1998.

\bibitem[Carr and Madan(2001)]{carr2001towards}
Peter Carr and Dilip Madan.
\newblock Towards a theory of volatility trading.
\newblock \emph{Option Pricing, Interest Rates and Risk Management, Handbooks
  in Mathematical Finance}, 22\penalty0 (7):\penalty0 458--476, 2001.

\bibitem[CBOE()]{cboe}
CBOE.
\newblock Volatility index methodology: Cboe volatility index.
\newblock \emph{White paper}.
\newblock URL
  \url{https://cdn.cboe.com/api/global/us_indices/governance/Volatility_Index_Methodology_Cboe_Volatility_Index.pdf}.

\bibitem[Cont and Kokholm(2013)]{cont2013consistent}
Rama Cont and Thomas Kokholm.
\newblock A consistent pricing model for index options and volatility
  derivatives.
\newblock \emph{Mathematical Finance: An International Journal of Mathematics,
  Statistics and Financial Economics}, 23\penalty0 (2):\penalty0 248--274,
  2013.

\bibitem[Cuchiero and Teichmann(2020)]{cuchiero2020generalized}
Christa Cuchiero and Josef Teichmann.
\newblock Generalized feller processes and markovian lifts of stochastic
  volterra processes: the affine case.
\newblock \emph{Journal of evolution equations}, 20\penalty0 (4):\penalty0
  1301--1348, 2020.

\bibitem[Cuchiero et~al.(2023)Cuchiero, Gazzani, M{\"o}ller, and
  Svaluto-Ferro]{cuchiero2023joint}
Christa Cuchiero, Guido Gazzani, Janka M{\"o}ller, and Sara Svaluto-Ferro.
\newblock Joint calibration to {SPX} and {VIX} options with signature-based
  models.
\newblock \emph{Mathematical Finance}, 2023.

\bibitem[Dereich and Scheutzow(2006)]{dereich2006high}
Steffen Dereich and Michael Scheutzow.
\newblock High resolution quantization and entropy coding for fractional
  brownian motion.
\newblock \emph{Electronic Journal of Probability}, 11:\penalty0 700--722,
  2006.

\bibitem[Ding et~al.(1993)Ding, Granger, and Engle]{ding1993long}
Zhuanxin Ding, Clive~WJ Granger, and Robert~F Engle.
\newblock A long memory property of stock market returns and a new model.
\newblock \emph{Journal of empirical finance}, 1\penalty0 (1):\penalty0
  83--106, 1993.

\bibitem[El~Euch and Rosenbaum(2019)]{el2019characteristic}
Omar El~Euch and Mathieu Rosenbaum.
\newblock The characteristic function of rough {H}eston models.
\newblock \emph{Mathematical Finance}, 29\penalty0 (1):\penalty0 3--38, 2019.

\bibitem[Ferguson and Green(2018)]{ferguson2018deeply}
Ryan Ferguson and Andrew Green.
\newblock Deeply learning derivatives.
\newblock \emph{arXiv preprint arXiv:1809.02233}, 2018.

\bibitem[Forde et~al.(2020)Forde, Fukasawa, Gerhold, and Smith]{forde2020rough}
Martin Forde, Masaaki Fukasawa, Stefan Gerhold, and Benjamin Smith.
\newblock The rough bergomi model as h→ 0--skew flattening/blow up and
  non-gaussian rough volatility.
\newblock \emph{preprint}, 2020.

\bibitem[Fouque et~al.(2000)Fouque, Papanicolaou, and
  Sircar]{fouque2000derivatives}
Jean-Pierre Fouque, George Papanicolaou, and K~Ronnie Sircar.
\newblock \emph{Derivatives in financial markets with stochastic volatility}.
\newblock Cambridge University Press, 2000.

\bibitem[Fouque et~al.(2003)Fouque, Papanicolaou, Sircar, and
  Solna]{fouque2003multiscale}
Jean-Pierre Fouque, George Papanicolaou, Ronnie Sircar, and Knut Solna.
\newblock Multiscale stochastic volatility asymptotics.
\newblock \emph{Multiscale Modeling \& Simulation}, 2\penalty0 (1):\penalty0
  22--42, 2003.

\bibitem[Gatheral(2008)]{gatheral2008consistent}
Jim Gatheral.
\newblock Consistent modeling of {SPX} and {VIX} options.
\newblock In \emph{Bachelier congress}, volume~37, pages 39--51, 2008.

\bibitem[Gatheral et~al.(2018)Gatheral, Jaisson, and
  Rosenbaum]{gatheral2018volatility}
Jim Gatheral, Thibault Jaisson, and Mathieu Rosenbaum.
\newblock Volatility is rough.
\newblock \emph{Quantitative finance}, 18\penalty0 (6):\penalty0 933--949,
  2018.

\bibitem[Gatheral et~al.(2020)Gatheral, Jusselin, and
  Rosenbaum]{gatheral2020quadratic}
Jim Gatheral, Paul Jusselin, and Mathieu Rosenbaum.
\newblock The quadratic rough {H}eston model and the joint {S\&P} 500/{VIX}
  smile calibration problem.
\newblock \emph{Risk Magazine, Cutting Edge Section}, 2020.

\bibitem[Gersho and Gray(2012)]{gersho2012vector}
Allen Gersho and Robert~M Gray.
\newblock \emph{Vector quantization and signal compression}, volume 159.
\newblock Springer Science \& Business Media, 2012.

\bibitem[Graf and Luschgy(2000)]{graf2000foundations}
Siegfried Graf and Harald Luschgy.
\newblock Foundations of quantization for probability distributions.
\newblock \emph{Foundations Of Quantization For Probability Distributions},
  1730:\penalty0 1--+, 2000.

\bibitem[Guo et~al.(2022)Guo, Loeper, Obloj, and Wang]{guo2022joint}
Ivan Guo, Gregoire Loeper, Jan Obloj, and Shiyi Wang.
\newblock Joint modeling and calibration of {SPX} and {VIX} by {O}ptimal
  {T}ransport.
\newblock \emph{SIAM Journal on Financial Mathematics}, 13\penalty0
  (1):\penalty0 1--31, 2022.

\bibitem[Guyon(2020)]{guyon2020joint}
Julien Guyon.
\newblock The joint {S\&P} 500/{VIX} smile calibration puzzle solved.
\newblock \emph{Risk, April}, 2020.

\bibitem[Guyon(2022)]{guyon2022dispersion}
Julien Guyon.
\newblock Dispersion-constrained {M}artingale {S}chr{\"o}dinger bridges: Joint
  entropic calibration of stochastic volatility models to {S\&P} 500 and {VIX}
  smiles.
\newblock \emph{Available at SSRN 4165057}, 2022.

\bibitem[Guyon and Lekeufack(2023)]{guyon2022volatility}
Julien Guyon and Jordan Lekeufack.
\newblock Volatility is (mostly) path-dependent.
\newblock \emph{Quantitative Finance}, 23\penalty0 (9):\penalty0 1221--1258,
  2023.

\bibitem[Guyon and Mustapha(2023)]{guyonneural}
Julien Guyon and Scander Mustapha.
\newblock Neural joint {S\&P} 500/{VIX} smile calibration.
\newblock \emph{Risk Magazine, Cutting Edge Section}, 2023.

\bibitem[Hairer(2014)]{hairer2014theory}
M~Hairer.
\newblock A theory of regularity structures.
\newblock \emph{Inventiones Mathematicae}, 198\penalty0 (2):\penalty0 269,
  2014.

\bibitem[Harms(2019)]{harms2019strong}
Philipp Harms.
\newblock Strong convergence rates for markovian representations of fractional
  brownian motion, 2019.

\bibitem[Harms and Stefanovits(2019)]{harms2019affine}
Philipp Harms and David Stefanovits.
\newblock Affine representations of fractional processes with applications in
  mathematical finance.
\newblock \emph{Stochastic Processes and their Applications}, 129\penalty0
  (4):\penalty0 1185--1228, 2019.

\bibitem[Horvath et~al.(2021)Horvath, Muguruza, and Tomas]{horvath2021deep}
Blanka Horvath, Aitor Muguruza, and Mehdi Tomas.
\newblock Deep learning volatility: a deep neural network perspective on
  pricing and calibration in (rough) volatility models.
\newblock \emph{Quantitative Finance}, 21\penalty0 (1):\penalty0 11--27, 2021.

\bibitem[Isserlis(1918)]{isserlis1918formula}
Leon Isserlis.
\newblock On a formula for the product-moment coefficient of any order of a
  normal frequency distribution in any number of variables.
\newblock \emph{Biometrika}, 12\penalty0 (1/2):\penalty0 134--139, 1918.

\bibitem[Jaisson and Rosenbaum(2016)]{jaisson2016rough}
Thibault Jaisson and Mathieu Rosenbaum.
\newblock Rough fractional diffusions as scaling limits of nearly unstable
  heavy tailed {H}awkes processes.
\newblock \emph{The Annals of Applied Probability}, 26\penalty0 (5):\penalty0
  2860--2882, 2016.
\newblock ISSN 10505164.

\bibitem[Kieffer(1982)]{kieffer1982exponential}
J~Kieffer.
\newblock Exponential rate of convergence for {L}loyd's method {I}.
\newblock \emph{IEEE Transactions on Information Theory}, 28\penalty0
  (2):\penalty0 205--210, 1982.

\bibitem[Kokholm and Stisen(2015)]{kokholm2015joint}
Thomas Kokholm and Martin Stisen.
\newblock Joint pricing of vix and spx options with stochastic volatility and
  jump models.
\newblock \emph{The Journal of Risk Finance}, 2015.

\bibitem[McCrickerd(2019)]{mccrickerd2019foundations}
Ryan McCrickerd.
\newblock Foundations of a pathwise volatility framework with explicit fast
  reversion limits.
\newblock \emph{arXiv preprint arXiv:1902.01673}, 2019.

\bibitem[McCrickerd and Pakkanen(2018)]{mccrickerd2018turbocharging}
Ryan McCrickerd and Mikko~S Pakkanen.
\newblock Turbocharging monte carlo pricing for the rough bergomi model.
\newblock \emph{Quantitative Finance}, 18\penalty0 (11):\penalty0 1877--1886,
  2018.

\bibitem[Mechkov(2015)]{mechkov2015fast}
Serguei Mechkov.
\newblock Fast-reversion limit of the {Heston} model.
\newblock \emph{Available at SSRN 2418631}, 2015.

\bibitem[Pages(2008)]{pages2008quadratic}
Gilles Pages.
\newblock Quadratic optimal functional quantization of stochastic processes and
  numerical applications.
\newblock In \emph{Monte Carlo and Quasi-Monte Carlo Methods 2006}, pages
  101--142. Springer, 2008.

\bibitem[Pag{\`e}s and Printems(2003)]{pages2003optimal}
Gilles Pag{\`e}s and Jacques Printems.
\newblock Optimal quadratic quantization for numerics: the gaussian case.
\newblock \emph{Monte Carlo Methods Appl.}, 9\penalty0 (2):\penalty0 135--165,
  2003.

\bibitem[Pag{\`e}s and Printems(2005)]{pages2005functional}
Gilles Pag{\`e}s and Jacques Printems.
\newblock Functional quantization for numerics with an application to option
  pricing.
\newblock \emph{Monte Carlo Methods Appl.}, 11\penalty0 (4):\penalty0 407--446,
  2005.

\bibitem[Pagès and Printems()]{quantization_website}
Gilles Pagès and Jacques Printems.
\newblock http://www.quantize.maths-fi.com, 2005.
\newblock \emph{"Website devoted to optimal quantization",}.

\bibitem[Pagès and Sellami(2011)]{sellami2011convergence}
Gilles Pagès and Afef Sellami.
\newblock Convergence of multi-dimensional quantized sde’s.
\newblock In \emph{S{\'e}minaire de probabilit{\'e}s XLIII}, pages 269--307.
  Springer, 2011.

\bibitem[Papanicolaou and Sircar(2014)]{papanicolaou2014regime}
Andrew Papanicolaou and Ronnie Sircar.
\newblock A regime-switching {H}eston model for {VIX} and {S\&P} 500 implied
  volatilities.
\newblock \emph{Quantitative Finance}, 14\penalty0 (10):\penalty0 1811--1827,
  2014.

\bibitem[Pham and Printems(2004)]{pham2004optimal}
Huy{\^e}n Pham and Jacques Printems.
\newblock Optimal quantization methods and applications to numerical problems
  in finance.
\newblock In \emph{Handbook of computational and numerical methods in finance},
  pages 253--297. Springer, 2004.

\bibitem[Renault and Touzi(1996)]{renault1996option}
Eric Renault and Nizar Touzi.
\newblock Option hedging and implied volatilities in a stochastic volatility
  model 1.
\newblock \emph{Mathematical Finance}, 6\penalty0 (3):\penalty0 279--302, 1996.

\bibitem[R{\o}mer(2022)]{romer2022empirical}
Sigurd~Emil R{\o}mer.
\newblock Empirical analysis of rough and classical stochastic volatility
  models to the spx and vix markets.
\newblock \emph{Quantitative Finance}, 22\penalty0 (10):\penalty0 1805--1838,
  2022.

\bibitem[Rosenbaum and Zhang(2022)]{rosenbaum2021deep}
Mathieu Rosenbaum and Jianfei Zhang.
\newblock Deep calibration of the quadratic rough {Heston} model.
\newblock \emph{Risk Magazine, Cutting Edge Section}, 2022.

\bibitem[Wong and Zakai(1965)]{wong1965convergence}
Eugene Wong and Moshe Zakai.
\newblock On the convergence of ordinary integrals to stochastic integrals.
\newblock \emph{The Annals of Mathematical Statistics}, 36\penalty0
  (5):\penalty0 1560--1564, 1965.

\bibitem[Zhu et~al.(2021)Zhu, Loeper, Chen, and Langren{\'e}]{zhu2021markovian}
Qinwen Zhu, Gr{\'e}goire Loeper, Wen Chen, and Nicolas Langren{\'e}.
\newblock Markovian approximation of the rough bergomi model for monte carlo
  option pricing.
\newblock \emph{Mathematics}, 9\penalty0 (5):\penalty0 528, 2021.

\end{thebibliography}

\end{document}